\newsavebox{\bigimage}
\DeclareSIUnit{\angstrom}{\textup{\AA}}
\newtheorem{remark}{Remark}
\newtheorem{lemma}{Lemma}
\def\L{{\mathbb L}}
\def\bbI{{\mathbb I}}
\def\bbV{{\mathbb V}}
\def\bbW{{\mathbb W}}
\def\1{{\mathds{1}}}
\def\rd{{\rm d}}
\def\re{{\rm e}}
\def\ri{{\rm i}}
\def\ba{{\mathbf a}}
\def\bb{{\mathbf b}}
\def\be{{\mathbf e}}
\def\bk{{\mathbf k}}
\def\bq{{\mathbf q}}
\def\br{{\mathbf r}}
\def\bx{{\mathbf x}}
\def\by{{\mathbf y}}
\def\bX{{\mathbf X}}
\def\bR{{\mathbf R}}
\def\bG{{\mathbf G}}
\def\bK{{\mathbf K}}
\def\bR{{\mathbf R}}
\def\bmm{{\mathbf m}}
\def\eps{{\varepsilon}}
\def\b0{{\bold 0}}
\def\per{{\rm{per}}}
\let\C\relax\newcommand{\C}{\mathbb{C}}\newcommand{\Z}{\mathbb{Z}}
\newcommand{\R}{\mathbb{R}}
\newcommand\cC{\mathcal{C}}\newcommand\cH{\mathcal{H}}\newcommand\cM{\mathcal{M}}\newcommand\cP{\mathcal{P}}\newcommand\cR{\mathcal{R}}\newcommand\cS{\mathcal{S}}\newcommand\cV{\mathcal{V}}\newcommand\cX{\mathcal{X}}
\DeclareMathOperator{\Ker}{Ker}
\def\d{{\rm d}}
\renewcommand{\ge}{\geqslant}\renewcommand{\le}{\leqslant}
\newcommand{\pa}[1]{\left( #1 \right)} 
\newcommand{\ab}[1]{\left|#1\right|} 
\newcommand{\nor}[2]{ \left| \! \left| #1 \right| \! \right|_{#2} } 
\def\eps{\varepsilon} 
\newcommand{\na}{\nabla} 
\newcommand{\f}[2]{\frac{#1}{#2}} 
\newcommand{\ind}[1]{_{\textup{#1}}} 
\newcommand{\mat}[1]{\begin{pmatrix} #1 \end{pmatrix}} 
\newcommand{\dd}{\tfrac{d}{2}}
\def\bdelta{{\boldsymbol\delta}}
\def\bsigma{{\boldsymbol\sigma}}
\def\lAngle{\langle\!\langle}
\def\rAngle{\rangle\!\rangle}
\newcommand{\ppa}[1]{\pa{\!\pa{#1}\!}_d}
\def\Lat{\mathbb{L}}
\def\RLat{{\mathbb{L}^*}}
\def\WS{{\Omega}}
\def\fR{{\mathfrak R}}
\def\fC{{\mathfrak C}}
\newcommand{\U}{U_{d,\theta}}
\newcommand{\ept}{\varepsilon_\theta}
\newcommand{\tbm}{\bm{V}}
\newcommand{\wAAd}{w\ind{AA}^{d=6.45}}
\begin{document}

\title{A simple derivation of moiré-scale continuous models \\ for twisted bilayer graphene}

\author{Éric Cancès}
\author{Louis Garrigue}
\affiliation{CERMICS, \'Ecole des Ponts and Inria Paris, 6 and 8 av. Pascal, 77455 Marne-la-Vallée, France.  (eric.cances@enpc.fr, louis.garrigue@enpc.fr)}

\author{David Gontier}
\affiliation{CEREMADE, University of Paris-Dauphine, PSL University, 75016 Paris, France \& ENS/PSL University, Département de Mathématiques et Applications, F-75005, Paris, France. (gontier@ceremade.dauphine.fr)}

\begin{abstract} We provide a formal derivation of a reduced model for twisted bilayer graphene (TBG) from Density Functional Theory. Our derivation is based on a variational approximation of the TBG Kohn-Sham Hamiltonian and asymptotic limit techniques. In contrast with other approaches, it does not require the introduction of an intermediate tight-binding model.  The so-obtained model is similar to that of the Bistritzer-MacDonald (BM) model but contains additional terms. Its parameters can be easily computed from Kohn-Sham calculations on single-layer graphene and untwisted bilayer graphene with different stackings. It allows one in particular to estimate the parameters $w_{\rm AA}$ and $w_{\rm AB}$ of the BM model from first-principles. The resulting numerical values, namely $w_{\rm AA}= w_{\rm AB} \simeq 126$ meV for the experimental interlayer mean distance are in good agreement with the empirical values $w_{\rm AA}= w_{\rm AB}=110$ meV obtained by fitting to experimental data. We also show that if the BM parameters are set to $w_{\rm AA}= w_{\rm AB} \simeq 126$ meV, the BM model is an accurate approximation of our reduced model. 
\end{abstract}

\maketitle

Moiré materials~\cite{Andrei21,Carr20} have attracted a lot of interest in condensed matter physics since, notably, the experimental discovery of Mott insulating and nonconventional superconducting phases \cite{Cao18} in twisted bilayer graphene (TBG) for specific small twist angles~$\theta$. In particular, the experiments reported in \cite{Cao18} were done with $\theta  \simeq 1.1^\circ$. For such small twist angles, the moiré pattern is quite large and typically contains of the order of 11,000 carbon atoms. In addition, the system is aperiodic (incommensurate), except for a countable set of values of $\theta$. All this makes brute force first-principle calculations extremely challenging. 

Most theoretical investigations on TBG rely on continuous models~\cite{Mele10,BM11_2,BM11,Lopes12} such as the Bistritzer-MacDonald (BM) model~\cite{BM11}, an effective continuous periodic model describing low-energy excitations in TBG close to half-filling. The BM Hamiltonian is a self-adjoint operator on $L^2(\R^2;\C^4)$ given by
\begin{equation} \label{eq:BM_Hamiltonian}
    H^{\rm BM}_\theta = \left( \begin{array}{cc} v_{\rm F} \bm\sigma_{-\theta/2} \cdot (-i \nabla) & \bm V(k_\theta \bx) \\ \bm V(k_\theta \bx)^* & v_{\rm F} \bm\sigma_{\theta/2} \cdot (-i \nabla) \end{array} \right),
\end{equation}
where $v_{\rm F}$ is the Fermi velocity in single-layer graphene, $\bm\sigma_{\pm\theta/2} = e^{\mp i \frac \theta 4 \sigma_3}(\sigma_1,\sigma_2) e^{\pm i \frac \theta 4 \sigma_3}$ are rotated Pauli matrices, and $k_\theta = \frac{8\pi}{3a_0} \sin(\theta/2)$, with $a_0$ the single-layer graphene lattice constant. The function $\bm V : \R^2 \to \C^2$ is quasiperiodic at the so-called moiré scale (see Section~\ref{sec:comparison_BM} for details) and depends on two empirical parameters $w_{\rm AA}$ and $w_{\rm AB}$ describing interlayer coupling in AA and AB stacking respectively. A rigorous mathematical derivation of the BM model from a tight-binding Hamiltonian whose parameters satisfy suitable scaling laws,  was recently proposed in \cite{Watson22}.
A simplified chiral BM model, obtained by setting $w_{\rm AA}=0$, was notably used in~\cite{TKV19,Becker21,WL21}  to prove the existence of perfectly flat bands at the Fermi level for a sequence of so-called ``magic'' angles. From a physical point of view, the existence of partially occupied almost flat bands in the single-particle picture may reflect the presence of localized strongly correlated electrons, and provide a possible explanation of the experimentally observed superconducting behavior~\cite{Po18}. An alternative approach to using effective models at the moiré scale is to develop atomic-scale models and efficient computational methods adapted to incommensurate 2D heterostructures~\cite{TdLMM12,JM14,Carr17,Carr18,Le18}.

This article is concerned with the derivation of BM-like effective models directly from Density Functional Theory (DFT). In contrast with other approaches~\cite{BM11,Moon13,Carr19,Fang19,Koshino20,Bernevig21,Watson22}, our derivation does not involve an intermediate tight-binding model and is based on real-space (not momentum-space) computations.


\section{Approximation of the Kohn-Sham Hamiltonian for TBG}
\label{sec:approx_KS}

\subsection{Single-layer graphene}

We denote the position variable by $\br = (\bx,z) \in \R^3$ where $\bx=(x_1,x_2) \in \R^2$ and $z \in \R$ are respectively the longitudinal (in-plane) and transverse (out-of-plane) position variables. Let $V$ be the Kohn-Sham potential for a single-layer graphene in the horizontal plane $z=0$. The space group of graphene is ${\rm Dg80} = {\rm D}_{6h} \ltimes \Lat$ (p6/mmm), so $V$ has the honeycomb symmetry and is $\L$-periodic, where $\Lat =  \Z  \ba_1 + \Z  \ba_2$ and
\[
\ba_1 = a_0 \begin{pmatrix}
    1/2 \\ - \sqrt{3}/2
\end{pmatrix}
\quad \text{and} \quad 
\ba_2 = a_0 \begin{pmatrix}
    1/2 \\ \sqrt{3}/2
\end{pmatrix}.
\]
Here, $a_0=\sqrt 3 a$ is the graphene lattice constant ($a \simeq 0.142$ nm $\simeq 2.68$ bohr is the carbon-carbon nearest neighbor distance). We set $\Omega := \R^2 / \Lat$ the Wigner-Seitz cell.

 \begin{figure}
     \centering

    \includegraphics[width=0.25\textwidth,trim={0cm 1.5cm 0cm 1.5cm},clip]{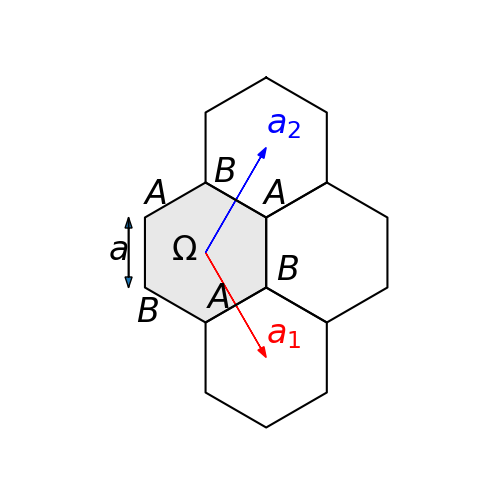}
    \includegraphics[width=0.22\textwidth,trim={0cm 2cm 0cm 2cm},clip]{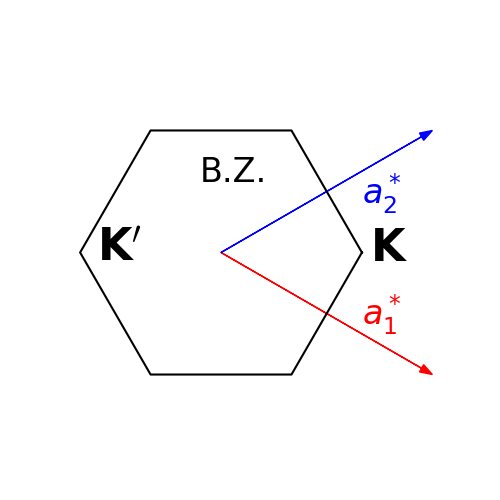} 
   \caption{Single-layer graphene. Above: atomic positions of the carbon atoms (A and B sublattices) in the physical space, and lattice vectors $\ba_1$ and $\ba_2$. Below: reciprocal lattice vectors $\ba_1^*$ and $\ba_2^*$, and first Brillouin zone in momentum space; the positions of the $\bK$ and $\bK'$ Dirac points are also indicated.}
    \label{fig:graphene}
 \end{figure}

The single-layer graphene Kohn-Sham Hamiltonian reads
\begin{equation}\label{eq:monolayer_Hamiltonian}
    H^{(1)} := - \frac12 \Delta + V.
\end{equation}
We denote by $H_\bk^{(1)}$ its Bloch fibers. Recall that these operators have domains representing the $\bk$--quasiperiodic boundary conditions $\Phi(\bx - \bR, z) = e^{i\bk \cdot \bR} \Phi(\bx, z)$ for all $\bR \in \Lat$. The map $\bk \mapsto H_\bk^{(1)}$ is $\RLat$-periodic, where $\RLat$ is the dual lattice of $\Lat$. Explicitly, $\RLat = \ba_1^* \Z + \ba_2^* \Z $ with
\[
\ba_1^* = \sqrt{3} k_D  \begin{pmatrix}
    \sqrt{3}/2 \\  -1/2
\end{pmatrix}
\quad \text{and} \quad 
\ba_2^* = \sqrt{3} k_D \begin{pmatrix}
    \sqrt{3}/2 \\ 1/2
\end{pmatrix},
\]
where $k_D := \frac{4\pi}{3 \, a_0}$. At the special Dirac point 
\begin{equation} \label{eq:def:K}
    \bK := \tfrac13 (\ba_1^* + \ba_2^*) = k_D (1, 0)^T,
\end{equation}
$H^{(1)}_\bK$ has an eigenvalue of multiplicity $2$ at the Fermi level $\mu_F$. We denote by $\Phi_1$ and $\Phi_2$ two corresponding eigenvectors, oriented so that $ \langle \Phi_1, (- \ri\nabla_\bx) \Phi_1 \rangle = \langle \Phi_2, (- \ri\nabla_\bx) \Phi_2 \rangle = (0,0)^T$, and
\begin{align} \label{eq:pre_Dirac}
    \langle \Phi_1,  (-\ri \nabla_\bx) \Phi_2 \rangle 
    = v_{\rm F}  \begin{pmatrix} 1 \\ -\ri \end{pmatrix},
\end{align}
with $v_F > 0$ the Fermi velocity. We refer to the Supplementary Material~\ref{supp:graphene} for an explanation of how to achieve such an orientation. Finally, we denote by $u_1$ and $u_2$ the periodic parts of $\Phi_1$ and $\Phi_2$, {\em i.e.} $u_j(\br) := \Phi_j(\br) \re^{ - \ri \bK \cdot \bx}$.

\subsection{Kohn-Sham model for TBG}

We now consider two parallel layers of graphene, separated by a distance $d > 0$, and with a twist angle $\theta$ between the two. More precisely, we first place the top layer in the plane $z=\dd$ and the bottom layer in the plane $z=-\dd$ in AA stacking, and then rotate counterclockwise the top layer by $-\theta/2$ and the bottom layer by $\theta/2$ around the $z$-axis, placing the origin at the center of a carbon hexagon. We set 
\begin{equation*}
c_\theta := \cos \frac{\theta}{2} 
\quad \text{and} \quad 
\ept :=  2 \sin \frac{\theta}{2}.
\end{equation*}
Note that $\ept \sim \theta$ in the small-angle limit. We denote by $R_\theta$ the 2D rotation matrix of angle $\theta \in \R$, specifically
\[
R_\theta := c_\theta \; \bbI_2 - \ept J, \quad \text{with} \quad J := \begin{pmatrix}
    0 & 1 \\ -1 & 0
\end{pmatrix},
\] 
and we introduce the unitary operator
\begin{equation*}
    (\U f)(\bx, z) := f(R_{-\theta/2}^* \bx, z) = f \left( c_\theta \bx - \tfrac12 \ept J \bx, z - \dd \right). 
\end{equation*}
The inverse of $\U$ is $\U^{-1}= \U^* = U_{-d,-\theta}$. 

\medskip

For twist angles $\theta$ giving rise to a periodic structure at the moiré scale,
the TBG Kohn-Sham potential is a well-defined moiré-periodic function. It is unclear how to define a mean-field potential for incommensurate twist angles. This problem actually occurs for all infinite aperiodic systems (see~\cite{CLL13} for a mathematical definition of a mean-field model in an ergodic setting). Here, we assume that this potential exists, and can be approximated using the procedure in~\cite{TSKCLPC16}. We consider an approximate Kohn-Sham potential for the TBG of the form
\begin{equation*}
     V^{(2)}_{d, \theta}(\bx, z) :=  (\U V)(\bx, z) + (\U^{-1} V)(\bx, z) + V_{{\rm int}, d}(z).
\end{equation*}
Each component $\U^{\pm 1} V$ represents a layer of graphene shifted by $\pm\dd \be_z$, and twisted by an angle $\mp \theta/2$. The last term $V_{{\rm int}, d}$ is a correction which takes into account the relaxation of the Kohn-Sham potential due to interlayer coupling. This term is constructed as follows. For each disregistry vector $\by \in \WS$, we denote by $V^{(2)}_{d, \by}$ the mean-field Kohn-Sham potential for the configuration where the two layers are aligned (no twist angle) with the top one shifted by $\by$ in the longitudinal direction. The potential $V_{{\rm int}, d}(z)$ is defined as the average 
\[
    V_{{\rm int}, d}(z) = \fint_{\WS} V_{{\rm int}, d, \by}(z) \rd \by
\]
where $\fint_\Omega := |\Omega|^{-1} \int_\Omega$, and with $V_{{\rm int}, d, \by}(z)$ defined by
\begin{equation*}
      \fint_{\WS}  \left( V^{(2)}_{d, \by}(\bx, z) - V(\bx, z + \dd) - V(\bx - \by, z -\dd)   \right) \rd \bx.
\end{equation*}
 In other words, $V_{\rm int, d}$ is the mismatch between the interacting Kohn-Sham potential and the non-interacting one, averaged over all possible disregistries $\by \in \Omega$. Note that $V_{\rm int, d}(z)$ only depends on the $z$--variable and satisfies $V_{{\rm int}, d}(-z) = V_{{\rm int}, d}(z)$.

\medskip

In what follows, we study the approximate TBG Kohn-Sham Hamiltonian
\begin{equation} \label{eq:Hdtheta2}
    H_{d, \theta}^{(2)} := - \frac12 \Delta + V^{(2)}_{d, \theta}(\bx, z).
\end{equation}
Our goal is to derive a 2D reduced model describing the low-energy band structure around the Fermi level $\mu_{\rm F}$ in the limit of small twist angles.


\section{Reduced model}

The potential $V^{(2)}_{d, \theta}$ is of the form $V^{(2)}_{d, \theta}(\bx, z) = v_{d}\left( c_\theta   \bx,   \ept   \bx , z\right)$, with 
\begin{align*}
&v_d(\bx, \bX, z):= \\
&\; V(\bx - \tfrac12 J \bX, z- \dd) + V(\bx + \tfrac12 J \bX, z +  \dd) + V_{{\rm int}, d}(z).
\end{align*}
The potential $v_d$ is $\Lat$--periodic in the first variable $\bx$, and $2 J \Lat$-periodic in its second variable~$\bX$. In the limit $\theta \to 0$, $v_{d}\left( c_\theta   \bx,   \ept   \bx , z\right)$ has a natural two-scale structure, so that $H_{d, \theta}^{(2)}$ could be studied using adiabatic theory, semiclassical analysis, and/or homogenization theory. In this article however, we take a different route, and present a simple approximation scheme to derive an effective Hamiltonian describing electronic transport around the Fermi level.

\subsection{Variational approximation of low-energy wavepacket dynamics}
\label{sec:VA}

The main idea of our approach is to project the time-dependent Schr\"odinger equation 
\begin{equation}\label{eq:3D-TDSE-TBG}
i \partial_t \Psi(t,\br) = (H^{(2)}_{d,\theta}-\mu_{\rm F}) \Psi(t,\br), \quad \Psi(0,\br)=\Psi_0(\br)
\end{equation}
onto the two-scale approximation space
\begin{equation}\label{eq:approx_space_TBG}
\cX_{d,\theta} := \left\{ (\alpha : \Phi)_{d,\theta}, \; \alpha : \R^2 \to \C^4 \right\},
\end{equation}
where we set
\[
    \left( \alpha : \Phi \right)_{d, \theta}(\bx, z) := \sum_{\eta \in \{ \pm 1\} \atop j \in \{ 1, 2\}} \alpha_{\eta, j} \left(\ept \bx \right) (\U^\eta \Phi_j)(\bx, z).
\]
The trial functions in $\cX_{d, \theta}$ are linear combinations of four wave-packets, each one consisting of an envelope function $\alpha$ oscillating at the moiré scale $\ept^{-1}$ multiplied by one of the two (translated and rotated) Bloch waves $\Phi_1$ or $\Phi_2$ of one of the two layers. Note that both the TBG approximation subspace $\cX_{d, \theta}$ and Hamiltonian $H_{d, \theta}^{(2)}$ depend on the small parameter $\theta$.

\medskip

Given an initial state of the form $\Psi_0 = (\alpha^0 : \Phi)_{d, \theta}$, the true solution $\Psi(t)$ of~\eqref{eq:3D-TDSE-TBG} is expected to be close to $(\alpha(\ept t) : \Phi)_{d, \theta}$ up to times of order $\ept^{-1}$, where $\alpha(t)$ satisfies $\alpha(t = 0) = \alpha^0$, and solves the projected equation
\begin{align} \label{eq:proj_Schrodinger}
        & i \partial_t \left\langle \left( \widetilde{\alpha} : \Phi \right)_{d, \theta}, \left( \alpha(\ept t) : \Phi \right)_{d, \theta} \right\rangle \nonumber \\
        & \quad =  \left\langle \left( \widetilde{\alpha} : \Phi \right)_{d, \theta}, \left( H_{d, \theta}^{(2)} - \mu_F \right) \left( {\alpha}(\ept t) : \Phi \right)_{d, \theta} \right\rangle
\end{align}
for all $\widetilde{\alpha} : \R^2 \to \C^4$. The time variable has to be rescaled as $\tau := \ept t$ in order to obtain wave-packet propagation with finite velocity at the moir\'e scale.

\subsection{Formulation of the reduced model}
\label{sec:reduced_model}

It follows from tedious calculations detailed in Appendix~\ref{sec:LOT} that if we let $\theta$ go to zero in~\eqref{eq:proj_Schrodinger} for fixed trial smooth functions $\widetilde \alpha$ and $\alpha$, we obtain the asymptotic equality 
\begin{equation} \label{eq:asymptotic_VF}
\ri \partial_\tau \langle \widetilde \alpha, \cS_d \alpha (\tau) \rangle = \langle \widetilde \alpha, \cH_{d,\theta} \alpha(\tau) \rangle
+ O(\ept^\infty),
\end{equation}
where the overlap operator $\cS_d$ and the Hamiltonian operator $\cH_{d,\theta}$ act on $L^2(\R^2;\C^4)$, and are defined by
\begin{equation*}
  \cS_d :=  \begin{pmatrix}
        \bbI_2 & \Sigma_d(\bX) \\ \Sigma_d^*(\bX) & \bbI_2
    \end{pmatrix}
\end{equation*}
and
\begin{widetext}
    \begin{align}
        \cH_{d, \theta} & := \begin{pmatrix}
            v_{\rm F} \bm\sigma_{-\theta/2} \cdot (- \ri \nabla_\bX)  & \ept^{-1} \bbV_d(\bX) \\ \ept^{-1} \bbV_d (\bX)^*  & v_{\rm F} \bm\sigma_{\theta/2} \cdot (- \ri \nabla_\bX)  
        \end{pmatrix} +
        \begin{pmatrix}
            \ept^{-1} \bbW_d^+ & 0 \\
            0 & \ept^{-1} \bbW_d^-
        \end{pmatrix} \nonumber \\
        & \quad  +
        \begin{pmatrix}
            0 & c_\theta J( - \ri \nabla \Sigma_d(\bX)) \cdot (- \ri \nabla) \\
            c_\theta J( - \ri \nabla \Sigma_d^*)(\bX) \cdot (- \ri \nabla) & 0
        \end{pmatrix} 
        - \frac{\ept}{2} \nabla \cdot \left(  \begin{pmatrix}
            \bbI_2 & \Sigma_d(\bX) \\
            \Sigma_d^*(\bX) & \bbI_2
        \end{pmatrix} \nabla \bullet\right). \label{eq:ourH}
    \end{align}
\end{widetext}
The $2 \times 2$ matrix-valued functions $\Sigma_d(\bX)$, $\bbV_d(\bX)$ and $\bbW_d(\bX)$ are given by 
\begin{align*}
\left[ \Sigma_d(\bX)\right]_{jj'} &:= \re^{-\ri J \bK \cdot \bX}  (\!( u_{j}, u_{j'} )\!)^{+-}_d(\bX),  \\ 
       \left[ \bbV_d(\bX) \right]_{jj'}  &:= \re^{ -\ri J \bK \cdot \bX} (\!( \pa{V+V_{\text{int},d}(\cdot+ \dd)} u_j, u_{j'} )\!)_d^{+-}(\bX),  \\ 
             \left[ \bbW_d^\pm(\bX) \right]_{j j'} 
         &:= (\!(u_{j} \overline{u_{j'}}, V )\!)_d^{\pm \mp}(\bX) + \pa{W^{\pm}_{\text{int},d}}_{jj'} , 
\end{align*}
where for all $\Lat$-periodic functions $f$ and $g$, 
\begin{align} 
&\ppa{f,g}^{\eta\eta'}(\bX) :=  \nonumber \\
  &\;   \int_{\Omega \times \R} \overline{f\left(\bx-\eta \tfrac12 J\bX,z-\eta \dd\right)} \, g \left(\bx-\eta' \tfrac12 J\bX,z-\eta' \dd \right)\, \d\bx \, \d z, \label{eq:def:ppa}
\end{align}
and where
\begin{align*}
 \pa{W^{\pm}_{\text{int},d}}_{jj'} :=  \int_{\Omega \times \R} (\overline{u_{j}}u_{j'})(\bx, z \mp \dd) V_{{\rm int}, d}(z) \rd \bx \, \rd z. \label{eq:def:Wintd}
\end{align*}

All these quantities can be computed from the single-layer potential $V$, the Bloch wave-functions $u_1$ and $u_2$, and the Kohn-Sham correction potential $V_{{\rm int}, d}$.

It is not obvious that the operator $\cH_{d,\theta}$ is Hermitian. It is however the case. For instance, one can check that $\overline{(\!( u, v )\!)^{+-}_d}(\bX) = (\!( \overline{u}, \overline{v} )\!)^{+-}_d(\bX)$, which proves that the matrices $\bbW^+_d(\bX)$ and $\bbW_d^-(\bX)$ are Hermitian. In addition, from the equality, ${\rm div} J \nabla  = - \partial_{x_1x_2}^2  + \partial_{x_2x_1}^2 = 0$, we see that the third matrix in the definition \eqref{eq:ourH} of $\cH_{d, \theta} $ also defines an Hermitian operator.

\medskip

The asymptotic equality \eqref{eq:asymptotic_VF} leads us to introduce the following effective model for the propagation of low-energy wave-packets in TBG:
\begin{equation} \label{eq:def:reduced_eqt}
    \ri \partial_\tau \left( \cS_d \alpha \right) (\tau, \bX) = \left( \cH_{d, \theta} \alpha \right)(\tau, \bX).
\end{equation}

At this stage, we have kept all the terms in $\cS_d$ and $\cH_{d, \theta}$, and only thrown away the remainders of order $\ept^\infty$. Indeed, $\ept$ should not be considered as the only small parameter in this problem. The interlayer coupling is also small, hence the operators $\bbW_d^\pm$ can be considered to be small as well. The interplay between the two parameters $\ept$ and $w$ (the interlayer characteristic interaction energy) is subtle and will be explored in a future work.

\begin{remark} A similar approach can be used to derive an effective model for wave-packet propagation in monolayer graphene in an energy range close to the Fermi level and localized in the $K$-valley in momentum space. One obtains the effective Hamiltonian    
    \[
        \cH^{(1)}_\eps = v_F \bsigma \cdot (- \ri \nabla ) + \frac12 \eps (- \ri \nabla)^2
    \]
   acting on $L^2(\R^2; \C^2)$. Neglecting the second term, we recover the usual massless Dirac operator. This effective model was rigorously derived in~\cite{FefWei13} using other methods.
\end{remark}

\subsection{Translational covariance} 
For any $\Lat$-periodic functions $f$ and $g$, the maps $ \ppa{f,g}^{\pm\mp}(\bX)$ defined in~\eqref{eq:def:ppa} are $J \L$-periodic.
This suggests to introduce the rescaled moiré lattice 
\[
\L_{\rm M} :=  J \L
\]
which satisfies $\Lat_{\rm M} = \ba_{1, \rm M} \Z + \ba_{2, \rm M} \Z$ with lattice vectors $\ba_{1,\rm M} =  J \ba_1$ and $\ba_{2,\rm M} =  J \ba_2$. Explicitly,
$$
\ba_{1,\rm M} = a_0 \begin{pmatrix}
    - \sqrt{3}/2 \\  - 1/2 
\end{pmatrix}
\quad \text{and} \quad 
\ba_{2,\rm M} = a_0 \begin{pmatrix}
    \sqrt{3}/2 \\  -1/2 
\end{pmatrix}.
$$
The corrresponding Wigner--Seitz cell is $\Omega_{\rm M} := \R^2 / \Lat_{\rm M}$, and its dual basis is given by $\ba_{1,\rm M}^* = J \ba_1^*$ and $\ba_{2,\rm M}^* = J \ba_2^*$, that is
$$
\ba_{1,\rm M}^* = \sqrt{3} k_{\rm D}  \begin{pmatrix}
    - 1/2 \\ - \sqrt{3}/2 
\end{pmatrix}
\quad \text{and} \quad 
\ba_{2,\rm M}^* =  \sqrt{3} k_{\rm D} \begin{pmatrix}
    1/2 \\ -\sqrt{3}/2 
\end{pmatrix}.
$$
We also introduce the vectors (see Fig.~\ref{fig:path})
\begin{align*}
    & \bq_1 := k\ind{D} \mat{0 \\ -1} = \frac 13\pa{ \ba_{1,\rm M}^* + \ba_{2,\rm M}^*}, \\ 
    & \bq_{2} := k\ind{D}\mat{\f{\sqrt 3}{2} \\ \f 12} =  \frac 13 \pa{-2 \ba_{1,\rm M}^* + \ba_{2,\rm M}^*} = R_{\frac{2 \pi}{3}} \bq_1, \\  
    & \bq_{3} := k\ind{D}\mat{-\f{\sqrt 3}{2} \\ \f 12}= \frac 13 \pa{\ba_{1,\rm M}^* - 2 \ba_{2,\rm M}^*} = R_{\frac{2 \pi}{3}} \bq_2. 
\end{align*}
These vectors satisfy $\bq_1 + \bq_2 + \bq_3 = \b0$ and correspond to the $\bK$--valley of the moiré Brillouin zone. Actually, we have $\bq_1 = J \bK$.

Going back to our reduced model, we see that the diagonal elements $\bbW_d^{\pm}$ are $\Lat_M$-periodic, while, for $\bR_M \in \Lat_M$,
\begin{equation} \label{eq:quasi-periodicity}
    \bbV_d(\bX - \bR_M) = \re^{ \ri \bq_1 \cdot \bR_M} \bbV_d(\bX), 
\end{equation}
and similarly for $\Sigma_d$. Writing $\bR_M = m_1 \ba_{1, M} + m_2 \ba_{2, M}$, we have $\re^{\ri \bq_1 \cdot \bR_M} = \omega^{(m_1 + m_2)}$, where we set $\omega := \re^{\ri \frac{2 \pi}{3}}$. Since $\omega^3=1$, our model is $3\L_{\rm M}$--periodic.

Thus, although the true moiré pattern generated by the superposition of two twisted honeycomb lattices is not periodic for a generic twist angle $\theta$, our reduced model is. In some sense, the moiré pattern looks $\ept^{-1}\L_{\rm M}$-periodic at the mesoscopic scale $\ept^{-1}$ for $\theta \ll 1$ (see Fig.~\ref{fig:moire}).

\begin{figure}[H]
    \begin{center}
        \includegraphics[height=5cm,trim={5cm 5cm 5cm 8cm},clip]{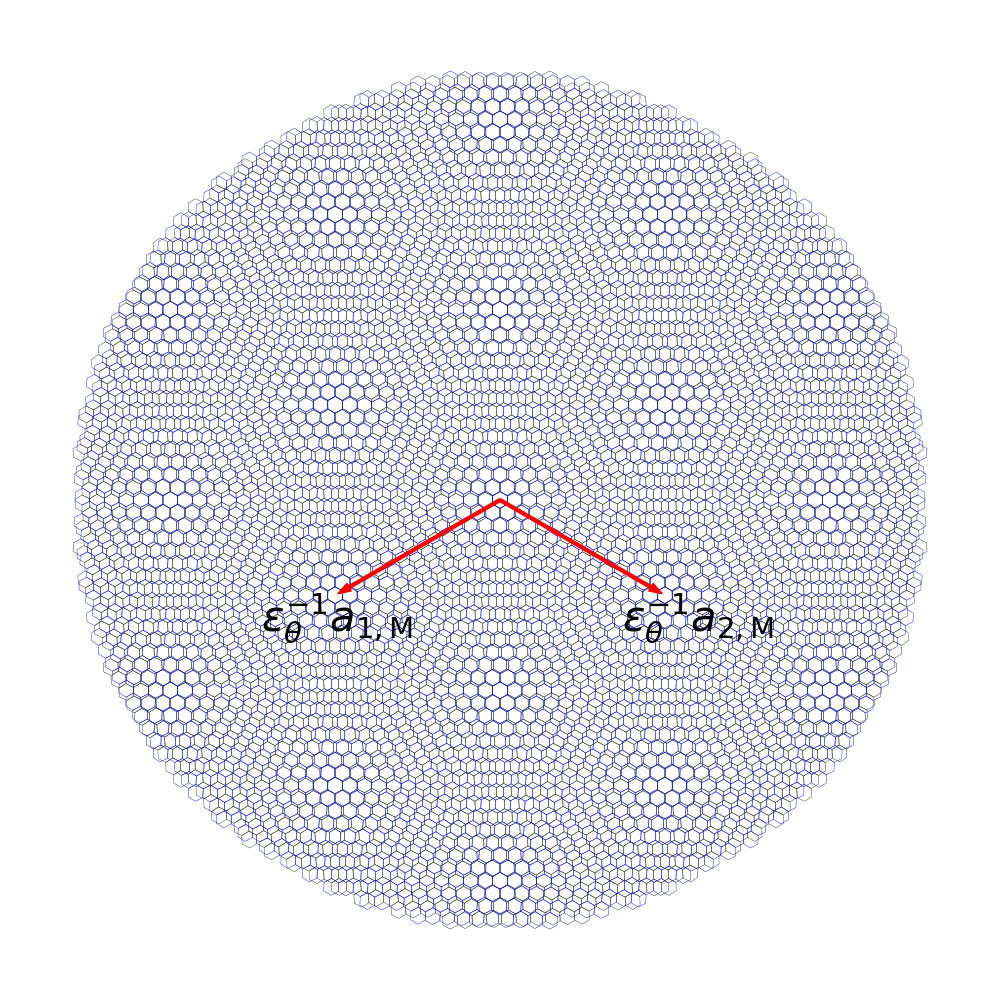}
        \caption{Moiré lattice vectors for twist angle $\theta \simeq 2.2^\circ$.}\label{fig:moire}
    \end{center}
\end{figure}



\subsection{Comparison with the Bistritzer-MacDonald model}%
\label{sec:comparison_BM}

The BM Hamiltonian \eqref{eq:BM_Hamiltonian} can be written more explicitly~\cite{BM11,TKV19, Watson22} as
\begin{equation*} 
H^{\rm BM}_\theta = \left( \begin{array}{cc} v_{\rm F} \bm\sigma_{-\theta/2} \cdot (- \ri \nabla_\bx)  & \tbm(\ept \bx) \\ \tbm (\ept \bx)^*  & v_{\rm F} \bm\sigma_{\theta/2} \cdot (- \ri \nabla_\bx)   \end{array} \right),
\end{equation*} 
with
\begin{equation} \label{eq:BM_pot_V}
\tbm(\bX) :=   \mat{w_{\rm AA} \, G(\bX) & w_{\rm AB} \,  \overline{F(-\bX)}  \\ w_{\rm AB}  \, F(\bX)  & w_{\rm AA} \, G(\bX)},
\end{equation}
where $w_{\rm AA}$ and $w_{\rm AB}$ are the two real parameters describing the interlayer coupling in AA and AB stacking, and
\begin{align}  \label{eq:defFG}
F(\bX) & := \re^{-\ri  \bq_1 \cdot \bX} + \omega \re^{-\ri  \bq_2 \cdot \bX} + \omega^2 \re^{-\ri  \bq_3 \cdot \bX}, \\
G(\bX) & := \re^{-\ri  \bq_1 \cdot \bX} + \re^{-\ri  \bq_2 \cdot \bX} +  \re^{-\ri  \bq_3 \cdot \bX}.
\end{align} 
The BM potential satisfies, for all $\bR_M \in \Lat_M$,
\begin{equation}\label{eq:translation_V}
    \bm V(\bX-\bR_M) = \re^{\ri \bq_1 \cdot \bR_M} V(\bX) =  \omega^{(m_1+m_2)} \bm V(\bX),
\end{equation}
so that the BM potential $\tbm$ and our reduced potential $\bbV$ have the same covariance symmetries. They actually share many other symmetries (see Section~\ref{app:symmetries} in the Supplementary Materials for a comprehensive analysis of the symmetries of our reduced model).

Rescaling lengths and energies as $\bX=\ept \bx$ and ${\cal E}=\ept^{-1} E$ (in the BM model, the Fermi level is set to zero), we obtain the rescaled BM Hamiltonian 
\begin{align}\label{eq:BM_rescaled}
\cH^{\rm BM}_\theta = \begin{pmatrix}
    v_{\rm F} \bm\sigma_{-\theta/2} \cdot (- \ri \nabla_\bX)  & \ept^{-1} \tbm(\bX) \\ \ept^{-1} \tbm (\bX)^*  & v_{\rm F} \bm\sigma_{\theta/2} \cdot (- \ri \nabla_\bX)  
\end{pmatrix}.
\end{align}

Going back to our model in~\eqref{eq:ourH}, we see that the two models are similar under the following assumptions (which will be justified numerically in Section~\ref{sec:numerics}):
\begin{enumerate}
\item the matrix $\Sigma_d(\bX)$ and its gradient can be neglected;
\item the term $- \frac 12 \ept \Delta$ can be neglected, which is the case if the oscillations of the envelope functions $\alpha_{\eta,j}$ at the moiré scale contribute more than those at the atomic scale;
\item the functions $\bbW_d^\pm$ are almost proportional to the identity matrix (and thus only induce a global energy shift);
\item the function $\bbV_d$ is close to~\eqref{eq:BM_pot_V} for some well-chosen parameters $w_{AA}$ and $w_{AB}$.
\end{enumerate}

As for the last point, first-principle values of the BM parameters can be inferred from our reduced model by setting
\begin{align}
w_{\rm AA}^d &:= \frac{1}{3|\Omega_{\rm M}|} \int_{\Omega_{\rm M}} [\bbV_d]_{11}(\bX) \overline{G(\bX)} \, \rd \bX,  \label{eq:value_wAA} \\
w_{\rm AB}^d &:= \frac{1}{3|\Omega_{\rm M}|} \int_{\Omega_{\rm M}} [\bbV_d]_{21}(\bX) \overline{F(\bX)} \, \rd \bX, \label{eq:value_wAB}
\end{align}
where we used the fact that $\int_{\Omega_{\rm M}} |F|^2= \int_{\Omega_{\rm M}} |G|^2= 3|\Omega_{\rm M}|$. Note that from~\eqref{eq:quasi-periodicity} and~\eqref{eq:translation_V}, the integrands are $\L_{\rm M}$-periodic functions, and that $| \Omega_{\rm M} | = | \Omega| = \frac{\sqrt 3}2 a_0^2$. We prove in Appendix~\ref{app:wAA} that $w_{AA}^d = w_{AB}^d$.

\section{Numerical results}
\label{sec:numerics}

In this section, we numerically study the generalized spectral problem associated with the operators $(\cH_{d, \theta},\cS_d)$. Due to the energy shift in \eqref{eq:3D-TDSE-TBG} and the rescaling by $\ept$, the spectrum of the operator $H^{(2)}_{d,\theta}$ close to $\mu_{\rm F}$ is related to the spectrum of $(\cH_{d, \theta},\cS_d)$ around $0$ by
$$
\sigma(H^{(2)}_{d,\theta}) \simeq \mu_{\rm F} + \ept \sigma(\cH_{d, \theta},\cS_d) = \mu_{\rm F} + \ept \sigma(\widetilde\cH_{d, \theta},\widetilde\cS_d),
$$
where $\widetilde\cH_{d, \theta}$ and $\widetilde\cS_d$ are $\L_{\rm M}$-periodic operators obtained from $\cH_{d, \theta}$ and $\cS_d$ by the gauge transformation specified in the next section.


\subsection{Gauge transformation}
\label{sec:gauge}

First, we perform a gauge transformation in order to remove the phase factors in \eqref{eq:quasi-periodicity} and end up with an $\L_{\rm M}$-periodic model. The same arguments can be used for the BM model. Let $\bK_1$ and $\bK_2$ be two vectors such that $\bK_1 - \bK_2 = \bq_1$, {\em e.g.} $\bK_2 = \bq_3$, and $\bK_1 = -\bq_2$ (recall that $\bq_1 + \bq_2 + \bq_3 = \b0$). We introduce the unitary multiplication operator
\[
P(\bX) := \begin{pmatrix}
    \re^{ \ri \bK_1 \cdot \bX} \bbI_2 & \b0 \\
    \b0 & \re^{ \ri \bK_2 \cdot \bX} \bbI_2
\end{pmatrix}
\]
with inverse $P^{-1} (\bX) = P^*(\bX) = P(-\bX)$. First, we have
\[
    \widetilde{\cS}_{d}:= P\cS_{d}P^* = \begin{pmatrix}
        \bbI_2 & \widetilde{\Sigma_d}(\bX) \\ \widetilde{\Sigma_d}^*(\bX) & \bbI_2
    \end{pmatrix}, 
\]
with $\widetilde{\Sigma_d}(\bX) = \re^{\ri (\bK_1 - \bK_2)\cdot \bX} \Sigma_d(\bX)$. Using the definition of $\Sigma_d$ and the fact that $\bq_1 = J \bK$, we obtain
\[
    \widetilde{\Sigma_d}(\bX) =  \re^{\ri \bq_1 \cdot \bX} \Sigma_d(\bX) = \ppa{ u_{j}, u_{j'} }^{+-}(\bX).
\]
The $\widetilde{\cS_d}$ matrix-valued function is now $\Lat_M$-periodic. Similarly, with components given by 
\begin{align*}
\left[ \widetilde{\bbV}_d(\bX) \right]_{jj'}  &:= \ppa{ \left(V+V_{\text{int},d}(\cdot+ \dd) \right) u_j, u_{j'}}^{+-}(\bX),   \\
\left[ \widetilde{\bbW}_d^\pm(\bX) \right]_{j j'} &:= \ppa{u_{j} \overline{u_{j'}}, V}^{\pm \mp}(\bX) + \pa{W^{\pm}_{\text{int},d}}_{jj'}, \\
\widetilde A_d & := e^{i\bq_1 \bx}(-i\na \Sigma_d) = (-i\nabla_{\bq_1}) \widetilde \Sigma_d,  
\end{align*}
we find, using the notation $\nabla_\bk := \nabla - \ri \bk$,
{\small
\begin{widetext}
    \begin{align}
        \widetilde{\cH}_{d, \theta} & = \begin{pmatrix}
            v_{\rm F} \bm\sigma_{-\theta/2} \cdot (- \ri \nabla_{\bK_1})  & \ept^{-1} \widetilde{\bbV}_d(\bX) \\ \ept^{-1} \widetilde{\bbV}_d (\bX)^*  & v_{\rm F} \bm\sigma_{\theta/2} \cdot (- \ri \nabla_{\bK_2})  
        \end{pmatrix} +
        \begin{pmatrix}
            \ept^{-1} \widetilde{\bbW}_d^+ & 0 \\
            0 & \ept^{-1} \widetilde{\bbW}_d^-
        \end{pmatrix} \nonumber \\
        & \quad  +
        \begin{pmatrix}
            0 & c_\theta J  \widetilde{A}_d(\bX) \cdot (- \ri \nabla_{\bK_2}) \\
            c_\theta J \widetilde{A}_d^*(\bX) \cdot (- \ri \nabla_{\bK_1}) & 0
        \end{pmatrix} 
        + \frac{\ept}{2} \begin{pmatrix}
            ( - \ri \nabla_{\bK_1})^2 & (- \ri \nabla_{\bK_1} ) \cdot \left[ \widetilde{\Sigma_d} (- \ri \nabla_{\bK_2} ) \bullet \right] \\
            (- \ri \nabla_{\bK_2}) \cdot \left[ \widetilde{\Sigma_d}^* (- \ri \nabla_{\bK_1} ) \bullet \right] & ( - \ri \nabla_{\bK_2} )^2 
        \end{pmatrix}.
    \end{align}
\end{widetext}
}
In this gauge, the model is $\Lat_{\rm M}$ periodic, and we can apply the usual Bloch transform to compute its band diagram. For the sake of illustration, we display on Fig.~\ref{fig:band_diag} and \ref{fig:bands_comparisions} the band diagrams of the BM model (black) and of our continuous model (red). The path in momentum space used to produce the bands diagrams is displayed in Fig.~\ref{fig:path}.

\begin{figure}[H]
\centering
  \includegraphics[width=0.22\textwidth,trim={5cm 6cm 8cm 6cm},clip]{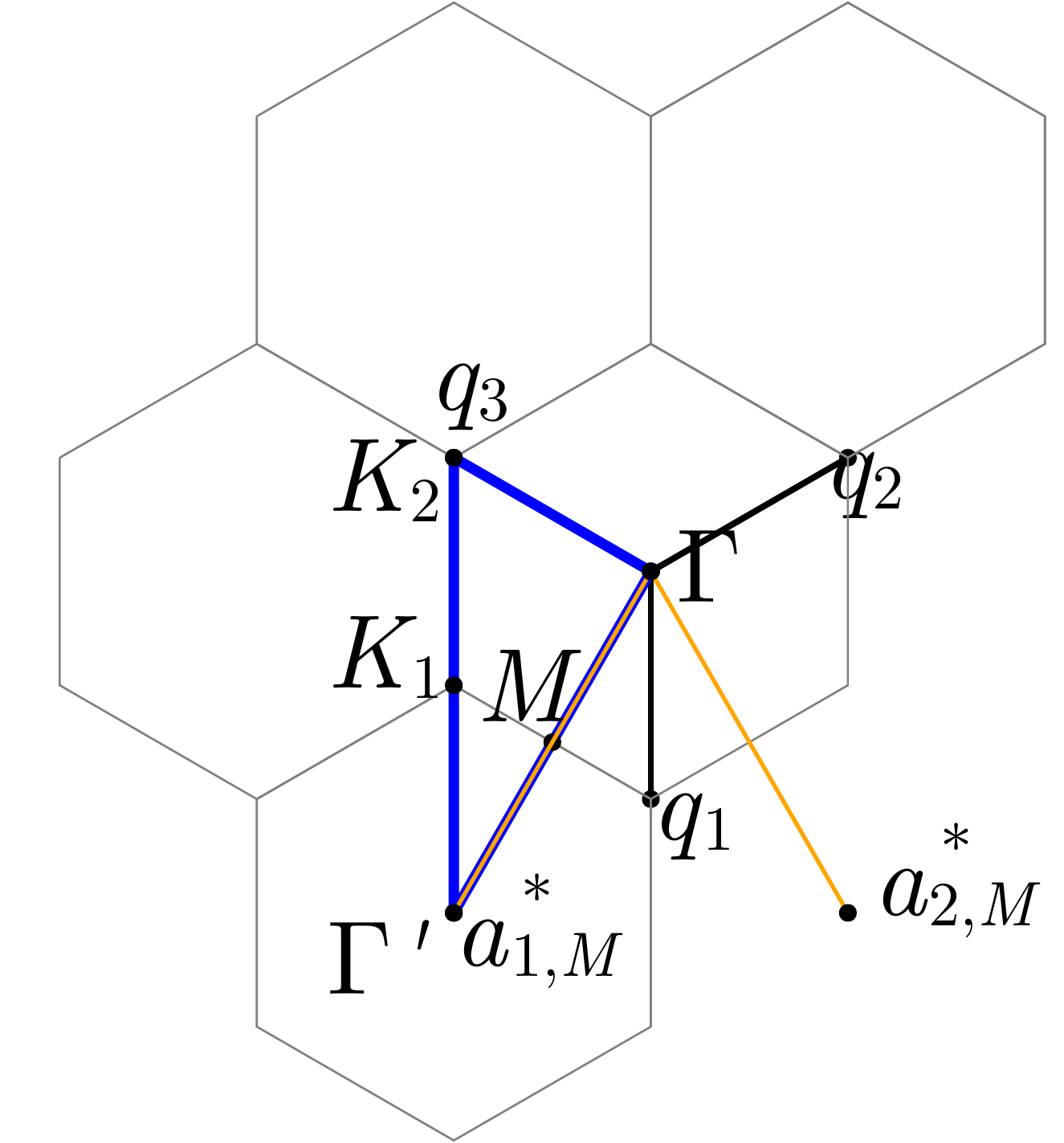}
\caption{Path $\bK_2 \to \bK_1 \to \Gamma' \to M \to \Gamma \to \bK_2$ in momentum-space, the hexagon centered at $\Gamma$ corresponds to the mini Brillouin zone} \label{fig:path}
\end{figure}

Quantities are computed with $d= 6.45$ bohr for our effective model. There are at least two ways of characterizing special angles associated to almost flat bands: the standard one is to consider the local minimizers of the Fermi velocity (called magic angles); an alternative consists in considering the local minimizers of the almost flat bands bandwidth. We choose here the second way. The first minimizing angle is $\theta \simeq 1.175^\circ$ for the BM model (black lines) with $w\ind{AA} = w\ind{AB} = 110$ meV, and $\theta \simeq 1.164^\circ$ for our effective model (red lines). A noticeable difference between our effective model and the BM model is that for both definitions of the special angles (minimal Fermi velocity vs minimal bandwidth), the BM Hamiltonian is not gapped at the flat bands while ours is so. Finally, we remark that the BM model with $w\ind{AA} = w\ind{AB} = 110$ meV (the empirical values used in \cite{BM11}) is a better approximation of our model than the BM model with $w\ind{AA} = w\ind{AB} = 126$ meV (the values derived from DFT, see next section). 

A thorough comparison of the band diagrams of various continuous models (including atomic relaxation) will be the matter of a forthcoming paper.

\begin{figure}[H]
\centering

  \includegraphics[width=0.22\textwidth,trim={0.4cm 0cm 0cm 0cm},clip]{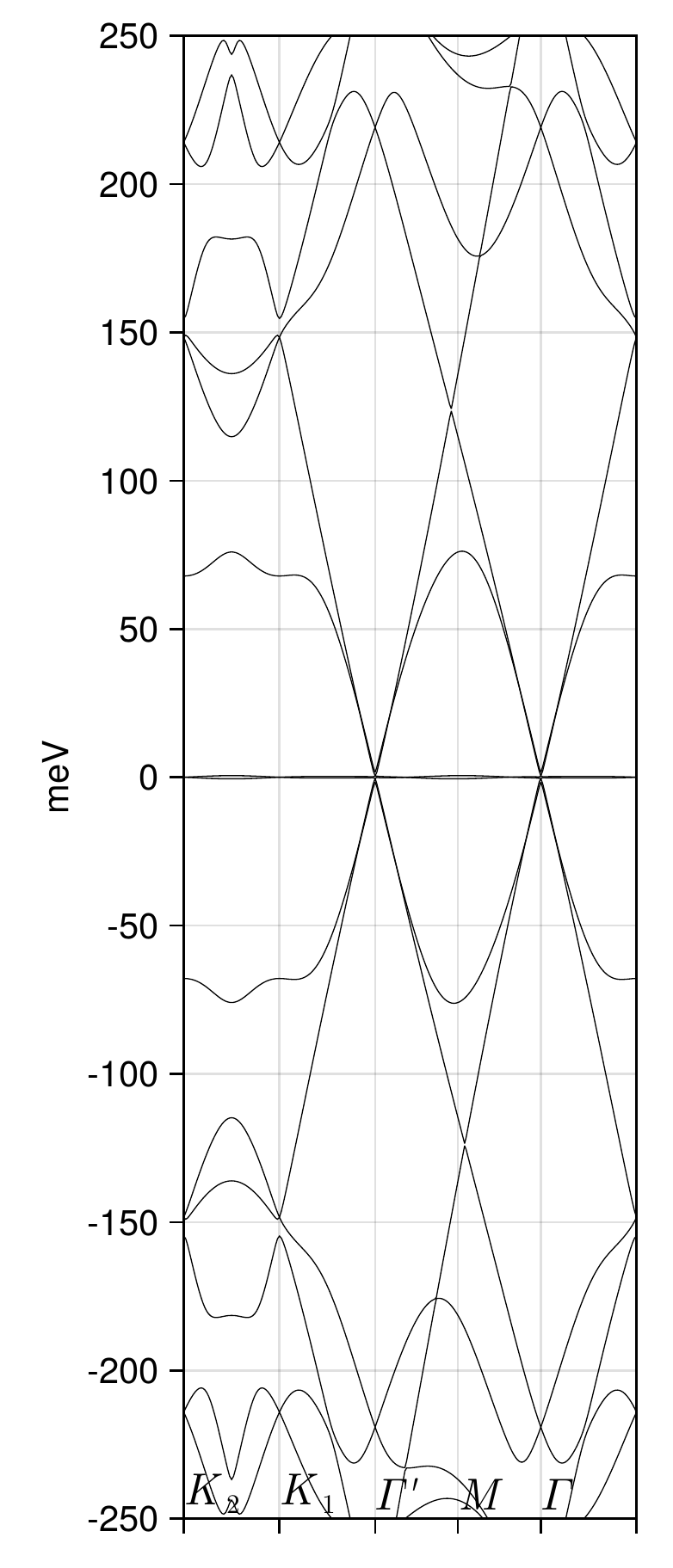}
  \includegraphics[width=0.22\textwidth,trim={0.4cm 0cm 0cm 0cm},clip]{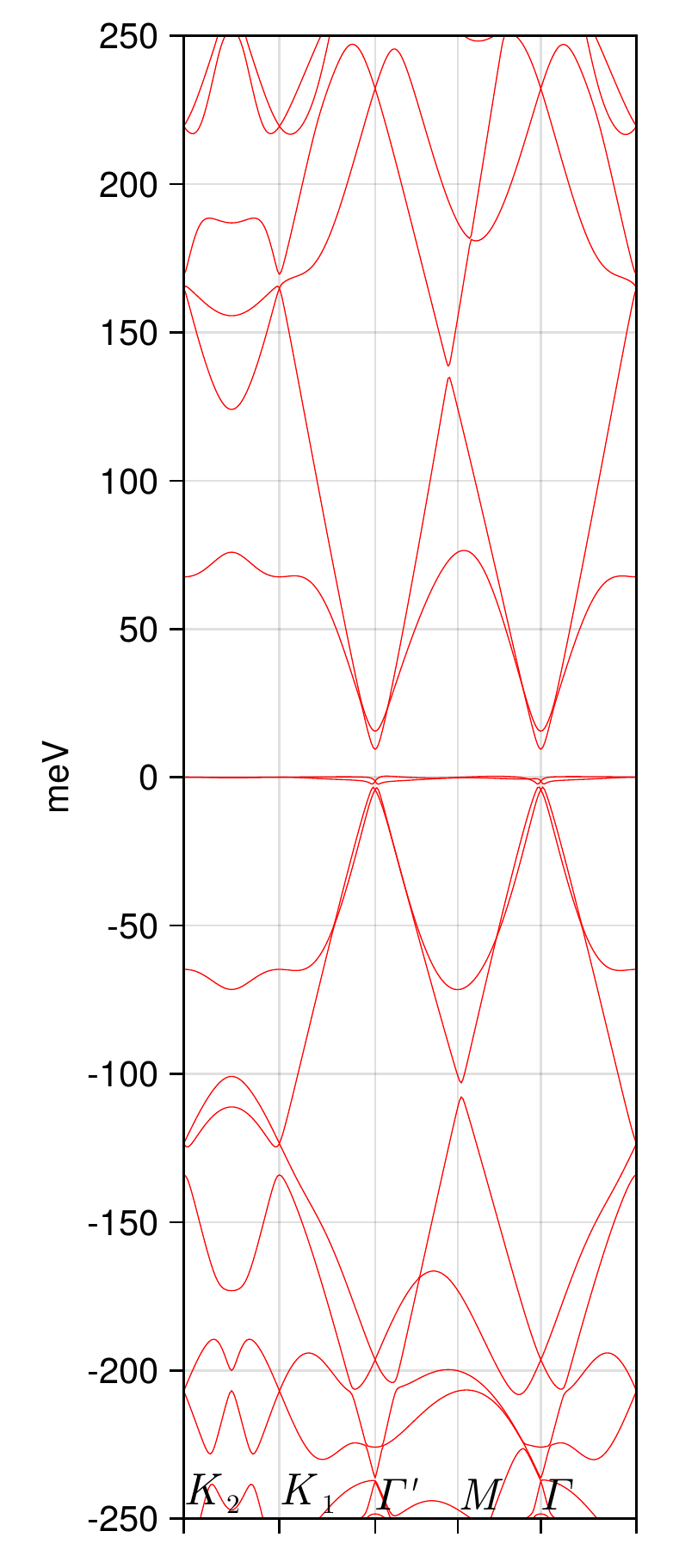}

\caption{Left: band diagram of $\cH^{\rm BM}_\theta$ (black) with $w\ind{AA} = w\ind{AB} = 110$ meV for the first bandwidth minimizing angle $\theta \simeq 1.175^\circ$ (bandwidth $1.1$ meV). Right: band diagram of $(\widetilde\cH_{d,\theta},\widetilde\cS_d)$ (red) for the first minimizing bandwidth angle $\theta \simeq 1.164^\circ$ (bandwidth $2.3$ meV). In both cases, the zero of the energy scale is the center of the almost flat band which is obtained by shifting the middle band diagram (red) of 957 meV.}
\label{fig:band_diag}
\end{figure}

\begin{figure}[H]
\centering
  \includegraphics[width=0.4\textwidth]{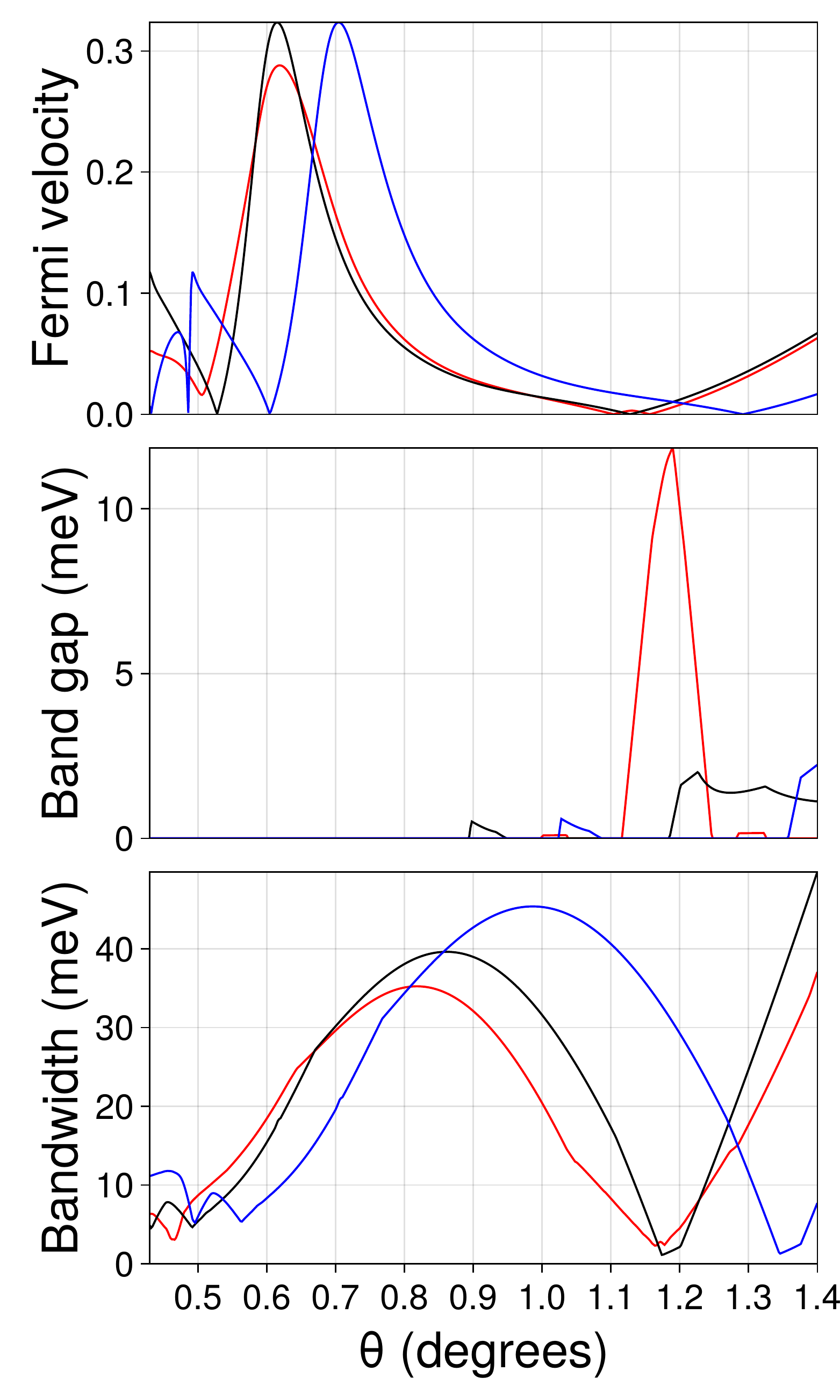}

\caption{Normalized Fermi velocities $v_{\text{F},\theta}/v\ind{F}$, band gaps, and bandwidths of the almost flat bands, where $v\ind{F}$ is the Fermi velocity of the monolayer, of $\cH^{\rm BM}_\theta$ (black) with $w\ind{AA} = w\ind{AB} = 110$ meV, of $\cH^{\rm BM}_\theta$ (blue) with $w\ind{AA} = w\ind{AB} = 126$ meV, and of $(\widetilde\cH_{d,\theta},\widetilde\cS_d)$ (red) as functions of $\theta$. }
\label{fig:bands_comparisions}
\end{figure}

\subsection{Numerical details}

In order to numerically compute the Fourier coefficients of $[\widetilde{\Sigma_d}]_{jj'}$, $[\widetilde{\bbV_d}]_{jj'}$, and $[ \widetilde{\bbW_d^\pm}]_{jj'}$, we have developed a code in Julia \cite{Julia}, interfaced with the DFTK planewave DFT package~\cite{DFTK}.

The single-layer graphene Kohn-Sham model is solved with DFTK using the PBE exchange-correlation functional with Goedecker-Teter-Hutter (GTH) pseudopotential, a unit cell of height $L_z= 110$ bohr, an energy cut-off of $E\ind{cut} = 900$ eV, and a $(5 \times 5 \times 1)$ $k$-point grid. We extract from the DFTK computation $\L$-periodic functions $u_1$ and $u_2$ such that $\Phi_1(\br) = e^{i \bK \cdot \bx} u_1(\br)$ and $\Phi_2(\br) = e^{i \bK \cdot \bx} u_2(\br)$ form an orthogonal basis of $\Ker \pa{H^{(1)}_\bK - \mu\ind{F}}$, where $H^{(1)}_\bK$ is the Bloch fiber of the single-layer graphene Kohn-Sham Hamiltonian at the Dirac point $\bK$. We also extract the local component of the single-layer graphene Kohn-Sham potential $V$ (as well as the required information on the nonlocal component of the carbon atom GTH pseudopotential, see Supplementary Materials~\ref{sec:NLP} for details). More precisely, DFTK returns the Fourier coefficients of the $\L$-periodic functions $u_j$ (assumed to be well-oriented, see~\eqref{eq:pre_Dirac}) and $V$, of the form
\begin{equation} \label{eq:FourierDecomposition_f}
    f(\br) = \sum_{m_1,m_2,m_z \in \Z} [f]_{m_1,m_2,m_z} \frac{\re^{\ri \left((m_1\ba_1^*+m_2\ba_2^*) \cdot \bx+ m_z \frac{2\pi z}{L_z}\right)}}{|\Omega|^{1/2}L_z^{1/2}}.
\end{equation}
At the discrete level, these sums are finite and run over the triplets of integers $(m_1,m_2,m_z) \in \Z^3$ such that 
$$
\frac{|m_1\ba_1^*+m_2\ba_2^*+\bK|^2+m_z^2 \frac{4\pi^2}{L_z^2}}2 \le E_{\rm cut}.
$$

The Kohn-Sham potential $V^{(2)}_{{\rm int},d}$ is computed by averaging the disregistries $\by$ in a $(5 \times 5)$ uniform grid of the graphene unit cell. In the results reported below, we used the experimental interlayer mean distance $d = 6.45$ bohr. In accordance with the results in~\cite{TSKCLPC16}, the potential $V_{\text{int},d,\by}(z)$ only slightly depends on $\by$:
\begin{align*}
    \delta_{V_{\text{int},d}} &:= \f{\int_{\Omega\times \R} \ab{V_{\rm{int},d,\by}(z) - V_{ \rm int,d}(z)}^2 \d \by \d z}{\ab{\Omega} \int_{\R} V\ind{int}(z)^2 \; \d z} \; \simeq 1 \times 10^{-4}.
\end{align*}
The effective potential $V_{\text{int},d}$, and in-plane averages of the vertically shifted single-layer Kohn-Sham potential $V$ and Bloch wave densities $|\Phi_1|^2$ are plotted in Fig.~\ref{fig:Vint}.

\begin{figure}[h]
    \begin{center}
        \includegraphics[width=0.5 \textwidth]{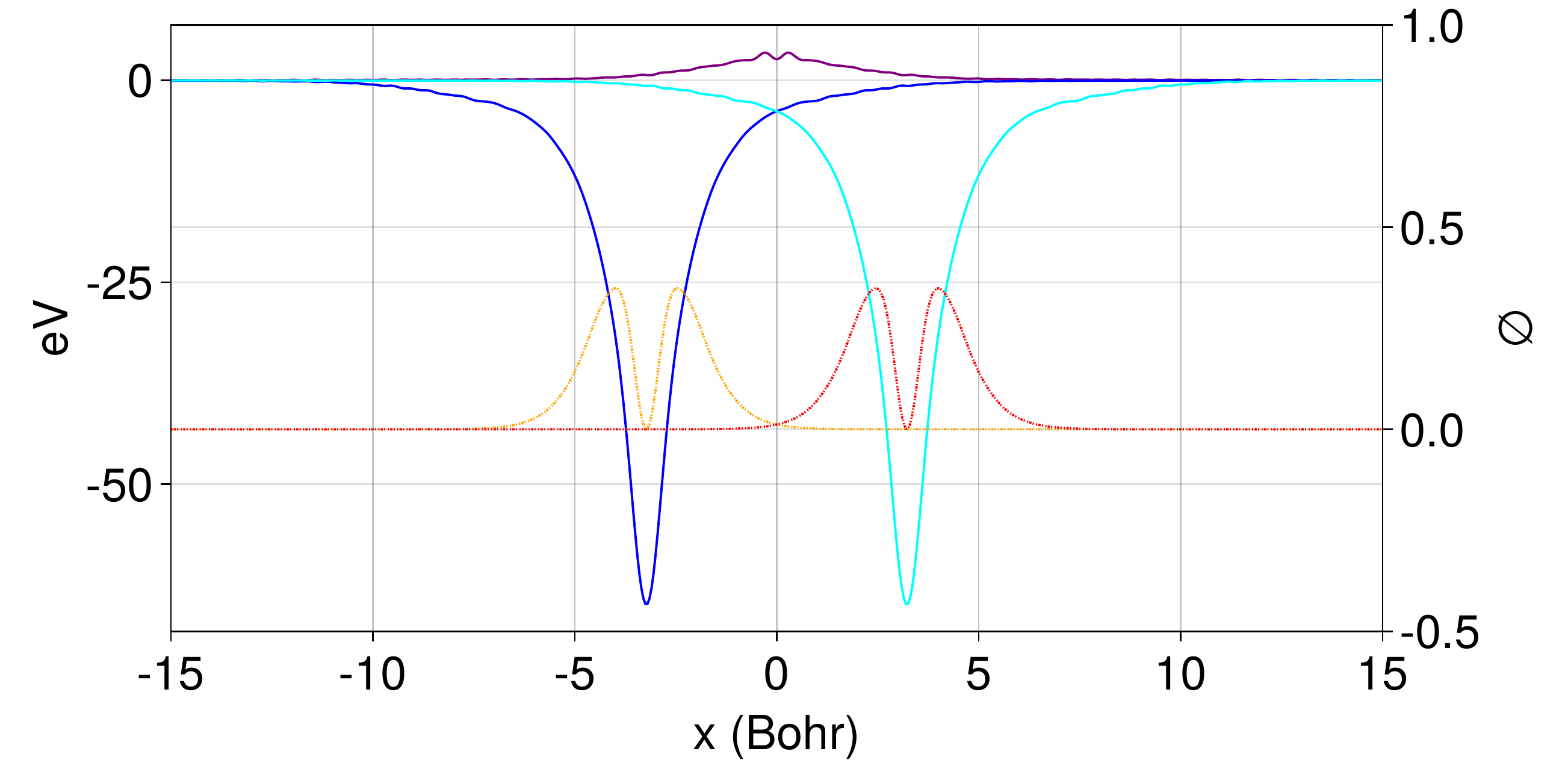}
        \includegraphics[width=0.22 \textwidth]{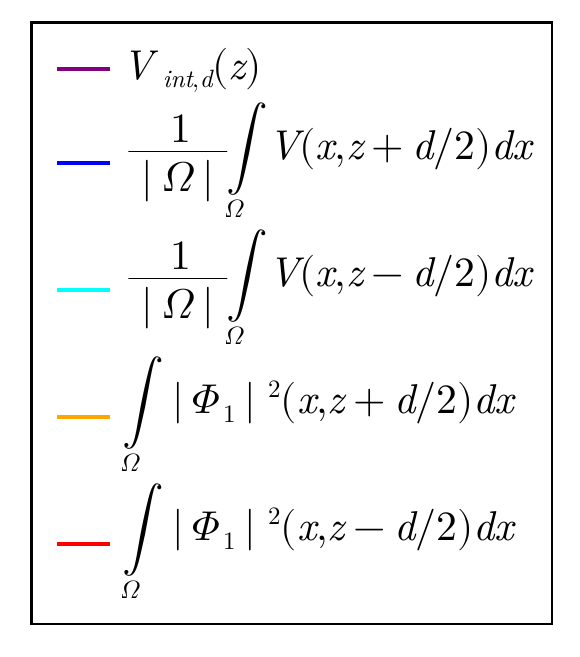}
        \caption{Effective potential $V_{\text{int},d}$, and in-plane averages of the vertically shifted single-layer Kohn-Sham potential $V$ and Bloch wave densities $|\Phi_1|^2$ for interlayer distance $d=6.45$ bohr.\label{fig:Vint}} 
    \end{center}
\end{figure}

For $f$ and $g$ of the form~\eqref{eq:FourierDecomposition_f}, we find that
\[
\ppa{f,g}^{+-}(\bX) = \sum_{m_1, m_2 \in \Z}  [f|g]_{d, m_1, m_2} \dfrac{\re^{ \ri (m_1 \ba_{1,M}^* + m_2 \ba_{2, M}^*) \cdot \bX} }{| \Omega_M |^{1/2}}
\]
where the coefficient $[f|g]_{d, m_1, m_2}$ is given by
\begin{equation*}
   |\Omega_{\rm M}|^{1/2} \sum_{m_z \in \Z} [\overline{f}]_{-m_1,-m_2,m_z} [g]_{-m_1,-m_2,m_z} \re^{\ri m_z \frac{ 2 \pi}{L_z} d}.
\end{equation*}


%
For $d = 6.45$ bohr, we obtain the BM parameters 
$$
w_{\rm AA}^{d=6.45 \, a.u.}= w_{\rm AB}^{d=6.45 \, a.u.} \simeq 126 \; {\rm meV},
$$
in good agreement with the value $w_{\rm AA} = w_{\rm AB} = 110$ meV chosen in \cite{BM11} to fit experimental data.

\subsection{Numerical justification of the Bistritzer-MacDonald model}

As discussed in Section~\ref{sec:comparison_BM}, the BM model can be deduced from our reduced model by assuming that $\Sigma_d(\bX)$ and its gradient can be neglected, that $\bbW_d^{\pm}(\bX)$ is proportional to the identity matrix, and that $\bbV_d(\bX)$ is of the form \eqref{eq:BM_pot_V} for some well-chosen parameters $w_{\rm AA}$ and $w_{\rm AB}$. To test these assumptions, we first plot in Figs.~\ref{fig:Sigma}-\ref{fig:W_plus_without_mean} the real-space structures and magnitudes of the functions $[\Sigma_d]_{jj'}(\bX)$, $\ab{\nabla \Sigma_d(\bX)}$, $[\bbW_d^\pm]_{jj'}(\bX) - \fint_\Omega [\bbW_d^\pm]_{jj'}$ and $[\bbV_d]_{jj'}(\bX) -
\bm V_{jj'}(\bX)$ for $d = 6.45$ bohr and $w_{\rm AA}= w_{\rm BB}=126 \; {\rm meV}$. We can see these fields are indeed small in the relevant units: $\Sigma_d(\bX)$ is small ($\sim 0.03$ compared to $1$), $\ab{\nabla \Sigma_d(\bX)}$ is small ($\sim 0.03$ compared to the Fermi velocity $v_{\rm F} \simeq 0.380$), and 
$\bbV_d(\bX) - \tbm(\bX)$ and $\bbW_d^{\pm}(\bX) - \fint_\Omega \bbW_d^{\pm}$ are small (resp. $\sim 1$ meV and $\sim 40$ meV compared to the interlayer characteristic interaction energy $126 \; {\rm meV}$). This provides a new argument supporting the validity of the BM model.
  
\begin{figure}[h]
\begin{center}
\includegraphics[width=0.2\textwidth]{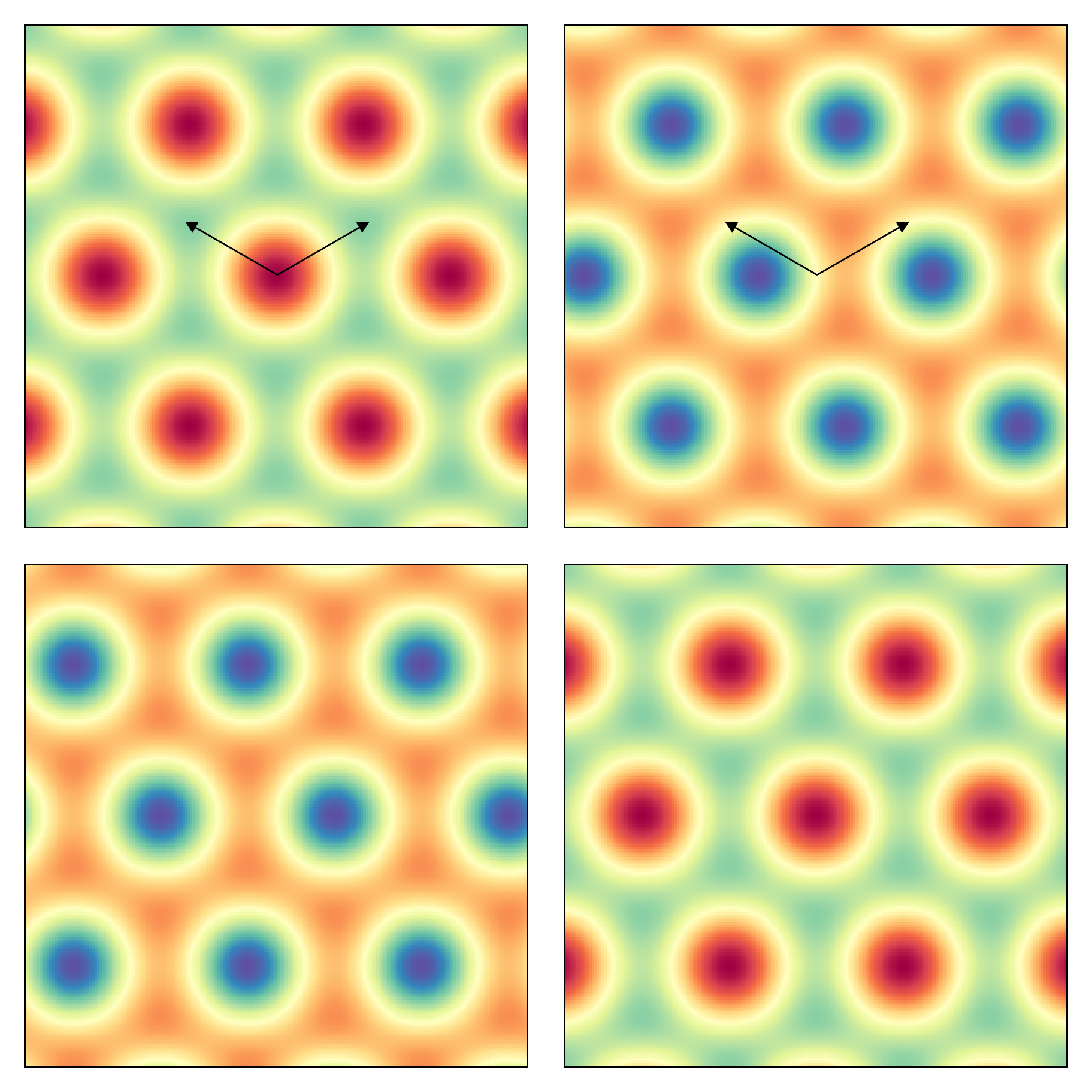}
\includegraphics[width=0.2\textwidth]{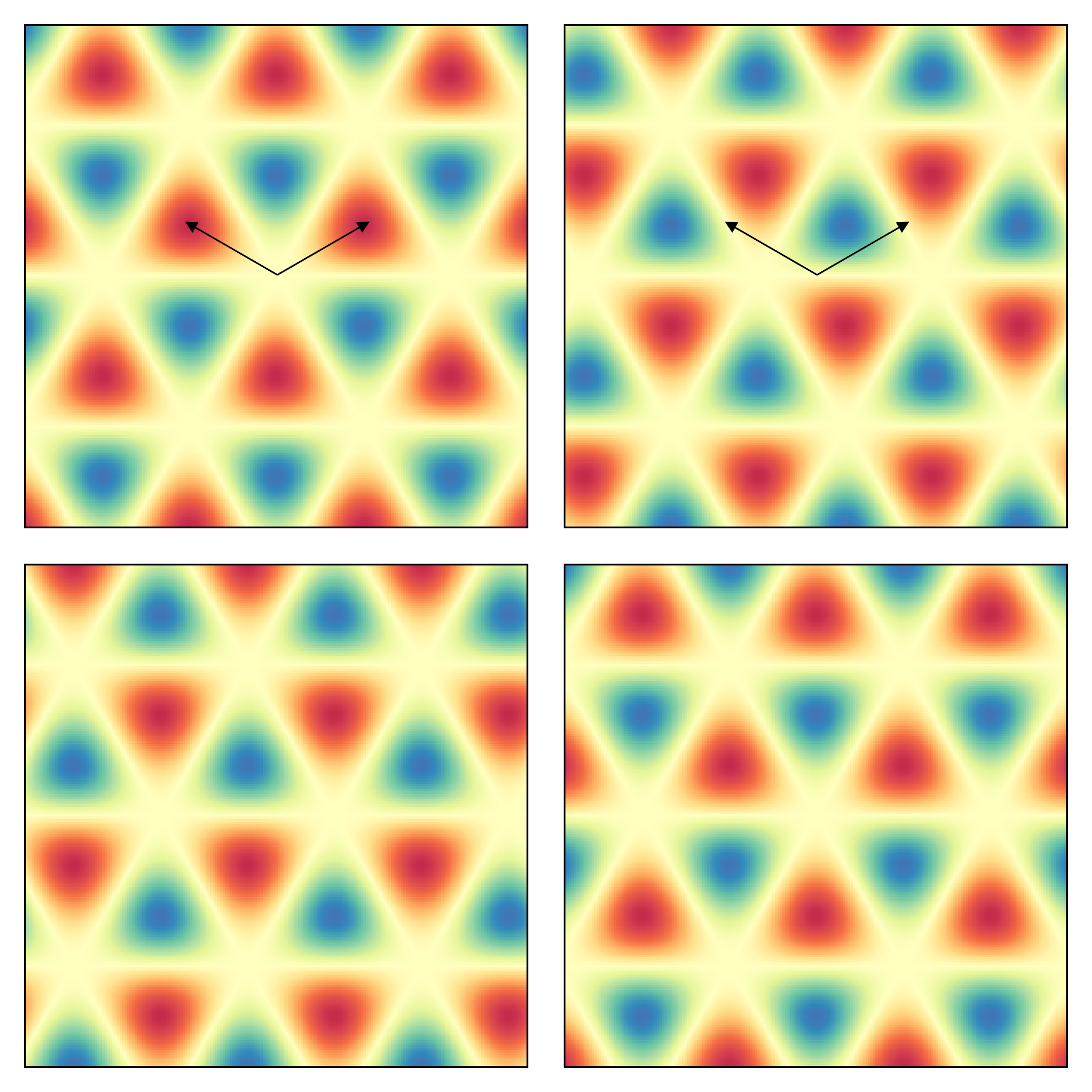}
\includegraphics[width=0.2\textwidth]{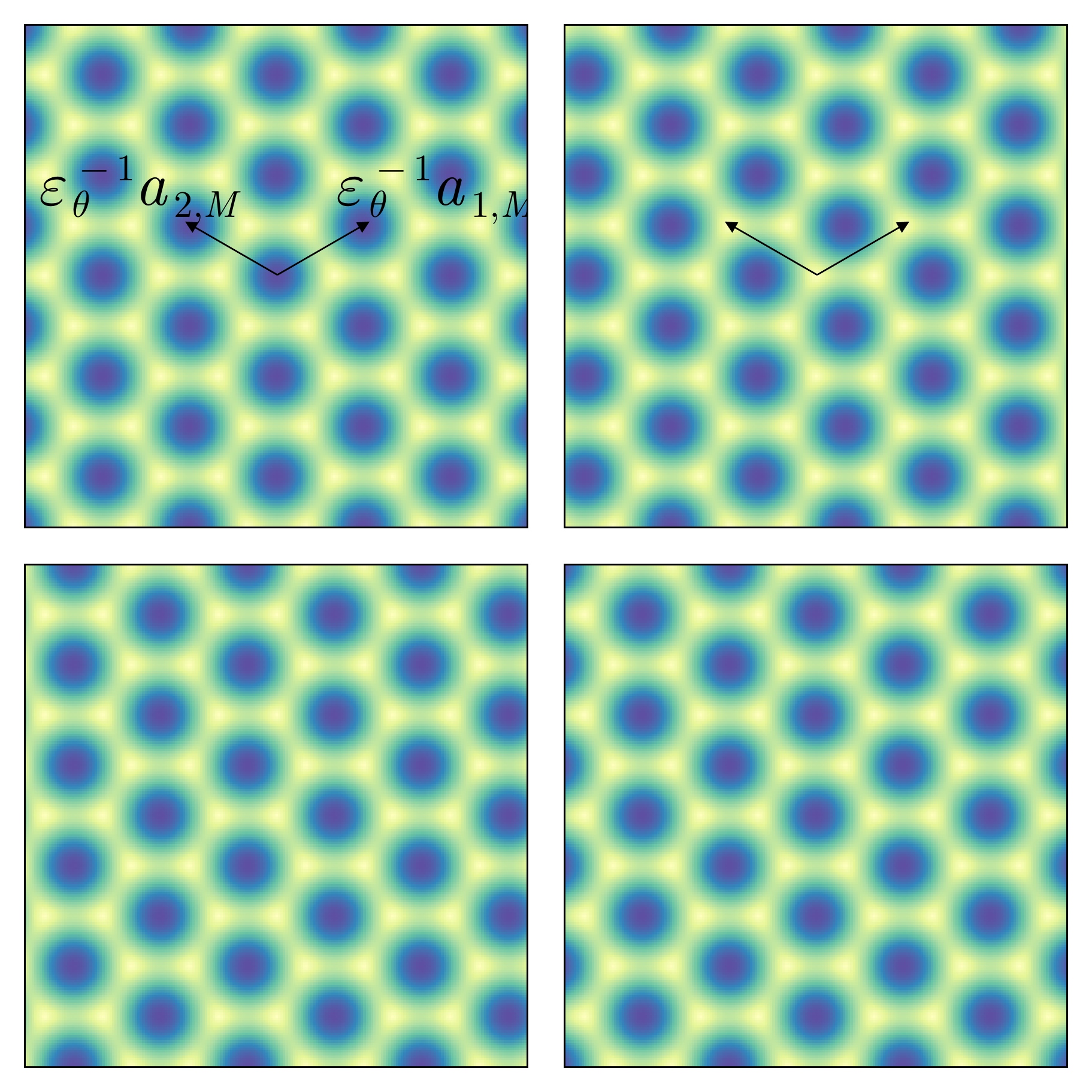}
\includegraphics[width=0.45\textwidth]{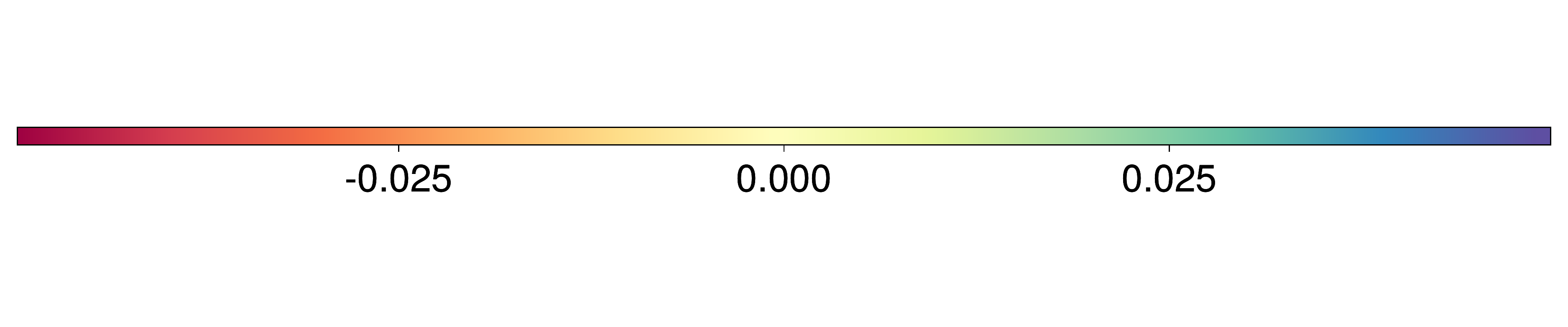}
\caption{The $4$ entries of resp. the real part, the imaginary part, and the modulus of the matrix-valued function $\Sigma_d(\bX)$ for $d = 6.45$ bohr.}\label{fig:Sigma}
\end{center}
\end{figure}

\begin{figure}[H]
\begin{center}
\includegraphics[height=4.9cm]{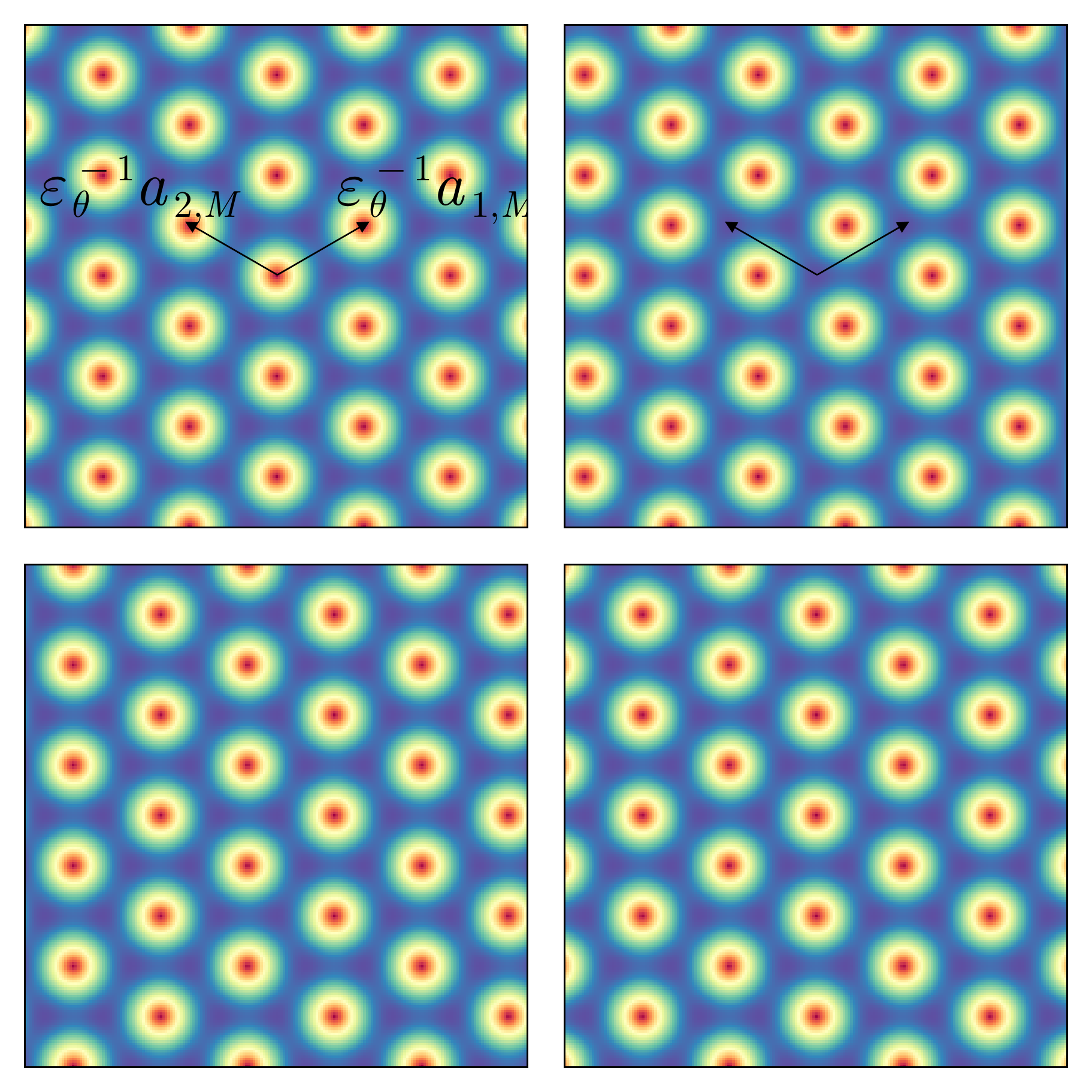}
\hspace{1cm}\includegraphics[height=4.5cm,trim={0.3cm -2cm 0.3cm 0cm},clip]{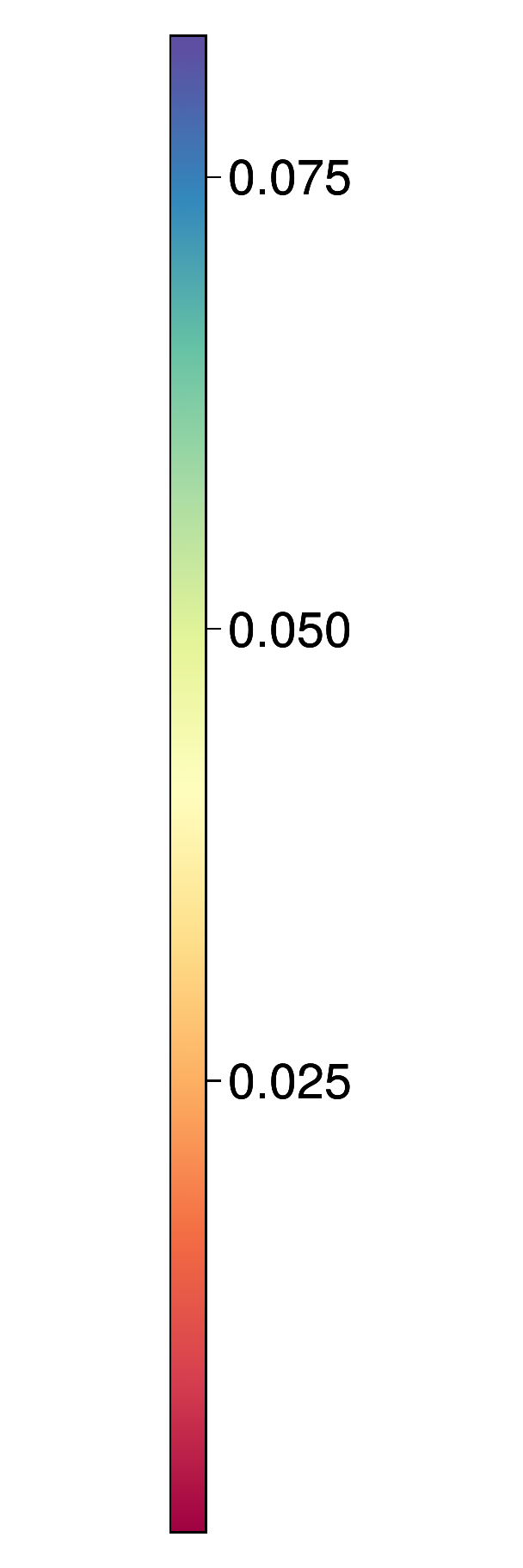}
\caption{The $4$ entries of the matrix-valued function $\ab{\nabla \Sigma_d(\bX)}/v\ind{F}$ for $d = 6.45$ bohr.}\label{fig:A}
\end{center}
\end{figure}

\begin{figure}[H]
\begin{center}
\includegraphics[width=0.2\textwidth]{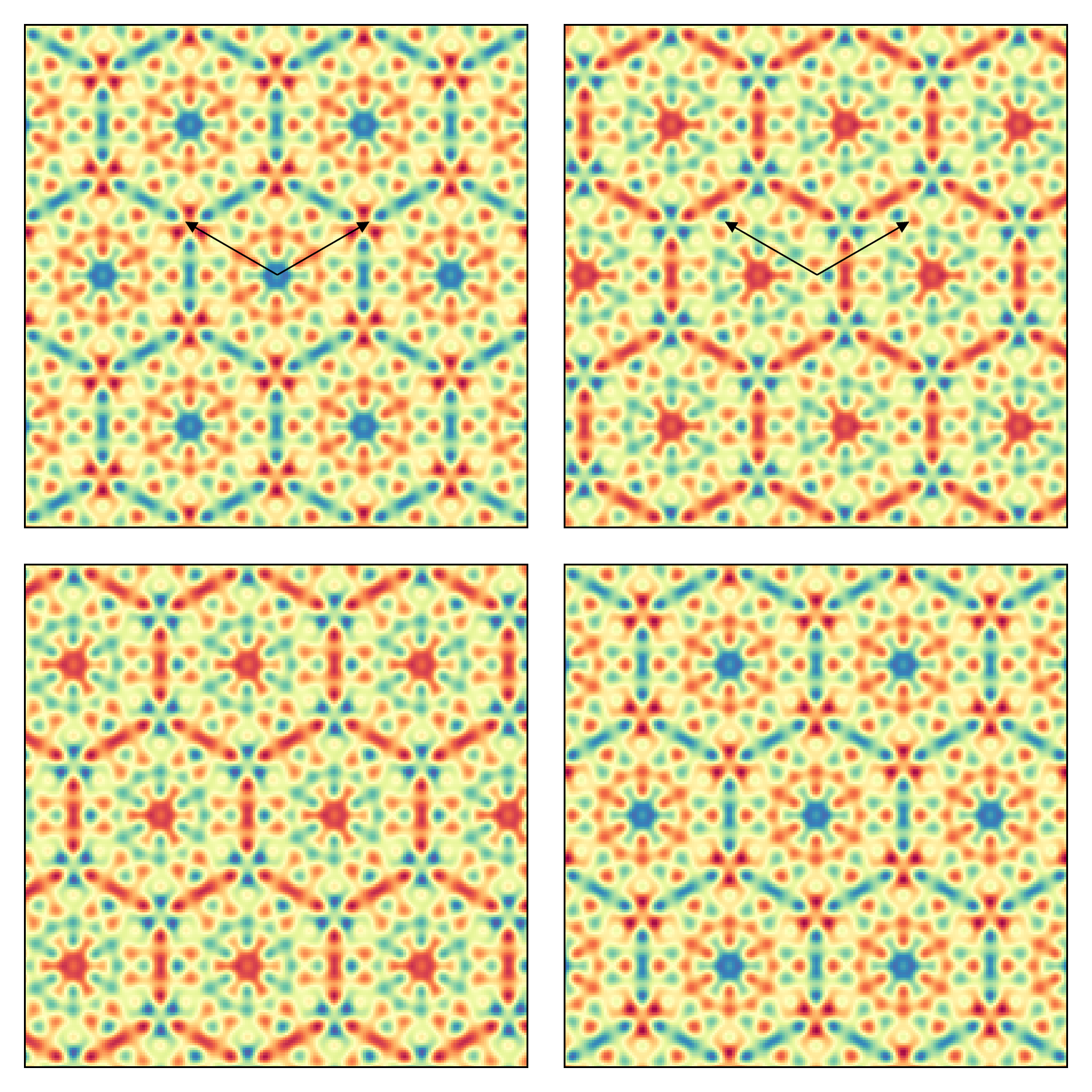}
\includegraphics[width=0.2\textwidth]{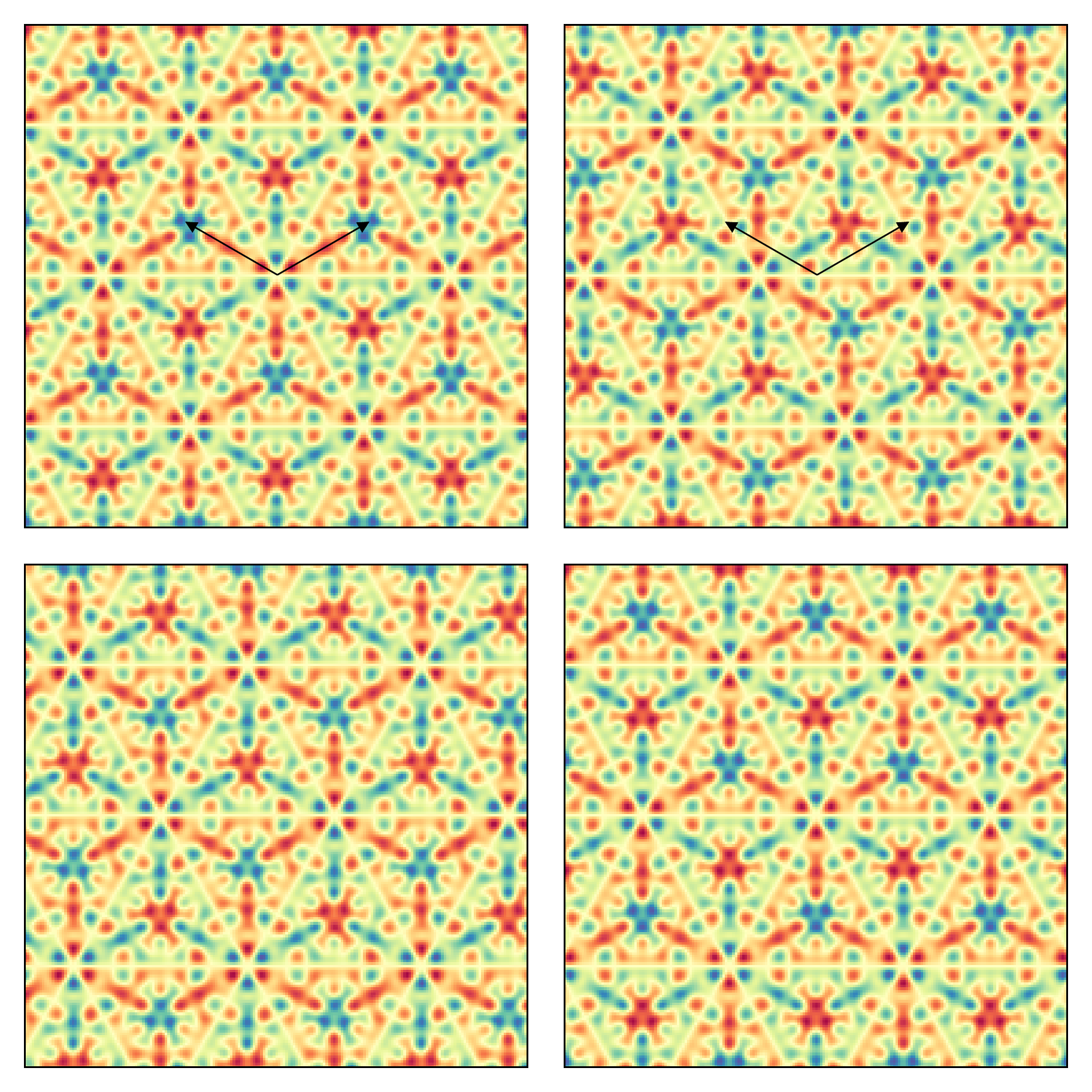}
\includegraphics[width=0.2\textwidth]{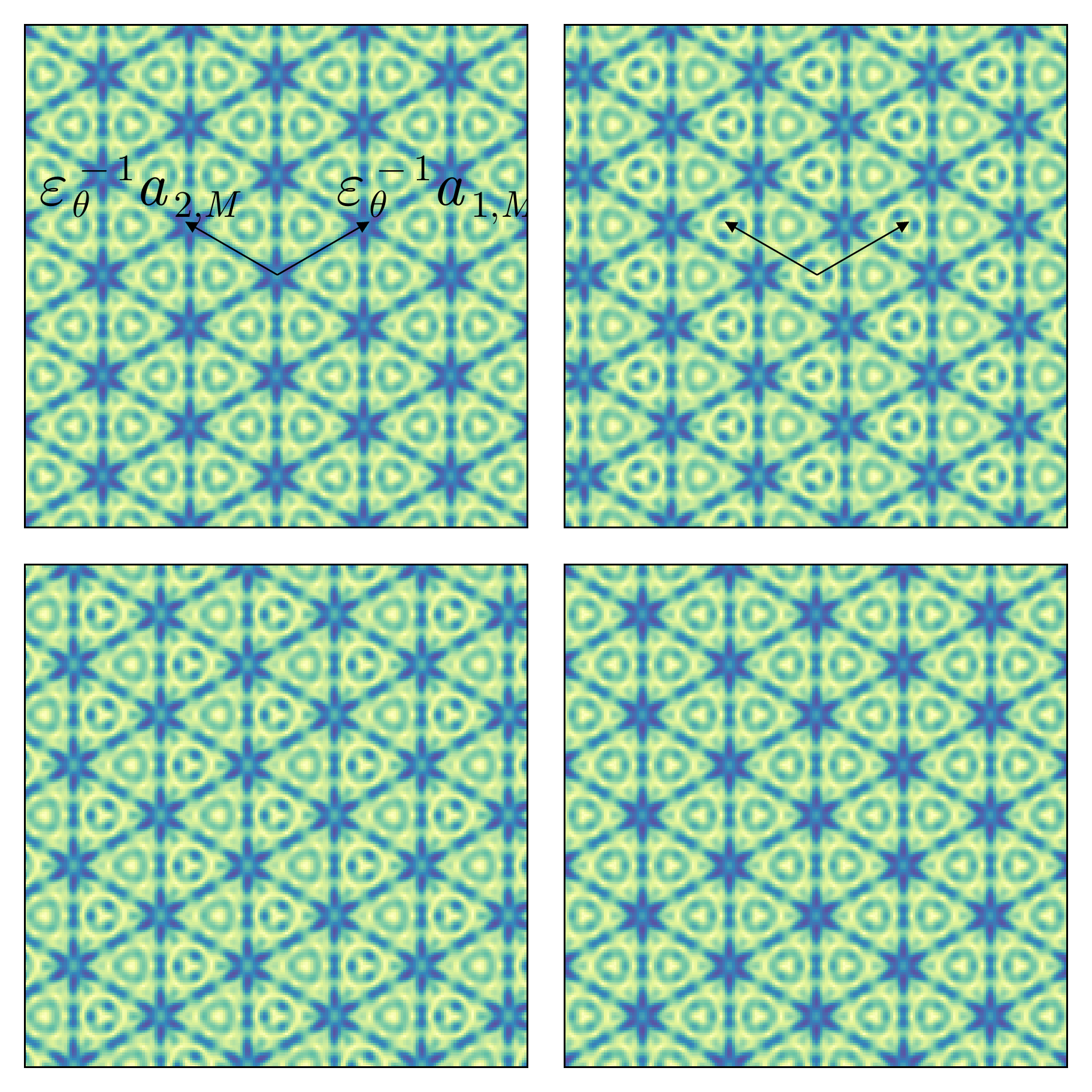}
\includegraphics[width=0.45 \textwidth]{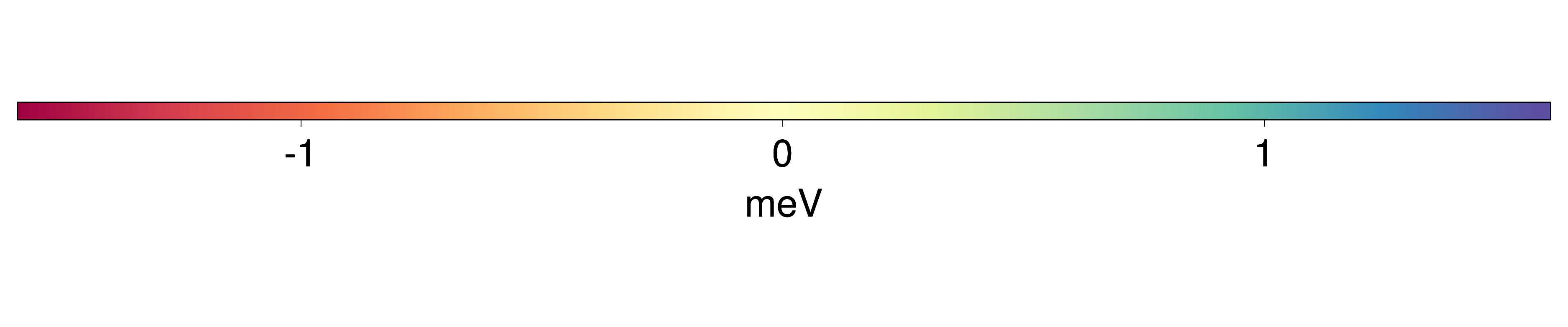}
\caption{The $4$ entries of resp. the real part, the imaginary part, and the modulus of the matrix-valued function $\bbV_d(\bX)-\tbm (\bX)$ for $d = 6.45$ bohr and $w_{\rm AA}= w_{\rm BB}=126 \; {\rm meV}$.}\label{fig:V_V}
\end{center}
\end{figure}

 \begin{figure}[H]
\begin{center}
\includegraphics[width=0.2\textwidth]{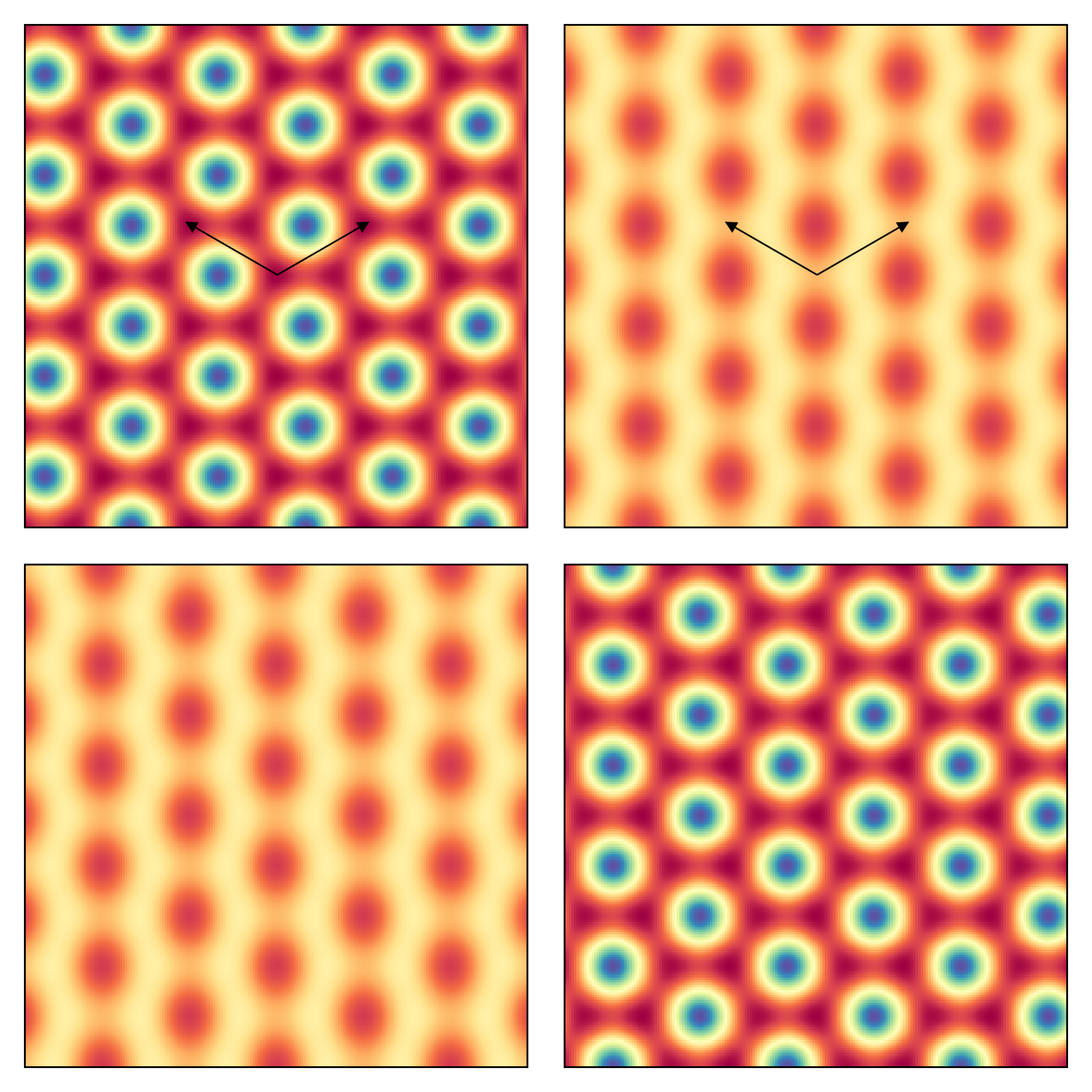}
\includegraphics[width=0.2\textwidth]{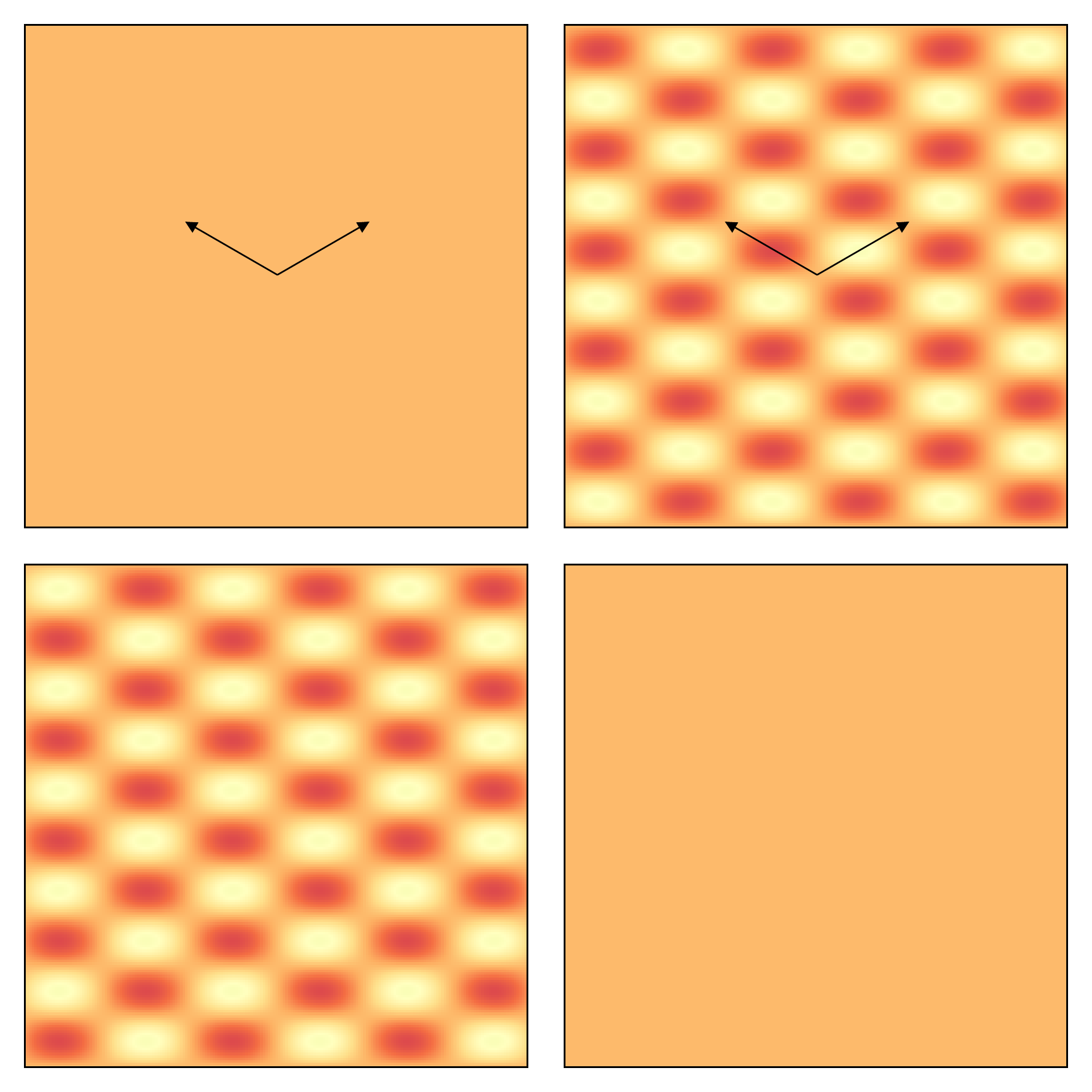}
\includegraphics[width=0.2\textwidth]{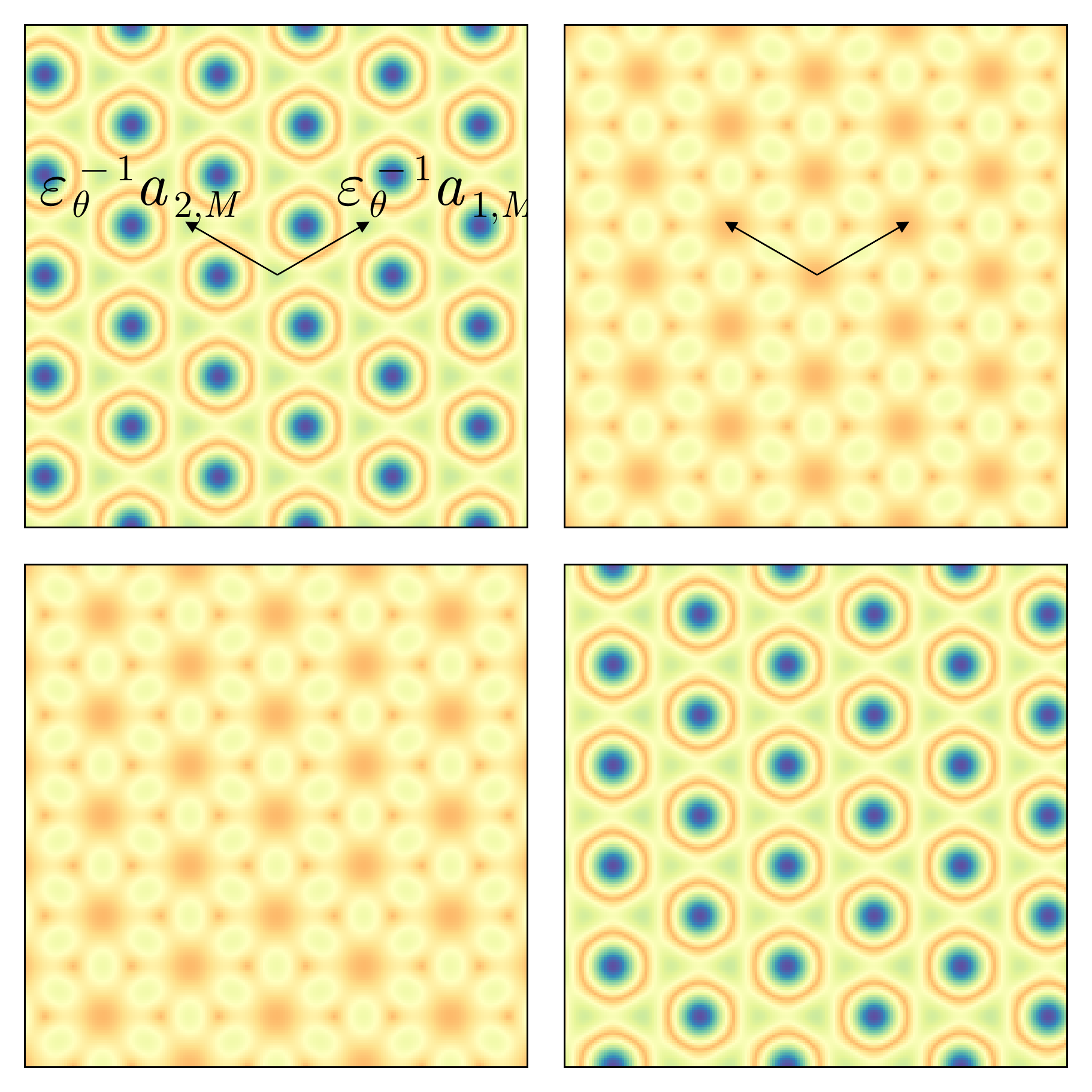}
\includegraphics[width=0.45\textwidth]{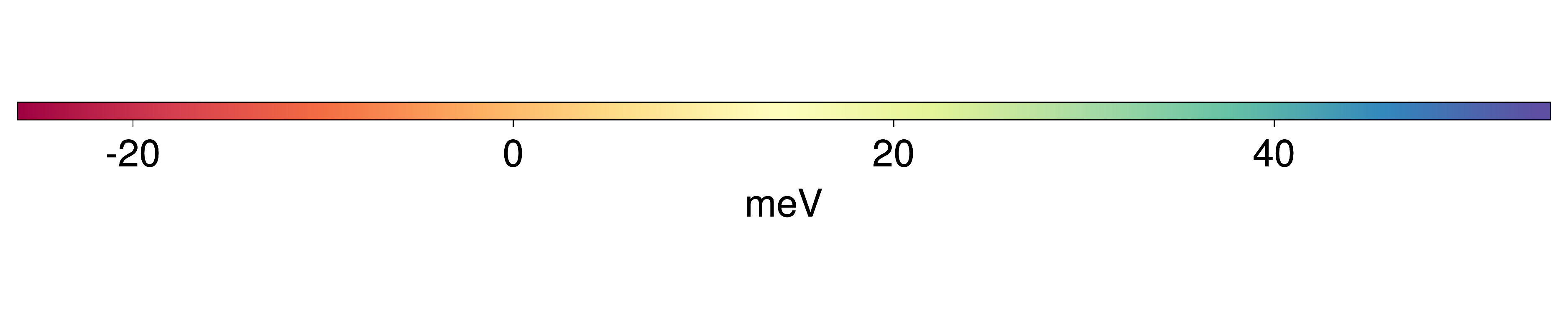}
\caption{The $4$ entries of resp. the real part, the imaginary part, and the modulus of the matrix-valued function $\bbW_d^+(\bX) - \fint_\Omega \bbW_d^+$ for $d = 6.45$ bohr.}\label{fig:W_plus_without_mean}
 \end{center}
  \end{figure}

\section{Conclusion and perspectives}

We have proposed a simple direct space construction of an effective model for TBG at the moiré scale from DFT based on 
\begin{itemize}
    \item an approximation of the TBG Kohn-Sham Hamiltonian following the method introduced in \cite{TSKCLPC16};
    \item a variational approximation of the so-obtained Hamiltonian at the Dirac point $\bK$ around the Fermi level;
    \item an asymptotic expansion valid for small twist angles $\theta$.
\end{itemize}
This effective model has a structure similar to the one of the Bistritzer-MacDonald model, but contains additional terms: a non-trivial overlap operator $\cS_d$, intralayer effective potentials $\bbW_d^\pm$, more complicated interlayer effective potentials $\bbV_d$, and higher-order corrections. We show numerically that both models give rise to similar band diagrams, the main difference being that at the first magic angle, the almost flat bands are separated from the rest of the spectrum in our model, which is not the case in the BM model.

It is well established that atomic relaxation plays a key role in the electronic properties of TBG (see e.g.~\cite{Nam17} and references therein), especially at very small twist angles $\theta < 1^\circ$. We derived our reduced model \eqref{eq:def:reduced_eqt} for an unrelaxed configuration, but it can be extended to take both in-plane and out-of-plane relaxation \cite{CarrMassatt18,Carr19,Cazeaux20} into account. The derivation and numerical simulation of such a model is work in progress.

The same methodology can be applied to study the propagation of low-energy wave-packets localized in momentum space in both the $\bK$ and $\bK'$ valleys. The approximation space then contains functions of the form
\begin{equation*}
      \left( \alpha : \Phi \right)_{d, \theta}(\bx, z) := \sum_{p \in \{ \bK, \bK'\}} \sum_{\eta \in \{ \pm 1\} \atop j \in \{ 1, 2\}} \alpha_{\eta, j, p} \left(\ept \bx \right) (\U^\eta \Phi_{j}^p)(\bx, z).
\end{equation*}
For the unrelaxed configuration, it can be checked that the two valleys are uncoupled. Intervalley coupling may appear however when taking atomic relaxations into account.


\section*{Acknowledgments}
This project has received funding from the European Research Council (ERC) under the European Union's Horizon 2020 research and innovation programme (grant agreement EMC2 No 810367) and from the Simons foundation (Targeted Grant Moir\'e Materials Magic). The authors thank Simon Becker, Antoine Levitt, Mitchell Luskin, Allan MacDonald, Etienne Polack and Alexander Watson for useful discussions and comments. Part of this work was done during the IPAM program {\it Advancing quantum mechanics with mathematics and statistics}.


\section*{Appendices}

\begin{appendix}

\section{Derivation of the effective model}
\label{sec:LOT}

Our goal is to identify the leading orders terms in \eqref{eq:proj_Schrodinger} in the limit of small twist angle. For that, we use the following elementary lemmas. Here $\cS$ is the Schwartz class of smooth function decaying faster than any polynomial.

\begin{lemma}\label{lem:Lemma1}
    Let $\beta \in {\cal S}(\R^2, \C)$ and $u \in L^1_{\rm per}(\Omega \times \R;\C)$. Then, as $\eps \to 0$, we have
    {\small 
    \[
        \int_{\R^3} \beta(\eps \bx) u(\bx, z) \, \rd \bx \, \rd z = \eps^{-2} \left( \dfrac{1}{ | \WS |} \int_{\WS \times \R} u  \right) \left( \int_{\R^2} \beta \right) + O(\eps^\infty).
    \]
}
\end{lemma}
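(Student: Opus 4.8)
The plan is to reduce the statement to a two-dimensional averaging estimate and then apply the Poisson summation formula to the Schwartz function $\beta$. First I would integrate out the transverse variable. Since $u\in L^1_{\rm per}(\Omega\times\R;\C)$, Fubini's theorem ensures that $\widetilde u(\bx):=\int_\R u(\bx,z)\,\rd z$ defines an element of $L^1_{\rm per}(\Omega;\C)$ with $\int_\Omega\widetilde u=\int_{\Omega\times\R}u$, and the left-hand side of the claimed identity equals $\int_{\R^2}\beta(\eps\bx)\,\widetilde u(\bx)\,\rd\bx$. Hence it suffices to prove that $\int_{\R^2}\beta(\eps\bx)\,\widetilde u(\bx)\,\rd\bx=\eps^{-2}|\Omega|^{-1}\big(\int_\Omega\widetilde u\big)\big(\int_{\R^2}\beta\big)+O(\eps^\infty)$ for an arbitrary $\widetilde u\in L^1_{\rm per}(\Omega;\C)$.

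Second, I would tile $\R^2$ by the translates $\bR+\Omega$, $\bR\in\Lat$, and use the $\Lat$-periodicity of $\widetilde u$ to rewrite $\int_{\R^2}\beta(\eps\bx)\,\widetilde u(\bx)\,\rd\bx=\int_\Omega\widetilde u(\bx)\,B_\eps(\bx)\,\rd\bx$, where $B_\eps(\bx):=\sum_{\bR\in\Lat}\beta(\eps\bx+\eps\bR)$. The interchange of summation and $\Omega$-integration is justified by Tonelli's theorem, since $\sum_{\bR\in\Lat}|\beta|(\eps\bx+\eps\bR)$ is bounded uniformly in $\bx$ and in $\eps$ on any bounded interval (because $\beta$ is Schwartz and $\eps\Lat$ has covolume $\eps^2|\Omega|$), and $\widetilde u\in L^1(\Omega)$.

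Third, I would evaluate $B_\eps$ by Poisson summation applied to $\beta$ on the lattice $\eps\Lat$, whose dual lattice (with the $2\pi$ convention used throughout the paper) is $\eps^{-1}\RLat$: this gives $B_\eps(\bx)=(\eps^2|\Omega|)^{-1}\sum_{\bG\in\RLat}\widehat\beta(\bG/\eps)\,\re^{\ri\bG\cdot\bx}$, where $\widehat\beta(\bk)=\int_{\R^2}\beta(\bx)\,\re^{-\ri\bk\cdot\bx}\,\rd\bx$. The term $\bG=\b0$ contributes exactly $(\eps^2|\Omega|)^{-1}\widehat\beta(\b0)=(\eps^2|\Omega|)^{-1}\int_{\R^2}\beta$ and is constant in $\bx$. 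For $\bG\neq\b0$, the rapid decay of $\widehat\beta$ gives, for every $N$, $|\widehat\beta(\bG/\eps)|\le C_N\,\eps^N|\bG|^{-N}$; summing over $\RLat\setminus\{\b0\}$ with $N>2$ shows that $B_\eps(\bx)-(\eps^2|\Omega|)^{-1}\int_{\R^2}\beta=O(\eps^\infty)$ uniformly in $\bx$. Integrating this expansion against $\widetilde u$ produces the claim, the remainder being controlled by $O(\eps^\infty)\,\|\widetilde u\|_{L^1(\Omega)}\le O(\eps^\infty)\,\|u\|_{L^1(\Omega\times\R)}$, and since $\int_\Omega\widetilde u=\int_{\Omega\times\R}u$ this is exactly the asserted identity.

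The only delicate point is the uniform-in-$\bx$ control of the Poisson tail together with the legitimacy of the rearrangement for a merely integrable $u$; both are handled by the Schwartz decay of $\beta$ and $\widehat\beta$. An alternative route is to first verify the identity when $\widetilde u$ is a trigonometric polynomial, where the single-mode computation $\int_{\R^2}\beta(\eps\bx)\,\re^{\ri\bG\cdot\bx}\,\rd\bx=\eps^{-2}\widehat\beta(-\bG/\eps)$ makes it immediate, and then to pass to general $\widetilde u\in L^1_{\rm per}(\Omega)$ by density; this, however, only yields an error $o(\eps^{-2})$, so the Poisson summation argument is what delivers the sharp $O(\eps^\infty)$ remainder.
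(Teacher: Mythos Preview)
Your argument is correct. The paper takes the dual route: instead of tiling $\R^2$ and applying Poisson summation to $\beta$, it Fourier-expands the periodic factor as $u(\bx,z)=\sum_{\bG\in\RLat}u_\bG(z)\,\re^{\ri\bG\cdot\bx}$, changes variables $\bX=\eps\bx$, and recognizes each mode integral as $\eps^{-2}\,\widehat\beta(\eps^{-1}\bG)\int_\R u_\bG$; the rapid decay of $\widehat\beta$ then isolates the $\bG=\b0$ term with an $O(\eps^\infty)$ remainder. Both arguments land on the same sum over $\RLat$ weighted by samples of $\widehat\beta$ at $\eps^{-1}\bG$, so the underlying content is identical. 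Your Poisson-summation version has the modest advantage that all the regularity is carried by the Schwartz function $\beta$, so nothing needs to be said about manipulating the Fourier series of a merely $L^1$ function $u$; the paper's version is a couple of lines shorter but tacitly relies on the decay of $\widehat\beta$ (together with the uniform bound $\big|\int_\R u_\bG\big|\le|\Omega|^{-1}\|u\|_{L^1}$) to justify the exchange of sum and integral.
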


\begin{proof}
    Expanding $u$ as $u(\bx, z) = \sum_{\bG \in \RLat} u_\bG(z) \re^{ \ri \bG \cdot \bx}$ and making the change of variable $\bX = \eps \bx$, we obtain that the left-hand side equals
    \begin{align*}
        & \eps^{-2} \sum_{\bG \in \RLat} \left( \int_{\R} u_\bG(z) \rd z \right) \int_{\R} \beta(\bX) \re^{ \ri \eps^{-1} \bG \cdot \bX} \, \rd \bX \\
        & \ =  \eps^{-2} \sum_{\bG \in \RLat} \left( \int_{\R} u_\bG(z) \, \rd z \right) \widehat{\beta} \left( \eps^{-1}\bG \right).
    \end{align*}
    As $\beta \in {\cal S}(\R^2;\C)$, $\widehat{\beta}$ decays faster than any polynomial. Isolating the term $\bG = \b0$ gives the result.
\end{proof}

We now denote by $L^2_\bK$ the set of locally square integrable functions which are $\bK$--quasiperiodic.
\begin{lemma} \label{lem:key_lemma_with_Angle}
    Let $\beta \in {\cal S}(\R^2, \C)$, $\Phi,\Phi' \in L^2_\bK(\Omega \times \R;\C)$, and $\eta, \eta' \in \{ \pm 1\}$. Then, as $\eps \to 0$, we have
    \begin{align*}
    & \int_{\R^3} \beta(\eps_\theta \bx) \left( \overline{\U^{\eta} \Phi} \cdot \U^{\eta'} \Phi' \right) (\bx, z) \, \rd \bx \, \rd z  \\
    & \quad = \frac{\ept^{-2}}{|\Omega|}  \int_{\R^2} \beta(\bX) \lAngle \Phi, \Phi' \rAngle_d^{\eta \eta'}(\bX) \rd \bX + O(\eps_\theta^\infty),
    \end{align*}
    where $\lAngle \Phi, \Phi' \rAngle_d^{\eta\eta'}(\bX)$ is defined by
    {\small
    \[
     \int_{\Omega \times \R} \overline{\Phi\left(\bx-\eta \tfrac12 J\bX,z-\eta \dd\right)} \, \Phi' \left(\bx-\eta' \tfrac12 J\bX,z-\eta' \dd \right)\, \d\bx \, \d z.
    \]
}
\end{lemma}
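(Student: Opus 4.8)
\medskip\noindent\textit{Proof strategy.} The plan is to follow the proof of Lemma~\ref{lem:Lemma1} almost verbatim, the only new ingredient being that the two layers carry two \emph{different} rotations. Write $\Phi = \re^{\ri\bK\cdot\bx}u$ and $\Phi'=\re^{\ri\bK\cdot\bx}u'$ with $u,u'$ $\Lat$--periodic in $\bx$, expanded into Fourier series $u(\bx,z)=\sum_{\bG\in\RLat}u_\bG(z)\re^{\ri\bG\cdot\bx}$ and similarly for $u'$. Recalling that $(\U^\eta f)(\bx,z)=f(A_\eta\bx,z-\eta\dd)$ with $A_\eta:=c_\theta\bbI_2-\tfrac{\eta}{2}\ept J$ the planar rotation of angle $\eta\theta/2$ (indeed $c_\theta^2+\tfrac14\ept^2=1$), so $A_\eta^{-1}=A_\eta^T=c_\theta\bbI_2+\tfrac{\eta}{2}\ept J$, one gets $(\U^\eta\Phi)(\bx,z)=\re^{\ri(A_\eta^T\bK)\cdot\bx}u(A_\eta\bx,z-\eta\dd)$. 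Hence the $(\bG,\bG')$ term of $\overline{\U^\eta\Phi}\cdot\U^{\eta'}\Phi'$ is $\overline{u_\bG(z-\eta\dd)}\,u'_{\bG'}(z-\eta'\dd)\,\re^{\ri\bp_{\bG\bG'}\cdot\bx}$ with $\bp_{\bG\bG'}:=A_{\eta'}^T(\bG'+\bK)-A_\eta^T(\bG+\bK)$, and the substitution $\bX=\ept\bx$ turns the $\bx$--integral of $\beta(\ept\bx)\re^{\ri\bp_{\bG\bG'}\cdot\bx}$ into $\ept^{-2}\widehat\beta(\ept^{-1}\bp_{\bG\bG'})$. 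Setting $c_{\bG\bG'}:=\int_\R\overline{u_\bG(z-\eta\dd)}\,u'_{\bG'}(z-\eta'\dd)\,\rd z$, this yields
\begin{equation*}
\int_{\R^3}\beta(\ept\bx)\big(\overline{\U^\eta\Phi}\cdot\U^{\eta'}\Phi'\big)\,\rd\bx\,\rd z=\ept^{-2}\sum_{\bG,\bG'\in\RLat}c_{\bG\bG'}\,\widehat\beta\big(\ept^{-1}\bp_{\bG\bG'}\big).
\end{equation*}

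Next I would isolate the diagonal modes $\bG=\bG'$. The key algebraic fact is that $A_{\eta'}^T-A_\eta^T=\tfrac{\eta'-\eta}{2}\ept J$ (the $c_\theta\bbI_2$ parts cancel), so that $\ept^{-1}\bp_{\bG\bG}=\tfrac{\eta'-\eta}{2}J(\bG+\bK)$ is \emph{independent of $\theta$}. Consequently the diagonal part $\ept^{-2}\sum_\bG c_{\bG\bG}\,\widehat\beta(\ept^{-1}\bp_{\bG\bG})$ carries no $\ept$--dependent error at all, and a one--line Fourier--series expansion of $\lAngle\Phi,\Phi'\rAngle_d^{\eta\eta'}(\bX)$ (write $\Phi=\re^{\ri\bK\cdot\bx}u$, integrate the product over $\Omega$, only $\bG'=\bG$ survives) identifies it with $\tfrac{\ept^{-2}}{|\Omega|}\int_{\R^2}\beta(\bX)\,\lAngle\Phi,\Phi'\rAngle_d^{\eta\eta'}(\bX)\,\rd\bX$. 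When $\eta=\eta'$ the two transverse shifts coincide, $\ept^{-1}\bp_{\bG\bG}=0$, only $\widehat\beta(0)=\int\beta$ remains and one recovers Lemma~\ref{lem:Lemma1}; when $\eta\ne\eta'$ the $\theta$--independent phase $\re^{-\ri\frac{\eta-\eta'}{2}J\bK\cdot\bX}$ is exactly the moiré--scale oscillation appearing in the definitions of $\Sigma_d$ and $\bbV_d$.

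Finally I would bound the off--diagonal modes $\bG\ne\bG'$ by $O(\ept^\infty)$. If $\eta=\eta'$ this is immediate and uses only $\Phi,\Phi'\in L^2_\bK$: $A_\eta^T$ is an isometry, so $|\bp_{\bG\bG'}|=|\bG'-\bG|\ge c_0:=\dist(0,\RLat\setminus\{0\})>0$, hence $|\widehat\beta(\ept^{-1}\bp_{\bG\bG'})|\le C_N(\ept^{-1}|\bG'-\bG|)^{-N}$ for all $N$; combined with $\sum_\bG|c_{\bG,\bG+\bH}|\le C$ uniformly in $\bH$ (Cauchy--Schwarz in $\bG$) and the summability of $\sum_{\bH\in\RLat\setminus\{0\}}|\bH|^{-N}$, this gives the bound. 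If $\eta\ne\eta'$ one only has $|\bp_{\bG\bG'}|\ge|\bG-\bG'|-\ept(|\bG+\bK|+|\bG'+\bK|)\ge c_0-\ept(|\bG+\bK|+|\bG'+\bK|)$, so the sum must be split: on $|\bG+\bK|+|\bG'+\bK|\le c_0/(2\ept)$ one still has $|\bp_{\bG\bG'}|\ge c_0/2$ and $\widehat\beta$ produces $O(\ept^\infty)$ (there being only $O(\ept^{-2})$ such pairs), whereas on the complement $\max(|\bG|,|\bG'|)\gtrsim\ept^{-1}$ and the terms are negligible because the Fourier coefficients of $u,u'$ decay rapidly. \emph{This last regime is the main obstacle:} two rotated copies of $\RLat$ differing by the twist get within $O(\ept)$ of one another far from the origin, so a high Fourier mode of one layer can be in near--resonance with a low mode of the other, and pushing this contribution down to $O(\ept^\infty)$ genuinely uses decay of the Fourier coefficients of $\Phi,\Phi'$, not merely square integrability — which is why the lemma is applied only to the smooth Bloch eigenfunctions $\Phi_j$. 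A secondary point to check carefully is that the limiting moiré oscillation comes out \emph{exactly}, which rests on $A_{\eta'}^T-A_\eta^T=\tfrac{\eta'-\eta}{2}\ept J$ being an exact identity, i.e.\ on $\ept=2\sin(\theta/2)$ being the exact prefactor, and not merely on the expansion $A_\eta\approx\bbI_2-\tfrac{\eta}{2}\ept J$.
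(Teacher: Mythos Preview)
Your approach is essentially the paper's: Fourier-expand the periodic parts $u,u'$, isolate the diagonal $\bG=\bG'$ contribution, and push the rest into $O(\ept^\infty)$. The paper treats the two cases separately (for $\eta=\eta'$ it reduces immediately to Lemma~\ref{lem:Lemma1} since $\overline{\Phi}\Phi'$ is $\Lat$-periodic; for $\eta\ne\eta'$ it factors the phase into a fast part $e^{ic_\theta(\bG'-\bG)\cdot\bx}$ and a moir\'e-scale part $e^{\frac{i}{2}\ept(\bG+\bG'+2\bK)\cdot J\bx}$ and then says ``reasoning as above''), whereas your uniform packaging via $\bp_{\bG\bG'}=A_{\eta'}^T(\bG'+\bK)-A_\eta^T(\bG+\bK)$ is the same computation written more compactly. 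Your closing observation---that the off-diagonal bound for $\eta\ne\eta'$ genuinely requires Fourier decay of $u,u'$ beyond $L^2$, because high modes of the two rotated copies of $\RLat$ can nearly resonate---is correct and is precisely what the paper elides with ``reasoning as above''; it is harmless in the application since the lemma is only ever invoked for the smooth Bloch eigenfunctions~$\Phi_j$.
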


In the case $\eta = \eta'$, we have $\lAngle \Phi, \Phi' \rAngle_d^{\eta \eta'}(\bX) =  \langle \Phi, \Phi' \rangle_{L^2(\Omega \times \R)}$, independent of $\bX$. When $\eta \neq \eta'$, the function $\lAngle \Phi, \Phi' \rAngle_d^{\eta \eta'}(\bX)$ does depend on $\bX$ in general and is $J\L$-periodic.

Comparing $\lAngle \cdot, \cdot \rAngle_d$ and $\ppa{\cdot, \cdot}$ in~\eqref{eq:def:ppa}, we see that, with $\Phi = \re^{ \ri \bK \cdot \bx} u$,
\[
    \lAngle \Phi, \Phi' \rAngle_d^{+-}(\bX) = \re^{- \ri J \bK \cdot \bX} \ppa{u, u'}^{+-}(\bX),
\]
and
\[
     \lAngle \Phi, \Phi' \rAngle_d^{++} = \ppa{u, u'}^{++} = \langle \Phi, \Phi' \rangle = \langle u, u' \rangle.
\]
In what follows, we express our quantities with $\lAngle \Phi, \Phi' \rAngle_d$, but we translated our results with $\ppa{u, u'}$ to present our reduced model.

\begin{proof}[Proof of Lemma~\ref{lem:key_lemma_with_Angle}]
    In the case $\eta = \eta'$, the left-hand side equals
    \begin{align*}
     &
    \int_{\R^3} \beta(\eps_\theta \bx) \left( \overline{\Phi}\Phi' \right)\left(R_{\eta\frac\theta 2}\bx,z-\eta\frac d2\right) \, \rd \bx \, \rd z \\
    & \ = \ept^{-2} \int_{\R^3} \beta\left(\bX\right) \left( \overline{\Phi}\Phi' \right)\left(\ept ^{-1}R_{\eta\frac\theta 2}\bX,z\right) \, \rd \bX \, \rd z .
    \end{align*}
    As the function $\overline{\Phi}\Phi'$ is $\L$-periodic, the result can be obtained by applying the same arguments as in the proof of Lemma~\ref{lem:Lemma1}.
    
    Let us now focus on the case when $\eta \neq \eta'$, and prove the result for $\eta' = +1$ and $\eta = -1$, the other case being similar. Let $u(\br) := \re^{ - \ri \bK \cdot \bx} \Phi(\br)$ and $u'(\br) := \re^{ - \ri \bK \cdot \bx} \Phi'(\bx)$ be the periodic components of the Bloch waves $\Phi$ and $\Phi'$ respectively. We have
    \begin{align*}
        & \left( \overline{\U \Phi} \cdot \U^{-1} \Phi'  \right) (\bx, z) \\
        & \ = 
        \re^{ \ri \bK  \ept \cdot J \bx} \overline{u(c_\theta \bx - \tfrac12 \varepsilon_\theta J \bx, z - \dd)}
        u'(c_\theta \bx + \tfrac12 \varepsilon_\theta J \bx, z + \dd).
    \end{align*}
    Introducing the Fourier expansions of $u$ and $u'$, this is also
    \begin{align*}
        & \dfrac{1}{| \Omega |} \sum_{\bG, \bG' \in \RLat}  \left[ \overline{u_{\bG}}(z - \dd)  u'_{\bG'}(z + \dd)  e^{\ri (\bG' - \bG) \cdot c_\theta \bx} \right] \\
        & \quad \times e^{\ri \tfrac12 (\bG+ \bG' + 2 \bK) \ept \cdot J \bx}.
    \end{align*}
    Note that the last phase factor varies at the moiré scale. The term in brackets is a $c_\theta^{-1} \Lat$-periodic function with zero mean unless $\bG = \bG'$. Reasoning as above, we obtain
    \begin{widetext}
    \begin{align*}
        \int_{\R^3} \beta(\eps_\theta \bx) \left( \overline{\U^{\eta} \Phi} \cdot \U^{\eta'} \Phi' \right) (\bx, z) \, \rd \bx \, \rd z
        = 
         \frac{\ept^{-2}}{|\Omega|} \int_{\R^2} \beta(\bX)  \left[ \re^{ \ri \bK \cdot J \bX}  \sum_{\bG \in \RLat} \left( \int_{\R} \overline{u_\bG}(z - \dd) u'_{\bG}(z + \dd) \rd z \right) \re^{ \ri \bG \cdot J \bX} \right] \, \rd \bX + O(\ept^\infty).
    \end{align*}
    \end{widetext}
    Finally, by Parseval theorem, the term in brackets is equal to $\lAngle \Phi, \Phi' \rAngle_d^{\eta \eta'}(\bX)$, which proves the result.
\end{proof}

\subsection{Effective overlap operator}

Let us first focus on the left-hand side of \eqref{eq:proj_Schrodinger}. Setting $\Phi_{\eta j}:=\U^\eta\Phi_j$, we have  (all quantities are summed over $\eta, \eta' \in \{ \pm 1\}$ and $j, j' \in \{ 1, 2\}$)
\begin{align*}
    & \left\langle \left( \widetilde{\alpha} : \Phi \right)_{d, \theta}, \left( \alpha(\ept t) : \Phi \right)_{d, \theta} \right\rangle 
      \\
     &  \ = \int_{\R^3} \left( \overline{\widetilde{\alpha}_{\eta j}} \alpha_{\eta' j'}(t) \right)(\ept \bx)  \times \left( \overline{ \Phi_{\eta' j'}}  \Phi_{\eta j} \right)(\bx, z) \, \rd \bx \, \rd z.
\end{align*}
Using directly Lemma~\ref{lem:key_lemma_with_Angle} (with $\beta = \overline{\widetilde{\alpha}_{\eta j}} \alpha_{\eta' j'}$), we obtain that this term equals
\[
    \ept^{-2}
       \int_{\R^2} \left( \overline{\widetilde{\alpha}_{\eta j}} \alpha_{\eta' j'}(\ept t) \right)(\bX) \lAngle \Phi_{j}, \Phi_{j'} \rAngle^{\eta \eta'}_d(\bX) \rd \bX
\]
up to errors of order $O(\eps^\infty)$. Ranking the components of $\alpha$ as $\alpha = (\alpha_{+, 1}, \alpha_{+, 2}, \alpha_{-, 1}, \alpha_{-, 2})^T$ (the first two entries correspond to the top layer, the last two to the bottom one), we obtain
\[
 \partial_t \left\langle \left( \widetilde{\alpha} : \Phi \right)_{d, \theta}, \left( \alpha(\ept t) : \Phi \right)_{d, \theta} \right\rangle = \frac{\ept^{-1}}{|\Omega|} \left\langle \widetilde{\alpha}, \cS_d \partial_\tau\alpha(\ept t) \right\rangle,
\]
up to errors of order $O(\ept^\infty)$, with
$$
    \cS_d =  \begin{pmatrix}
        \bbI_2 & \Sigma_d(\bX) \\ \Sigma_d^*(\bX) & \bbI_2
    \end{pmatrix}, \  \left[ \Sigma_d\right]_{jj'}(\bX) := \lAngle \Phi_{j}, \Phi_{j'} \rAngle^{+-}_d(\bX).
$$

\subsection{Effective Hamiltonian operator}
We now focus on the terms on the right-hand side. First, we record that 
\begin{align}
    &  (- \tfrac12 \Delta_\br) \left[  \beta (\ept \bx) \varphi(\br) \right] = \nonumber \\
    & \ = \ept^2 (- \tfrac12 \Delta \beta)(\ept \bx) \varphi(\br) + \ept ( - \ri \nabla \beta) (\ept \bx) \cdot (- \ri \nabla_\bx \varphi) (\br) \nonumber \\
    & \quad + \beta (\ept \bx) (- \tfrac12 \Delta \varphi)(\br). \label{eq:split_laplacian}
\end{align}
This gives
\[
  \left\langle \left( \widetilde{\alpha} : \Phi \right)_{d, \theta}, \left( H_{d, \theta}^{(2)} - \mu_F \right) \left( {\alpha} : \Phi \right)_{d, \theta} \right\rangle = g_1 + g_2 + g_3,
\]
with
\[
    g_1  :=  \ept^2 \int_{\R^3} \left[ \overline{\widetilde{\alpha}_{\eta, j}} (-\tfrac12 \Delta \alpha_{\eta', j'}) \right](\ept \bx) ( \overline{\Phi}_{\eta, j} \Phi_{\eta', j'}) (\br) \rd \br,
\]
\[
    g_2 := \ept \int_{\R^3}  \left[ \overline{\widetilde{\alpha}_{\eta, j}} (- \ri \nabla \alpha_{\eta', j'})\right] (\ept \bx) \cdot \left[ \overline{\Phi}_{\eta, j} ( - \ri \nabla_\bx \Phi_{\eta', j'})\right] (\br)  \rd \br,
\]
\[
    g_3 := \int_{\R^3} \left[ \overline{\widetilde{\alpha}_{\eta, j}} \alpha_{\eta', j'}(\ept \bx) \right] \cdot \left[ \overline{\Phi}_{\eta, j} \times (H_{d, \theta}^{(2)} - \mu_F) \Phi_{\eta', j'}) \right] (\br)  \rd \br.
\]
For the first term $g_1$, we apply directly Lemma~\ref{lem:key_lemma_with_Angle}, which gives
\begin{align*}
     g_1 =  \frac{1}{|\Omega|} \int_{\R^2} \left( \overline{\widetilde{\alpha}_{\eta j}} (- \tfrac12 \Delta  \alpha_{\eta' j'}) \right)(\bX) \lAngle \Phi_{j}, \Phi_{j'} \rAngle^{\eta \eta'}_d(\bX) \rd \bX
\end{align*}
up to errors of order $O(\ept^\infty)$. This term can be written as $g_1 = |\Omega|^{-1}  \left\langle \widetilde{\alpha}, G_1 \alpha \right\rangle + O(\ept^\infty)$ with
\[
    G_1 := \cS_d(\bX) ( - \tfrac12 \Delta).
\]
For the second term $g_2$, we first notice that
\begin{align*}
        (- \ri \nabla_\bx \Phi_{\eta', j'})(\br)  & = ( - \ri \nabla_\bx) \left[ \Phi_{j'}(R_{-\theta/2}^{-\eta'} \bx, z - \eta' \dd)   \right] \\
         & = R_{-\theta/2}^{\eta'} \left[\U^{\eta'} \left(  - \ri \nabla_\bx \Phi_{j'}  \right)\right](\br).
\end{align*}
This gives, using arguments similar to the ones of Lemma~\ref{lem:key_lemma_with_Angle}, that $g_2$ equals
{\small
\[
     \frac{ \ept^{-1}}{|\Omega|}  \int_{\R^2} \left( \overline{\widetilde{\alpha}_{\eta j}} (- \ri \nabla \alpha_{\eta' j'}) \right)(\bX) \cdot  R_{-\theta/2}^{\eta'} \lAngle \Phi_{j}, (- \ri \nabla_\bx \Phi_{j'}) \rAngle^{\eta \eta'}_d(\bX) \rd \bX
\]}
up to errors of order $O(\eps^\infty)$.
To deal with the diagonal terms $\eta = \eta'$, we use our orientation \eqref{eq:pre_Dirac}. Regarding the off-diagonal terms (here for $\eta = +1$ and $\eta' = -1$), we have, using that $J^T = -J$,
\begin{widetext}
    \begin{align*}
        \nabla \lAngle \Phi, \Phi' \rAngle^{+-}_d(\bX) 
        & = \tfrac12 J \int_{\Omega \times \R}  \overline{\left( \nabla_\bx \Phi \right) \left(\bx- \tfrac12 J\bX,z- \dd\right)} \, \Phi' \left(\bx+ \tfrac12 J\bX,z + \dd \right)\, \d\bx \, \d z \\
        & \quad - \tfrac12 J \int_{\Omega \times \R}  \overline{\Phi \left(\bx- \tfrac12 J\bX,z- \dd\right)} \,  \left( \nabla_\bx \Phi' \right) \left(\bx+ \tfrac12 J\bX,z + \dd \right)\, \d\bx \, \d z.
    \end{align*}
\end{widetext}

Integrating by part the first term in the RHS, and multiplying by $(-i)$ shows that
\[
    \lAngle \Phi, ( - \ri \nabla_\bx) \Phi' \rAngle^{+-}_d =  J (- \ri \nabla) \lAngle \Phi, \Phi' \rAngle^{+-}_d.
\]
Similarly, in the case $\eta = -1$ and $\eta' = 1$, we have
\[
   \lAngle \Phi, ( - \ri \nabla_\bx) \Phi' \rAngle^{-+}_d = - J (- \ri \nabla) \lAngle \Phi, \Phi' \rAngle^{-+}_d.
\]
This gives $g_2 = |\Omega|^{-1} \langle \widetilde{\alpha}, G_2 \alpha \rangle + O(\ept^\infty)$ with $G_2$ of the form (we use that $R_{-\theta/2} = c_\theta \bbI_2 + \frac12 \ept J$)
\begin{align*}
   & \dfrac{c_\theta}{\ept}\begin{pmatrix}
    v_F \bsigma \cdot (- \ri \nabla) &   J (- \ri \nabla \Sigma_d)(\bX) \cdot (- \ri \nabla)  \\
      J (- \ri \nabla \Sigma_d^*)(\bX) \cdot (- \ri \nabla)  & v_F \bsigma \cdot  (- \ri \nabla)
\end{pmatrix} \\
& \qquad + \frac12 \begin{pmatrix}
    v_F \bsigma \cdot \left[ J (- \ri \nabla) \right] &  (- \ri \nabla \Sigma_d)(\bX)\cdot  (- \ri \nabla) \\
     (- \ri \nabla \Sigma_d^*)(\bX)\cdot (- \ri \nabla)  & -v_F \bsigma \cdot \left[J (- \ri \nabla) \right]
\end{pmatrix}.
\end{align*}
Finally, for the term $g_3$, we recall that $\Phi_j$ is an eigenvector of the single-layer graphene Hamiltonian $H^{(1)}$ associated with the eigenvalue $\mu_{\rm F}$ and get
\[
    \left[ (H_{d, \theta}^{(2)} - \mu_F) \Phi_{\eta, j} \right] = \left( \U^{-\eta} V\right)(\br) \Phi_{\eta, j}(\br) + V_{{\rm int}, d}(z)\Phi_{\eta, j}(\br),
\]
and
\begin{align*} 
    g_3  = & \int_{\R^3} \left[ (\overline{\widetilde{\alpha}_{\eta', j'}} \alpha_{\eta, j})(\ept \bx) \right] \times \\
    & \qquad \left[ \left( \U^{-\eta} V\right)  \overline{\Phi}_{\eta', j'} \Phi_{\eta, j} 
     + V_{{\rm int}, d} \overline{\Phi}_{\eta', j'} \Phi_{\eta, j} \right] (\br)  \rd \br.
\end{align*}
Using reasoning similar to the proof of Lemma~\ref{lem:key_lemma_with_Angle}, we obtain that $g_3 = \langle \widetilde{\alpha}, G_3 \alpha \rangle + O(\ept^\infty)$, with
\[
    G_3 =  \ept^{-2} \begin{pmatrix}
        \bbW_d^+(\bX) & \bbV_d(\bX) \\
        \bbV_d(\bX)^* & \bbW_d^-(\bX)
    \end{pmatrix},
\]
where (recall that the notation~$\ppa{f,g}$ was defined in~\eqref{eq:def:ppa}, and is used when $f$ and $g$ are periodic.)
\begin{widetext}
    \begin{align*}
        \left[ \bbV_d(\bX) \right]_{jj'} 
        &= \lAngle V \Phi_{j}, \Phi_{j'} \rAngle_d^{+-}(\bX)
        + \int_{\Omega \times \R} \overline{\Phi_{j} \left(\bx- \tfrac12 J\bX,z- \dd\right)} \, \Phi_{j'} \left(\bx + \tfrac12 J\bX,z + \dd \right) V_{{\rm int}, d}(z) \, \d \br, \\
        \left[ \bbW_d^\pm(\bX) \right]_{j j'} 
        &= (\!(\Phi_{j} \overline{\Phi_{j'}}, V )\!)_d^{\pm \mp}(\bX) +  \int_{\Omega \times \R} (\overline{\Phi_{j}}\Phi_{j'})(\bx, z \mp \dd) V_{{\rm int}, d}(z) \rd \br. 
    \end{align*}
\end{widetext}
The second term of $\bbW^\pm_d$ is a constant matrix (independent of $\bX$). We prove in the Supplementary Material that this term is proportional to $\bbI_2$. Finally, since $V_{{\rm int}, d}(-z)= V_{{\rm int}, d}(z)$, this matrix is the same for the $\bbW^+_d$ and the $\bbW^-_d$ terms.

\medskip

To conclude and obtain the expression in \eqref{eq:ourH}, we have used the equality
\[
    \Sigma_d ( - \tfrac12 \Delta) + \tfrac12 ( - \ri \nabla \Sigma_d) \cdot ( - \ri \nabla) =  - \frac12 {\rm div} \left( \Sigma_d(\bX) \nabla \bullet \right).
\]

\section{Proof that $w_{\rm AA}^d = w_{\rm AB}^d$}
\label{app:wAA}

Recall that $w_{\rm AA}^d$ and $w_{\rm AB}^d$ are defined resp. in~\eqref{eq:value_wAA}-\eqref{eq:value_wAB}. We prove in the Supplementary Material that $[\bbV_d]_{11}$ satisfies $[\bbV_d]_{11}(R_{\frac{2 \pi}{3}} \bX) = [\bbV_d]_{11}(\bX)$, where $R_{\frac{2 \pi}{3}}$ is the $2 \pi/3$ rotation, see~\eqref{eq:rotationSV1}. Recalling the definition of $G(\bX)$ in~\eqref{eq:defFG} and using that $\bq_3 = R_{\frac{2 \pi}{3}} \bq_1$ while $\bq_2 = (R_{\frac{2 \pi}{3}})^2 \bq_1$, we obtain
\begin{align*}
    w_{\rm AA}^d &= \frac{1}{3|\Omega_{\rm M}|} \left( \sum_{n=1}^3 \int_{\Omega_M}  [\bbV_d]_{11}(\bX) \re^{\ri \bq_n \cdot \bX} \, \rd \bX \right) \\
    &= \frac{1}{|\Omega_{\rm M}|} \left( \int_{\Omega_M}  [\bbV_d]_{11}\left(\bX\right) \re^{\ri \bq_1 \cdot \bX} \, \rd \bX \right).
\end{align*}
From the definition of $\bbV$ and the fact that $\bq_1 = J \bK$, while $| \Omega_M | = |\Omega |$, we get 
\begin{align*}
    w_{\rm AA}^d = & \int_\R \left( \int_\Omega \overline{\left[\pa{V+V_{\text{int},d}(\cdot+ \dd)} u_1\right](\bx,z-\dd)} \, \rd \bx \right) \times \\
    & \qquad  \left( \fint_\Omega u_1(\bx,z+\dd) \, \rd \bx \right) \, \rd z.
\end{align*}
A similar calculation leads to
\begin{align*}
    w_{\rm AB}^d = & \int_\R \left( \int_\Omega \overline{\left[\pa{V+V_{\text{int},d}(\cdot+ \dd)} u_2\right](\bx,z-\dd)} \, \rd \bx \right) \times \\
    & \quad  \left( \fint_\Omega u_1(\bx,z+\dd) \, \rd \bx \right) \, \rd z.
\end{align*}
In the Supplementary Materials, we prove that $\overline{u_j(x_1,x_2,z)}=-u_j(-x_1,x_2,z)$ (see~\eqref{eq:symmetries_uj}). Since $V$ and $V_{\rm int}$ are real-valued, with $V(x_1,x_2,z)=V(-x_1,x_2,z)$, the parameters $w_{\rm AA}^d$ and $w_{\rm AB}^d$ are real valued. In addition, we also proved that $u_2(x_1, - x_2, z) = u_1(x_1, x_2, z)$. Together with the fact that $V(x_1,-x_2,z)=V(x_1,x_2,z)$, we deduce $w_{\rm AA}^d=w_{\rm AB}^d$.

\end{appendix}

\bibliography{moire}


\pagebreak

\widetext
\begin{center}
    \textbf{\large Supplemental Materials of the article ``A simple derivation of moiré-scale continuous models for twisted bilayer graphene'', \\ written by \'Eric Cancès, Louis Garrigue and David Gontier}
\end{center}

\renewcommand{\theequation}{S\arabic{equation}}
\renewcommand{\thefigure}{S\arabic{figure}}
\renewcommand{\bibnumfmt}[1]{[S#1]}
\renewcommand{\citenumfont}[1]{S#1}

\setcounter{equation}{0}
\setcounter{figure}{0}
\setcounter{table}{0}
\setcounter{section}{0}
\setcounter{page}{1}

\section{Symmetries of single layer graphene}
\label{supp:graphene}

In this section, we recall the symmetries of a single graphene-sheet, and explain how to choose the orientation of $\Phi_1$ and $\Phi_2$ in order to satisfy~\eqref{eq:pre_Dirac}. These functions will actually satisfy other symmetries that we will use later when studying the symmetries of TBG. For $\by \in \R^2$, $\theta \in \R$, and $f : \R^3 \to \C$, we set
\begin{align*}
    &(\tau_\by f)(\bx,z) :=  f(\bx-\by,z), & \mbox{(horizontal translation of vector $\by$),} \\
    & (\cR_\theta f)(\bx,z) := f(R_{-\theta}\bx,z), & \mbox{(rotation of angle $\theta$ around the $z$-axis),} \\
    & ({\fR}f)(x_1,x_2,z):=f(x_1,-x_2,-z) & \mbox{(rotation of angle $\pi$ around the $x_1$-axis),} \\
    & (\cP f)(\bx,z)=(\cR_{\pi}f)(\bx,z):=  f(-\bx,z), & \mbox{(in-plane parity operator),}\\
    &(\cC f)(\br):=\overline{f(\br)},& \mbox{(complex conjugation)}, \\
    & (\cS f)(\bx,z):=f(\bx,-z) & \mbox{(mirror symmetry w.r.t. the plane $z=0$).}
\end{align*}

\medskip

We denote by $H_\bk^{(1)}$ the Bloch fibers of $H^{(1)}$, acting on the set of $\bk$-quasiperiodic functions $\phi(\bx - \bR, z) = \re^{- \ri\bk \cdot \bR} \phi(\bx, z)$ for $\bR \in \Lat$. We denote the set of such quasiperiodic wave functions by $L^2_\bk$, with usual inner product (recall that $\Omega \subset \R^2$ is the Wigner-Seitz cell of the lattice $\L$, an hexagon with a carbon atom at each of its six vertices)
$$
\langle \phi,\psi\rangle_{L^2_\bk} := \int_{\Omega \times \R} \overline{\phi(\bx,z)} \, \psi(\bx,z) \, \d\bx \, \d z.
$$

The single layer potential $V$ is $\Lat$-periodic, and satisfies  $\cR_{\frac{2 \pi}{3}} V = \fR V = \cP V = \cC V = \cS V  = V$. This implies
\[
\cC {H_\bk} \cC^* = {H_{-\bk}},  \quad
\cP {H_\bk} \cP^* = {H_{-\bk}}, \quad 
\fR H_\bk \fR = H_{M \bk}, \quad \text{and} \quad
\cR_{\frac{2\pi}3} {H_\bk} \cR_{\frac{2\pi}3}^* = {H_{R_{\frac{2\pi}3}\bk}},
\]
where the operators $\cP, \fR$ and $\cR_{\frac{2\pi}3}$ (resp. $\cC$) are here considered as unitary (resp. anti-unitary) operators between two different fibers of the Bloch bundle. The matrix $M := \begin{pmatrix}
    1 & 0 \\ 0 & -1
\end{pmatrix}$ represents the in-plane mirror symmetry with respect to $x_1$-axis. At the Dirac point $\bK := \frac13 (\ba_1^* + \ba_2^*) $, we have $R_{\frac{2\pi}3} \bK = \bK \ {\rm mod} \ \RLat$ and $M \bK = \bK$, so the operator $H_\bK$ commutes with $(\cC \cP)$, $\fR$, $\cS$ and $\cR_{\frac{2\pi}3}$. 

\medskip

Since $\cR_{\frac{2\pi}3}$ satisfies $\cR_{\frac{2\pi}3}^3 = \bbI_3$, the eigenvalues of this operator are $1$, $\omega := \re^{ \ri 2 \pi/3}$ and $\omega^2 = \overline{\omega}$. Let us decompose $L^2_\bK$ accordingly as
\[
L^2_\bK = L^2_{\bK,1} \oplus L^2_{\bK,\omega} \oplus L^2_{\bK,\overline{\omega}}.
\]
Let $\mu$ be an eigenvalue of multiplicity $1$ of $H^{(1)}_{\bK, \omega}$, the restriction of $H^{(1)}_{\bK}$ to $L^2_{\bK, \omega}$, and let $\widetilde{\Phi_1} \in L^2_{\bK, \omega}$ be a corresponding normalized eigenvector. Since
\[
    \fR \cR_\theta = \cR_{-\theta} \fR, \quad \text{and} \quad (\cC \cP) \cR_\theta = \cR_\theta (\cC \cP),
\]
the function $\fR \cC \cP \widetilde{\Phi_1}$ also belongs to $L^2_{\bK, \omega}$, and is an eigenvector of $H^{(1)}_{\bK, \omega}$ for the same eigenvalue $\mu$. As a consequence, it is collinear to $\widetilde{\Phi_1}$, so there is $\alpha \in \R$ for which
\[
    \fR  \cC \cP  \widetilde{\Phi_1} = \re^{ \ri \alpha} \widetilde{\Phi_1}.
\]
We set
\begin{equation}\label{eq:phases_Phi12}
    \Phi_1 = \ri^\sigma \re^{ \ri \alpha/2} \widetilde{\Phi_1} \in L^2_{\bK, \omega}, \quad  \text{and} \quad 
    \Phi_2= (\cC \cP) \Phi_1  \in L^2_{\bK, \overline{\omega}},
\end{equation}
where $\sigma \in \{ 0, 1\}$ will be chosen later. 
Since $\Phi_1 \in L^2_{\bK, \omega}$ while $\Phi_2 \in  L^2_{\bK, \overline{\omega}}$, the functions $\Phi_1$ and $\Phi_2$ are orthogonal, and $\mu$ is an eigenvalue of $H^{(1)}_\bK$ of multiplicity at least $2$. This happens in particular at the Fermi-level $\mu = \mu_F$ of uncharged graphene, which is a two-fold degenerate eigenvalue of $H_\bK$. In what follows, $\Phi_1$ and $\Phi_2$ are normalized eigenfunctions of $H_\bK$ (for the eigenvalue $\mu_{\rm F}$), $\cR_{\frac{2\pi}3}$ (for the eigenvalues $\omega$ and $\overline{\omega}$ respectively), and $\fR  \cC \cP$ (for the eigenvalue $(-1)^\sigma$) constructed as above. 
The valence orbitals of graphene belong to a $\pi$--shell, hence we also have
\begin{align} \label{eq:sym_z_Phi12}
    \Phi_1(\bx,-z) = -\Phi_1(\bx,z), \quad \text{and} \quad \Phi_2(\bx,-z) = -\Phi_2(\bx,z).
\end{align}
We now define the Fermi velocity. For $j,j' \in \{ 1, 2\}$, we introduce the vector
\[
    \bmm_{jj'} := \langle \Phi_j, (- \ri\nabla_\bx) \Phi_{j'} \rangle.
\]
Since  $( - \ri \nabla_\bx\Phi_1)(R_{\frac{2 \pi}{3}}^*\bx,z) = R_{\frac{2 \pi}{3}}^* [( - \ri \nabla_\bx) (\cR_{\frac{2\pi}{3}}\Phi_1)](\bx,z)$ and $\cR_{\frac{2\pi}{3}} \Phi_1=\omega\Phi_1$, we have
\begin{align} 
    \bmm_{11} & = \int_{\Omega \times \R} \overline{\Phi_1(\bx,z)} (- \ri \nabla_\bx \Phi_1)(\bx,z) \, \d\bx \, \d z 
        = \int_{\Omega \times \R} \overline{\Phi_1(R_{\frac{2 \pi}{3}}^*\bx,z)} (- \ri \nabla_\bx \Phi_1)(R_{\frac{2 \pi}{3}}^*\bx,z) \, \d\bx \, \d z  \nonumber  \\
            &= 
     \langle \cR_{\frac{2\pi}{3}} \Phi_1,  R_{\frac{2 \pi}{3}}^* ( - \ri \nabla_\bx) (\cR_{\frac{2\pi}{3}} \Phi_1) \rangle     = R_{\frac{2 \pi}{3}}^* \langle \omega \Phi_1,  - \ri \nabla_\bx (\omega   \Phi_1) \rangle 
    = R_{\frac{2 \pi}{3}}^*  \bmm_{11}. \label{eq:eqt_bmm11}
\end{align}
This implies that $\bmm_{11} = (0, 0)^T$. Likewise, we obtain $\bmm_{22} = (0, 0)^T$. A similar calculation using $\cR_{\frac{2\pi}{3}} \Phi_2=\overline{\omega}\Phi_2$ leads to $\bmm_{12} = \omega R_{\frac{2 \pi}{3}}^* \bmm_{12}$, which shows that $\bmm_{12}$ belongs to the eigenspace of $R_{\frac{2 \pi}{3}}$ associated with the eigenvalue $\omega$. We deduce that it is collinear to $(1,-\ri)^T$: there exists $v_F \in \C$ so that
\begin{align*}
    \langle \Phi_1, (- \ri\nabla_\bx) \Phi_1 \rangle = \langle \Phi_2, (- \ri\nabla_\bx) \Phi_2 \rangle = \begin{pmatrix}
        0 \\ 0
    \end{pmatrix}, \quad \text{and} \quad 
    \langle \Phi_1,  (-\ri \nabla_\bx) \Phi_2 \rangle 
    = v_{\rm F}  \begin{pmatrix} 1 \\ -\ri \end{pmatrix}.
 \end{align*}

Using that $\fR ( - \ri \partial_{x_1}) = ( - \ri \partial_{x_1}) \fR$ and $\fR \Phi_1 = (-1)^\sigma \Phi_2$, we have
\[
   v_F = \langle \Phi_1,  (-\ri \partial_{x_1}) \Phi_2 \rangle = \langle \fR \Phi_1,  \fR (-\ri \partial_{x_1}) \Phi_2 \rangle 
    = (-1)^{2 \sigma} \langle \Phi_2,  (-\ri \partial_{x_1}) \Phi_1 \rangle = \overline{v_F},
\]
hence $v_F \in \R$. Changing $\sigma$ into $(1 - \sigma)$ changes $v_F$ into $-v_F$, and we therefore choose~$\sigma$ so that $v_F \ge 0$. Numerical simulations give $\sigma=1$. The quantity $v_{\rm F}$, which turns out to be strictly positive for graphene, is the single-layer graphene Fermi velocity. It follows from perturbation theory that it is equal to the slope of the Dirac cone at $\bK$. Note that with $\sigma = 1$, we have $\fR \Phi_1 = - \Phi_2$, hence also $\Phi_1(x_1,-x_2, z) = \Phi_2(x_1, x_2, z)$.

\medskip

We finally denote by $u_1$ and $u_2$ the unique functions in $L^2_\per$ such that 
\begin{equation} \label{eq:Phijuj}
    \Phi_1(\bx, z) = \re^{\ri \bK \cdot \bx} u_1(\bx, z), \quad \text{and} \quad
    \Phi_2(\bx, z)= \re^{\ri \bK \cdot \bx} u_2(\bx, z).
\end{equation}
It results from the above arguments that the functions $u_1$ and $u_2$ satisfy the normalization conditions 
$$
\langle u_j,u_{j'} \rangle = \delta_{jj'}, 
$$
and the symmetry properties
\begin{align} \label{eq:symmetries_uj}
& (\cR_{\f{2\pi}{3}} u_j)(\bx,z) = \omega^j \re^{\ri (1-R_{\f{2\pi}{3}}) \bK \cdot \bx} u_j(\bx,z), \\
&  u_j(\bx,-z) = -u_j(\bx,z), \quad u_2(\bx,z)=\overline{u_1(-\bx,z)}, \quad u_2(x_1,x_2,z)= u_1(x_1,-x_2,z) . \label{eq:symmetries_uj2}
\end{align}

\begin{remark}
    Consider a tight-binding model in which each $2p_z$ orbital is represented by a function of the form $\chi(\bx, z) := \zeta(\bx) h(z)$ with $\zeta$ radial decreasing, $h(-z) = - h(z)$, and $\| \zeta \|_{L^2(\mathbb R^2)} = \| h \|_{L^2(\mathbb R)} = 1$. If only nearest-neighbor orbitals overlap, one can take $\Phi_1$ and $\Phi_2$ to be, for $\bx$ in the Wigner-Seitz cell $\Omega$, (see also Fig.~\ref{fig:Phi12})
    \begin{equation} \label{eq:Phi12_schema}
       \forall \br \in \Omega \times \R, \quad  \begin{cases}
            \Phi_1(\br) & = \sum_{\bR \in \Lat} \chi(\br - (\bR + \bR_B)) \re^{\ri \bK \cdot (\bR + \bR_B)} 
            = \chi(\br + \bdelta_1) + \omega^2 \chi(\br + \bdelta_2) + \omega \chi(\br + \bdelta_3) \\
            \Phi_2(\br) & = \sum_{\bR \in \Lat} \chi(\br - (\bR + \bR_A)) \re^{\ri \bK \cdot (\bR + \bR_A)} 
            = \chi(\br - \bdelta_1) + \omega \chi(\br - \bdelta_2) + \omega^2 \chi(\br - \bdelta_3),
        \end{cases}
    \end{equation}
    with $\bR_A = \frac13 \ba_1 + \frac23 \ba_2$ and $\bR_B = \frac23 \ba_1 + \frac13 \ba_2$. One can check that this choice satisfies the above symmetries. In addition, one finds $\langle \Phi_1, (- \ri \nabla_\bx) \Phi_2 \rangle = v_F (1, - \ri)^T$, with the Fermi velocity
    \[
        v_F := - \frac32 g'(a), \quad \text{with} \quad
        G(\bdelta) := (\zeta * \zeta)(\bdelta)=g(|\bdelta|), \qquad \nabla G(\bdelta) = g'(|\bdelta|) \dfrac{\bdelta}{| \bdelta |}.
    \]
    Since $\zeta$ is radial decreasing, so is $G$, hence $g'(a) < 0$ (recall that $a$ is the C-C distance), and the Fermi velocity is positive.
\end{remark}

\begin{figure}[ht]
    \centering
    
    \includegraphics[width=0.25\textwidth]{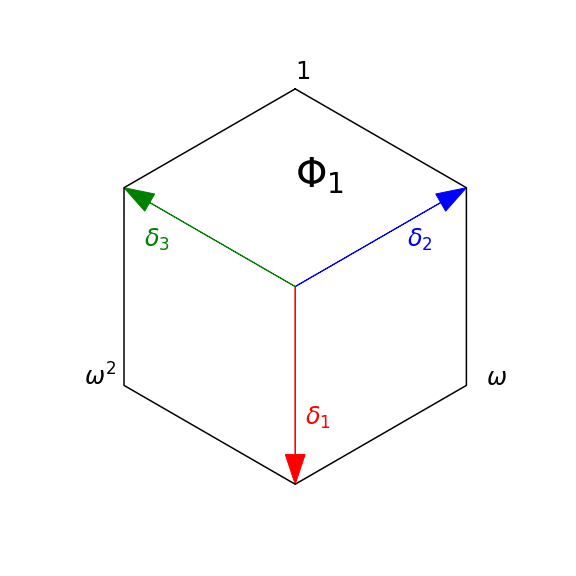}
    \includegraphics[width=0.25\textwidth]{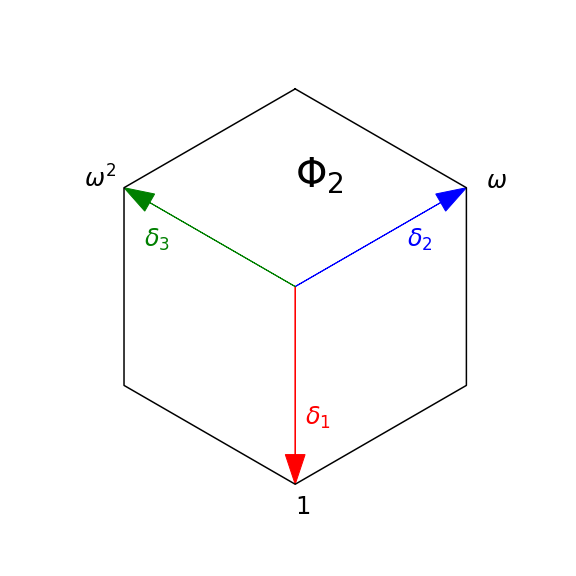} 
    \caption{Representation of the functions $\Phi_1$ and $\Phi_2$ in~\eqref{eq:Phi12_schema}.}
    \label{fig:Phi12}
\end{figure}

\section{Symmetries of our reduced model for TBG}
\label{app:symmetries}

\subsubsection{Symmetries induced by the atomic configuration} 
We now record the symmetries of the operator induced by the specific atomic configuration of the TBG. These symmetries are valid for all $d > 0$ and $\theta > 0$. First, we note that
\[
\left[ \cR_{\tfrac{2\pi}{3}},  \U \right] = \left[ \cP,  \U \right] = \left[ \cC,  \U \right] = 0
\]
while $ \fR \U = \U^* \fR$. We deduce that
\[
\left[ H_{d, \theta}^{(2)}, \cR_{\tfrac{2\pi}{3}} \right] =
\left[ H_{d, \theta}^{(2)}, \cP \right] =
\left[ H_{d, \theta}^{(2)}, \cC \right] = 
\left[ H_{d, \theta}^{(2)}, \fR \right] = 0.
\]
We now study how these commutation relations translate in our reduced model. Let us first focus on the $3$-fold rotation symmetry. Using~\eqref{eq:phases_Phi12}, we obtain that, for $\alpha : \R^2 \to \C^4$, we have
\[
\cR_{\tfrac{2\pi}{3}} \left( \alpha : \Phi \right)_{d, \theta} = \left(( \cR_{\tfrac{2\pi}{3}} \alpha) : ( \cR_{\tfrac{2\pi}{3}} \Phi) \right)_{d, \theta} = \left( (\widetilde{\cR_{\tfrac{2\pi}{3}}} \alpha ) : \Phi \right)_{d, \theta},
\]
where we introduced the unitary operator $\widetilde{\cR_{\tfrac{2\pi}{3}}} : L^2(\R^2;\C^4) \to L^2(\R^2; \C^4)$ defined by
\[
\widetilde{\cR_{\tfrac{2\pi}{3}}} \begin{pmatrix}
    \alpha_{+,1} \\ \alpha_{+,2} \\ \alpha_{-,1} \\ \alpha_{-,2} 
\end{pmatrix}
= \cR_{\tfrac{2\pi}{3}} \begin{pmatrix}
    \omega  \alpha_{+,1} \\ \omega^2 \alpha_{+,2} \\ \omega \alpha_{-,1} \\ \omega^2 \alpha_{-,2} 
\end{pmatrix}
= \begin{pmatrix}
    \omega & 0 & 0 & 0 \\
    0 & \omega^2 & 0 & 0 \\
    0 & 0 & \omega & 0 \\
    0 & 0 & 0 & \omega^2
\end{pmatrix} \cR_{\tfrac{2\pi}{3}} \begin{pmatrix}
    \alpha_{+,1} \\ \alpha_{+,2} \\  \alpha_{-,1} \\  \alpha_{-,2} 
\end{pmatrix}.
\]
We deduce that
\[
\left\langle \left( \widetilde{\alpha} : \Phi \right)_{d, \theta}, \left(H_{d, \theta}^{(2)} - \mu_F \right) \left(\alpha : \Phi\right)_{d, \theta} \right\rangle 
=
\left\langle \left( ( \widetilde{\cR_{\tfrac{2\pi}{3}}} \widetilde{\alpha}) : \Phi \right)_{d, \theta}, \left(H_{d, \theta}^{(2)} - \mu_F \right) \left( (\widetilde{\cR_{\tfrac{2\pi}{3}}} \alpha) : \Phi\right)_{d, \theta} \right\rangle ,
\]
and a similar formula for $\left\langle \left( \widetilde{\alpha} : \Phi \right)_{d, \theta}, \left(\alpha : \Phi\right)_{d, \theta} \right\rangle $. Hence
\[
\left[ \cS_d,  \widetilde{\cR_{\tfrac{2\pi}{3}}}  \right]= 0, \quad \text{and} \quad  \left[ \cH_{d, \theta},  \widetilde{\cR_{\tfrac{2\pi}{3}}}  \right] = 0.
\]
These commutation relations give rises to symmetry properties on the entries of $\cS_d$ and~$\cH_{d,\theta}$. For $\bbV_d$, we have (the results are similar for $\Sigma_d$ and $\bbW_d$)

\begin{align}
    &  [\bbV_d]_{11}(R_{\frac{2 \pi}{3}}^* \bX)  =  [\bbV_d]_{11}(\bX),  &  & [\bbV_d]_{12}(R_{\frac{2 \pi}{3}}^* \bX)  = \omega  [\bbV_d]_{12}(\bX),  \label{eq:rotationSV1} \\
    & [\bbV_d]_{21}(R_{\frac{2 \pi}{3}}^* \bX)  = \overline{\omega}  [\bbV_d]_{21}(\bX),  & & [\bbV_d]_{22}(R_{\frac{2 \pi}{3}}^* \bX)  = [\bbV_d]_{22}(\bX)  \label{eq:rotationSV2}.
\end{align}
\medskip

We now consider the $\cC\cP$ symmetry. Using the definition of $\Phi_2$ in~\eqref{eq:phases_Phi12} we deduce that for $\alpha \in L^2(\R^2; \C^4)$, 
\[
(\cC\cP) \left( \alpha : \Phi \right)_{d, \theta} = \left(( \cC\cP \alpha) : ( \cC\cP \Phi) \right)_{d, \theta} = \left( (\widetilde{\cC\cP} \alpha ) : \Phi \right)_{d, \theta},
\]
where $\widetilde{\cC\cP}$ is the anti-unitary of $L^2(\R^2;\C^4)$ defined by
\[
\widetilde{\cC\cP} \begin{pmatrix}
    \alpha_{+,1} \\ \alpha_{+,2} \\ \alpha_{-,1} \\ \alpha_{-,2} 
\end{pmatrix}
= (\cC\cP) \begin{pmatrix}
    \alpha_{+,2} \\ \alpha_{+,1} \\ \alpha_{-,2} \\ \alpha_{-,1} 
\end{pmatrix} = 
\begin{pmatrix}
    0  & (\cC \cP) & 0 & 0 \\
    (\cC \cP) & 0 & 0 & 0 \\
    0 & 0 & 0 & (\cC \cP) \\
    0 & 0 & (\cC \cP) & 0
\end{pmatrix}
\begin{pmatrix}
    \alpha_{+,1} \\ \alpha_{+,2} \\ \alpha_{-,1} \\ \alpha_{-,2} 
\end{pmatrix}.
\]
We deduce as previously that $\left[ \cS_d,  \widetilde{\cC\cP}  \right]= 0$ and $\left[ \cH_{d, \theta},  \widetilde{ \cC\cP}  \right] = 0$, from which we infer, for $\bbV_d$ (the results are similar for $\Sigma_d$ and $\bbW_d^\pm$)
\begin{align}
    & \overline{[\bbV_d]_{11}(-\bX)}  =  [\bbV_d]_{22}(\bX) \quad \mbox{and} \quad \overline{[\bbV_d]_{12}(-\bX)}  = [\bbV_d]_{21}(\bX).  \label{eq:CPSV} 
\end{align}

Finally, we consider the mirror symmetry, coming from the $\fR$ operator. Using~the relation $\fR\Phi_1=- \Phi_2$, we get
\[
(\fR) \left( \alpha : \Phi \right)_{d, \theta} = \left(( \fR \alpha) : ( \fR \Phi) \right)_{d, \theta} = \left( (\widetilde{\fR} \alpha ) : \Phi \right)_{d, \theta},
\]
with, setting $(\cM f)(\bx) = f(M \bx)$ with $M = \begin{pmatrix}
    1 & 0 \\ 0 & - 1
\end{pmatrix}$,
\[
\widetilde{\fR} \begin{pmatrix}
    \alpha_{+,1} \\ \alpha_{+,2} \\ \alpha_{-,1} \\ \alpha_{-,2} 
\end{pmatrix}
= - \cM \begin{pmatrix}
    \alpha_{-, 2} \\ \alpha_{-, 1} \\ \alpha_{+,2} \\ \alpha_{+, 1}
\end{pmatrix}
= - \begin{pmatrix}
    0  &0 & 0 & \cM \\
    0 & 0 & \cM & 0 \\
    0 & \cM & 0 & 0 \\
    \cM & 0 & 0 & 0
\end{pmatrix}
\begin{pmatrix}
    \alpha_{+,1} \\ \alpha_{+,2} \\ \alpha_{-,1} \\ \alpha_{-,2} 
\end{pmatrix}.
\]
Note that this time, since $\fR \U = \U^* \fR$, the mirror symmetry exchanges the top and bottom layers. From the relations $\left[ \cS_d,  \widetilde{\fR}  \right]= 0$ and $ \left[ \cH_{d, \theta},  \widetilde{ \fR}  \right] = 0$, we infer
\begin{align*}
    & [\bbV_d]_{11}(M\bX)  =  \overline{[\bbV_d]_{22}(\bX)},  \quad
    [\bbV_d]_{12}( M \bX)  = \overline{[\bbV_d]_{12}(\bX)},  \\
    & [\bbV_d]_{21}(M \bX)  = \overline{[\bbV_d]_{21}(\bX)}, \quad
    [\bbV_d]_{22}(M \bX)  = \overline{[\bbV_d]_{22}(\bX)},
\end{align*}
and similar relations for $\Sigma_d$,  as well as
\begin{align*}
    & [\bbW^+]_{11}(M \bX) = [\bbW^-]_{22}(\bX), \quad [\bbW^+]_{12}(M \bX) = [\bbW^-]_{21}(\bX), \\
    & [\bbW^+]_{21}(M \bX) = [\bbW^-]_{12}(\bX), \quad [\bbW^+]_{22}(M \bX) = [\bbW^-]_{11}(\bX).
\end{align*}

\subsubsection{Extra symmetries at the limit}
In addition to the previously identified symmetries, extra symmetries appear due to the averaging process over all possible stacking. Indeed, we claim that
\begin{equation}
     \cC \ppa{f,g}^{\eta \eta'} = \ppa{g, f}^{\eta' \eta},  \qquad
     \cC \cP \ppa{f,g}^{+-} = \ppa{\cS g, \cS f}^{+-}, \qquad 
      \ppa{f,g}^{+-} = \ppa{\cC \cP g, \cC \cP f}^{+-}.
      \label{eq:extra_symm}
\end{equation}
The first equality comes directly from the definition~\eqref{eq:def:ppa}. To prove the second equality, we write
\begin{align*}
    \overline{\ppa{ f, g }^{+-} (-\bX)} & 
    = \int_{\Omega \times \R} f(\bx + \tfrac12 J \bX, z - \dd) \overline{g(\bx - \tfrac12J \bX, z + \dd)} \, \rd \bx \, \rd z \\
    & = \int_{\Omega \times \R}  \overline{g(\bx - \tfrac12 J \bX, -z + \dd)} f(\bx + \tfrac12 J \bX, -z - \dd) \, \rd \bx \, \rd z  \\
    & = \int_{\Omega \times \R}  \overline{\cS g(\bx - \tfrac12 J \bX, z - \dd)} \cS f(\bx + \tfrac12 J \bX, z + \dd) \, \rd \bx \, \rd z = \ppa{\cS g, \cS f}^{+-}(\bX).
\end{align*}
The third equality follows from similar arguments.

\medskip

From the first equality of~\eqref{eq:extra_symm}, we deduce that
\[
\left[ \bbW_d^+ \right]_{11}(\bX) = \left[ \bbW_d^- \right]_{22}(\bX), \quad \text{and} \quad 
\left[ \bbW_d^+ \right]_{22}(\bX) = \left[ \bbW_d^- \right]_{11}(\bX),
\]
and all these quantities are real-valued. Next, from the second equality of~\eqref{eq:extra_symm}, we obtain
\[
\Sigma_d(-\bX)^* = \Sigma_d(\bX), \quad 
\bbV_d(-\bX)^* = \bbV_d(\bX), \quad \text{and} \quad
\bbW_d^+(-\bX)^* = \bbW_d^-(\bX).
\]
Finally, from the third equality, we get
\[
\left[ \Sigma_d \right]_{11}(\bX) = \left[ \Sigma_d \right]_{22}(\bX), \quad 
\left[ \bbV_d \right]_{11}(\bX) = \left[ \bbV_d \right]_{22}(\bX), \quad 
\left[ \bbW_d^+ \right]_{12}(\bX)= \left[ \bbW_d^- \right]_{21}(\bX).
\]


\section{TBG Hamiltonians with nonlocal pseudopotentials}%
\label{sec:NLP}

In the presence of norm-conserving pseudopotentials, the TBG approximate Kohn-Sham Hamiltonian contains an additional term originating from the nonlocal contribution to the pseudopotential, that is
$$
H_{\theta,d}^{(2)}:= - \frac 12 \Delta + V_{d,\theta}^{(2)} + \U V_{\rm nl} \U^{-1}+ \U^{-1} V_{\rm nl} \U.
$$
where $V\ind{nl}$ is the $\L$-periodic nonlocal component of the pseudopotential of single-layer graphene.
For the Goedeker-Teter-Hutter pseudopotential implemented in the DFTK software used in our numerical simulations, $V_{\rm nl}$ is of the form
\begin{align*}
    V_{\rm nl} = v_0 \sum_{\bR \in \fC} | \tau_{\bR} \varphi \rangle \langle \tau_{\bR} \varphi |, \qquad \tau_{\bR} \varphi := \varphi(\cdot - \bR),
\end{align*}
where $v_0 > 0$, and where $\fC := \left( \L + \bb_1 \right) \cup \left( \L + \bb_2 \right)$ denotes the locations of the Carbon atoms in graphene. Here, $\bb_1 := \tfrac{2}{3} \ba_1 + \tfrac13 \ba_2$, $\bb_2 = \tfrac{1}{3} \ba_1 + \tfrac23 \ba_2$ are the positions of the atom in the Wigner-Seitz cell, and $\varphi$ is a radial normalized Gaussian function with variance similar to the characteristic radius of the carbon 1s orbital.

The functions $\Phi_j$ satisfy
$$
\left( - \frac 12 \Delta + V + V_{\rm nl} - \mu_{\rm F} \right) \Phi_j = 0.
$$
Since $\varphi$ is even while $\Phi_j$ is odd with respect to the horizontal symmetry plane, we have $\langle \varphi, \Phi_j \rangle = 0$, hence $V_{\rm nl} \Phi_j = 0$, and $\Phi_j$ also satisfies
\[
\left( - \frac 12 \Delta + V - \mu_{\rm F} \right) \Phi_j = 0.
\]

\medskip

We need to compute $H^{(2)}_{\theta, d} (\alpha : \Phi)_{d, \theta}$. For the Laplacian part, we can still use~\eqref{eq:split_laplacian}, and the local part of the potential can still act directly on the Bloch modes. We now focus on the non-local part of the potential. We have
\begin{align}
    &\bigg\langle (\widetilde \alpha : \Phi)_{d,\theta} \bigg|\left( \U V_{\rm nl} \U^{-1}+ \U^{-1} V_{\rm nl} \U \right) \bigg|
    (\alpha : \Phi)_{d,\theta} \bigg\rangle \label{eq:nonlocal_term} \\
    &\quad = \sum_{\eta,j} \sum_{\eta',j'} \sum_{\eta''} \langle \widetilde \alpha_{\eta,j}(\ept \cdot) \U^\eta \Phi_j | \U^{\eta''} V_{\rm nl} \U^{-\eta''} | \alpha_{\eta',j'}(\ept \cdot) \U^{\eta'} \Phi_{j'} \rangle \nonumber \\
    &\quad = v_0 \sum_{\eta,j} \sum_{\eta',j'} \sum_{\eta''} \sum_{\bR} \langle \U^{-\eta''} ( \widetilde \alpha_{\eta,j}(\ept \cdot) \U^\eta \Phi_j) |  \tau_{\bR}\varphi \rangle \langle \tau_{\bR}\varphi | \U^{-\eta''} ( \alpha_{\eta',j'}(\ept \cdot) \U^{\eta'} \Phi_{j'}) \rangle. \nonumber
\end{align}
Again, since $\varphi$ is even and $\Phi_{j}$ is odd with respect to the horizontal symmetry plane, we have that for all $\bR,\eta,j$,
$$
\langle \tau_{\bR}\varphi | \U^{-\eta} ( \beta(\ept \cdot) \U^{\eta} \Phi_{j}) \rangle = 0.
$$
Thus only the terms $\eta'' = - \eta' = - \eta$ contribute to the sum, which reduces to
\begin{align*}
    v_0 \sum_{\eta} \sum_{j,j'} \sum_{\bR} \langle \U^{\eta} ( \widetilde \alpha_{\eta,j}(\ept \cdot) \U^\eta \Phi_j) |  \tau_{\bR}\varphi \rangle \langle \tau_{\bR}\varphi | \U^{\eta} ( \alpha_{\eta,j'}(\ept \cdot) \U^{\eta} \Phi_{j'}) \rangle.
\end{align*}


We now compute the above inner products. We have
\begin{align*}
    \langle \tau_{\bR}\varphi | \U^{\eta} ( \beta(\ept \cdot) \U^{\eta} \Phi_{j}) \rangle
     & = \int_{\R^3} \beta\left(\ept R_{-\eta \frac\theta 2}^* \bx\right) \, \overline{\varphi(\bx - \bR,z)} \; \Phi_{j}\left(R_{-\eta \theta}^*\bx,z-\eta d\right) \, \rd\bx \, \rd z \\
    &  = \int_{\R^3} \beta\left(\ept R_{-\eta\frac\theta 2}^* (\by+\bR) \right) \; \overline{\varphi(\by,z)} \; \Phi_{j}\left(R_{-\eta \theta}^*(\by+\bR),z-\eta d\right) \, \rd \by \, \rd z \\
    &  = \int_{\R^3} \beta\left(\ept R_{-\eta\frac\theta 2}^* \bR + \ept R_{\eta \frac{\theta}{2}} \by \right) \; \overline{\varphi(\by,z)} \; \Phi_{j}\left(R_{-\eta \theta}^*\bR + \by,z-\eta d\right) \, \rd \by \rd z,
\end{align*}
where we used that $\varphi$ is radially symmetric in the last line. Since $\varphi$ is localized near $\by = \bold{0}$, we perform a Taylor expansion of $\beta$ in $\by$. This one takes the form
\[
\beta\left(\ept R_{-\eta\frac\theta 2}^* \bR + \ept R_{-\eta \frac{\theta}{2}} \by \right) = 
\beta\left(\ept R_{-\eta\frac\theta 2}^* \bR \right) + \ept \nabla \beta\left(\ept R_{-\eta\frac\theta 2}^* \bR \right) \cdot R_{-\eta \frac{\theta}{2}} \by + \ldots
\]
We obtain a series in $\ept$, whose first terms are given by
\[
\langle \tau_{\bR}\varphi | \U^{\eta} ( \beta(\ept \cdot) \U^{\eta} \Phi_{j}) \rangle = 
\beta\left(\ept R_{-\eta\frac\theta 2}^* \bR \right) F_{0,j}(R_{\eta \theta}^*\bR) + \ept R_{-\eta \frac{\theta}{2}}^* \nabla \beta\left(\ept R_{-\eta\frac\theta 2}^* \bR \right) \cdot G_{1,j}(R_{-\eta \theta}^*\bR) + \ldots
\]
where
\begin{align*}
    F_{0, j}(\bX) & := \int_{\R^3} \overline{\varphi(\by,z)} \; \Phi_{j}\left( \bX + \by,z-\eta d\right) \, \rd \by \, \rd z, \\
    G_{1,j}(\bX) & := \int_{\R^3} \by \overline{\varphi(\by,z)} \; \Phi_{j}\left(\bX + \by,z-\eta d\right) \, \rd \by \, \rd z. 
\end{align*}
We can go on with the Taylor expansion, but we stop here, as the leading order is already enough for numerical purpose (we show below that even the leading order term can be neglected). The leading order term of~\eqref{eq:nonlocal_term} is therefore
\begin{align*}
    & v_0 \sum_{\eta, j, j'} \sum_{\bR \in \fC} (\overline{ \widetilde{\alpha_{\eta, j}} } \alpha_{\eta, j'})\left(\ept R_{-\eta\frac\theta 2}^* \bR \right) \left[ \overline{F_{0, j}} F_{0, j'} \right] (R_{\eta \theta}^*\bR).
\end{align*}
Performing a Taylor expansion in $\ept$ of the $F$-functions using the equality $R_{\eta \theta}^* = c_\theta - 2 \ept \eta J = 1 - 2 \ept \eta J + O(\ept^2)$, we obtain, for $s \in \{ 1, 2\}$ and $\bR \in \Lat$, 
\begin{align*}
    F_{0, j}(R_{-\eta \theta}^* (\bR + \bb_s)) &
    \approx F_{0, j} \left( (\bR + \bb_s) - 2 \ept \eta J (\bR + \bb_s) \right) 
    \\
    & = \int_{\R^3} \overline{\varphi(\by,z)} \; \Phi_{j}\left( \bR + \bb_s - 2 \ept \eta J (\bR + \bb_s) + \by ,z-\eta d\right) \rd \by \rd z \\
    & = \re^{ \ri \bK \cdot \bR} \widetilde{F}_{0, s, j, \eta}(\ept(\bR + \bb_s)),
\end{align*}
with
\[
\widetilde{F}_{0, s, j, \eta}(\bX) := \int_{\R^3} \overline{\varphi(\by,z)} \; \Phi_{j}\left(\bb_s -2 \eta J \bX + \by ,z-\eta d\right) \rd \by \rd z.
\]
A similar expansion for the $\alpha$-functions gives, to leading order,
\[
v_0 \sum_{\eta, j, j'} \sum_{s \in \{ 1, 2\}} \sum_{\bR \in \Lat} (\overline{ \widetilde{\alpha_{\eta, j}} } \alpha_{\eta, j'})\left(\ept (\bR + \bb_s) \right) \left[ \overline{\widetilde{F}_{0, s, j}} \widetilde{F}_{0,s, j'} \right] (\ept(\bR + \bb_s)).
\]
We recognize a Riemann sum. Therefore, at leading order, this term also equals (the superscript $-1$ refers to the fact that only the leading order term was taken into account)
\[
\frac{1}{\ept^2} \sum_{\eta, j, j'} \int_{\R^2} (\overline{ \widetilde{\alpha_{\eta, j}} } \alpha_{\eta, j'})(\bX) [ \bbW_{d}^{\rm nl,-1}]_{jj'}(\bX)\rd \bX, 
\quad \text{with} \quad [ \bbW_{d}^{\rm nl,-1}]_{jj'} := v_0 \sum_{s \in \{1, 2\} } \overline{\widetilde{F}_{0, s, j}} \widetilde{F}_{0,s, j'}.
\]
This term modifies the effective Hamiltonian as follows:
\begin{equation}\label{def:Hred_nl}
    \cH_{d,\theta}^{\rm nl,-1} = \ept^{-1} \left(\cV_d+\cV_d^{\rm nl,-1} \right) + c_\theta T_{d} + \ept T_{d}^{(1)},
\end{equation}
where
\begin{align}
    \cV_{d}^{\rm nl,-1} &:= \left( \begin{array}{cc} \bbW_{d}^{\rm nl, -1} & 0 \\ 0 & \bbW_{d}^{\rm nl,-1} \end{array} \right).
\end{align}
The remainder in the three-term expansion \eqref{eq:asymptotic_VF} valid for local (multiplicative) potentials is of order $\ept^\infty$. In contrast, the non-local component of the pseudopotential gives rise to an infinite series of terms of orders $\ept^k$ for all $k \ge -1$. All these terms, including the leading order one are very small ($\sim 6 \times 10^{-1}$ meV, see Fig.~\ref{fig:W_Vnl}).

\begin{figure}[H]
    \begin{center}

        \includegraphics[width=0.3\textwidth]{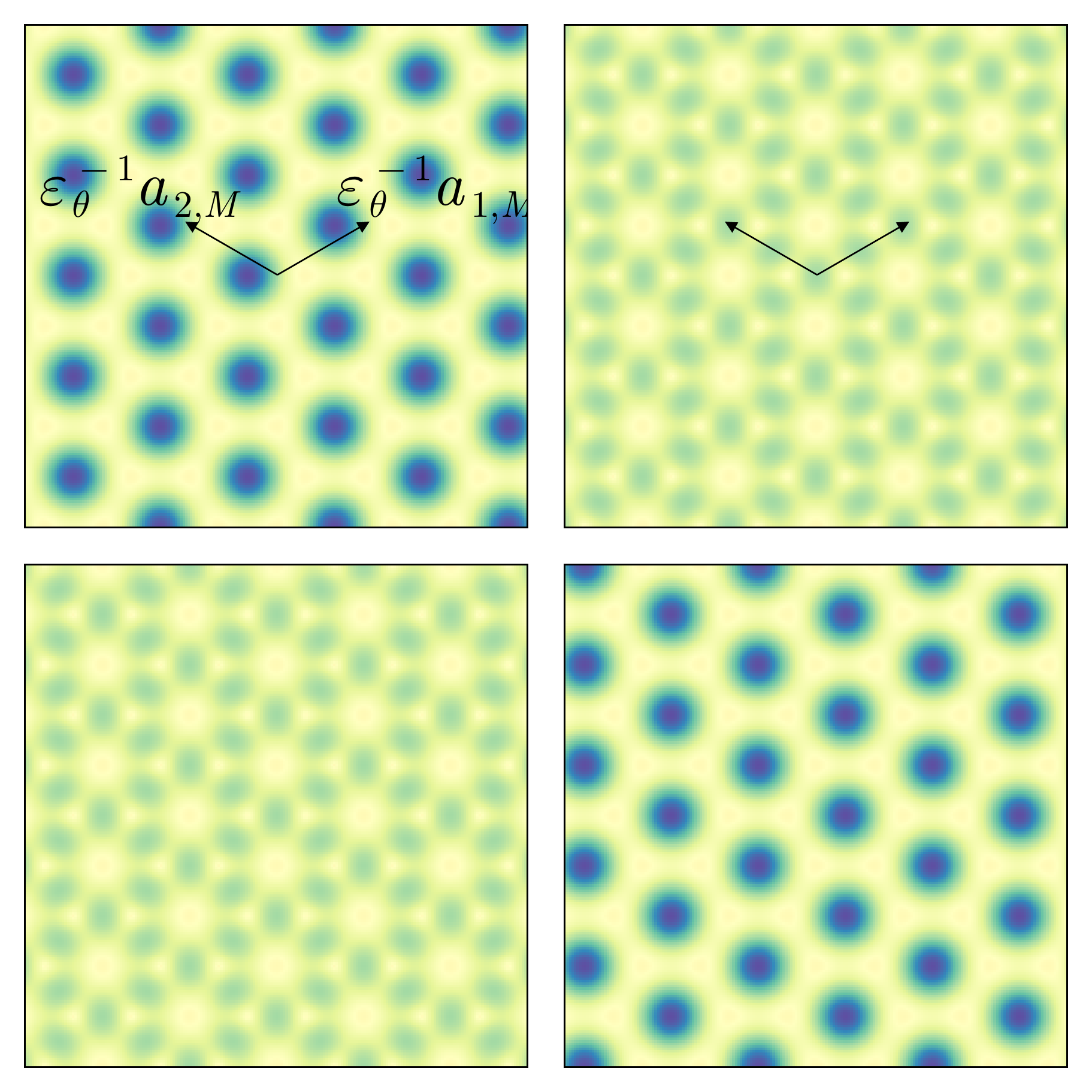}
        \hspace{1cm}\includegraphics[width=0.1\textwidth]{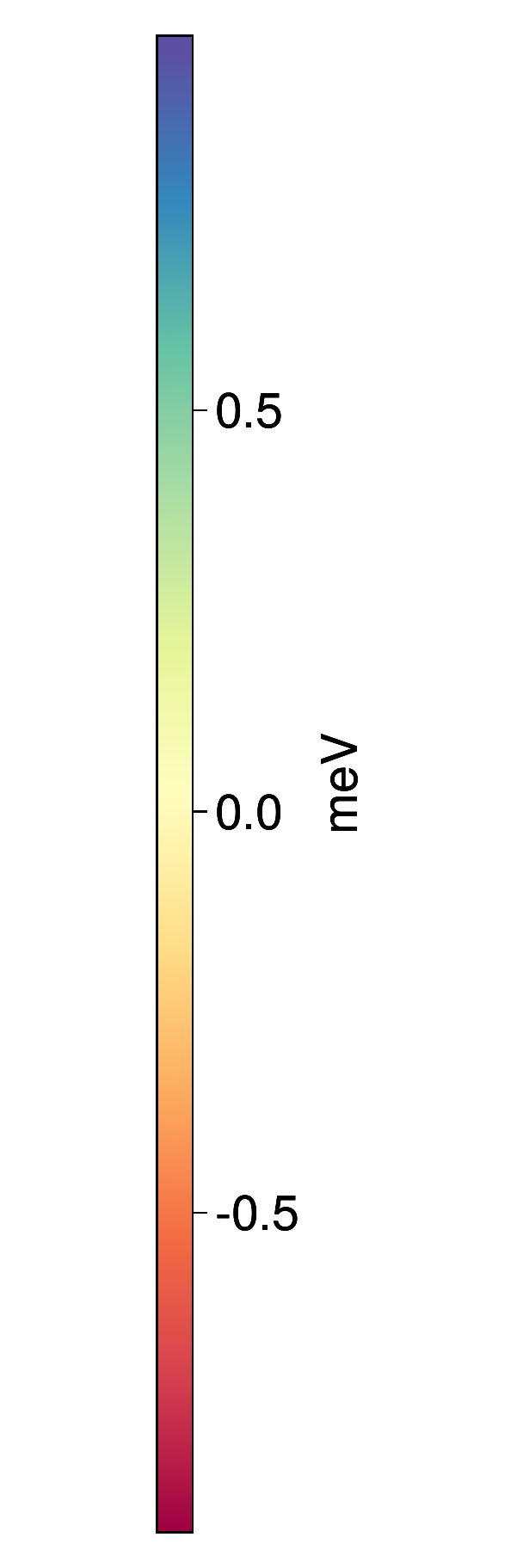}
        \caption{Moduli of the $4$ entries of the matrix-valued function $\bbW^{{\rm nl},-1}_d(\bX)$. }
        \label{fig:W_Vnl}
    \end{center}
\end{figure}


\section{Variation of the reduced quantities as a function of $d$}
Let us finally study the variations of the fields $\Sigma_d(\bX)$, $\bbV_d(\bX)$ and $\bbW_d^\pm(\bX)$ as functions of $d$. Let us expand these three fields of interest as
\begin{align*}
    \Sigma_d(\bX) &=  [u_1 | u_1]_{d,0,0} \bm S^{(0)}(\bX) + r_{\Sigma,d}(\bX), \\
    \bbV_d(\bX) &= w_{\rm AA} \bm S^{(0)}(\bX) + [V_d u_1 | u_1]_{d,-1,-1} \bm S^{(1)}(\bX) + r_{\bbV,d}(\bX), \\
    \bbW_d^+(\bX) &=  W_{\text{int},d}^+ \bbI_2 +   \bm{W}^+_d(\bX)  + r_{{\bbW},d}(\bX), 
\end{align*}
where $[u_1 | u_1]_{d,0,0}$, $w_{\rm AA} = [V_d u_1 | u_1]_{d,0,0}$, $[V_d u_1 | u_1]_{d,-1,-1}$, and $W_{\text{int},d}^+$ are real numbers, and where 
\begin{align*}
    \bm S^{(0)}(\bX) &:= \mat{G(\bX) & \overline{F(-\bX)} \\ F(\bX) & G(\bX)}, \qquad \bm S^{(1)}(\bX) := \mat{G^{(1)}(\bX) & \overline{F^{(1)}(-\bX)} \\ F^{(1)}(\bX) & G^{(1)}(\bX)}, \\
    \bm{W}^+_d(\bX) &:=  \mat{\ab{[\ab{u_1}^2 | V_d]_{d,0,1}} D(\bX) & \ab{[\overline{u_1}u_2 | V_d]_{d,0,1}} \overline{Z(-\bX)} \\ \ab{[\overline{u_1}u_2 | V_d]_{d,0,1}} Z(\bX) & \ab{[\ab{u_1}^2 | V_d]_{d,0,1}} D(\bX)},
\end{align*}
with 
\begin{align*}
    G^{(1)}(\bX) &:= e^{-i \bq_1 \cdot \bX} \pa{e^{i \pa{\ba_1^* - \ba_2^*} \cdot \bX} + e^{i \pa{\ba_2^* - \ba_1^*} \cdot \bX} + e^{-i \pa{\ba_1^* + \ba_2^*} \cdot \bX}} \\
    &= e^{i2 \bq_1 \cdot \bX} + e^{i2 \bq_2 \cdot \bX} + e^{i2 \bq_3 \cdot \bX}, \\
    F^{(1)}(\bX) &:= e^{-i \bq_1 \cdot \bX} \pa{ \omega e^{i \pa{\ba_1^* - \ba_2^*} \cdot \bX} + \omega^2 e^{i \pa{\ba_2^* - \ba_1^*} \cdot \bX} + e^{-i \pa{\ba_1^* + \ba_2^*} \cdot \bX}}  \\ 
    &= e^{i2 \bq_1 \cdot \bX} + \omega e^{i2 \bq_2 \cdot \bX} + \omega^2 e^{i2 \bq_3 \cdot \bX}, \\
    D(\bX) &:=  \omega \pa{e^{-i\ba_2^* \cdot \bX} + e^{i\ba_1^* \cdot \bX} + e^{i\pa{-\ba_1^* + \ba_2^*} \cdot \bX}} + \omega^2 \pa{e^{-i\ba_1^* \cdot \bX} + e^{i\ba_2^* \cdot \bX} + e^{i\pa{\ba_1^* - \ba_2^*} \cdot \bX} }  \\
    & =  \omega \pa{e^{i\pa{\bq_1-\bq_3}\cdot \bX}  + e^{i\pa{\bq_2-\bq_1}\cdot \bX}+ e^{i\pa{\bq_3-\bq_2}\cdot \bX}} + \text{c.c.}, \\
    Z(\bX) &:=  \omega \pa{e^{i\ba_2^* \cdot \bX} + e^{-i\ba_2^* \cdot \bX}} + \omega^2 \pa{e^{i\ba_1^* \cdot \bX} + e^{-i\ba_1^* \cdot \bX}}  + e^{i\pa{\ba_1^* - \ba_2^*} \cdot \bX} + e^{i\pa{-\ba_1^* + \ba_2^*} \cdot \bX} \\
    &=  2\pa{\cos\pa{\pa{\bq_3 - \bq_2}\cdot \bX} + \omega \cos\pa{\pa{\bq_3 - \bq_1}\cdot \bX} + \omega^2 \cos\pa{\pa{\bq_2 - \bq_1}\cdot \bX}}.
\end{align*}
The above four functions are expanded on the subdominant Fourier modes and fulfill the expected symmetries.

We infer from the data in Figure \ref{fig:dep_d} that for all values of $d$ around the experimental average interlayer distance $d \simeq 6.45$ bohr, we have
\begin{align*}
    \nor{r_{\Sigma,d}}{L^2} &\ll \nor{[u_1 | u_1]_{d,0,0} \bm S^{(0)}}{L^2} \simeq \nor{\Sigma_d}{L^2}, \\
    \nor{r_{\bbV,d}}{L^2} &\ll \nor{[V_d u_1 | u_1]_{d,-1,-1} \bm S^{(1)}}{L^2} \ll \nor{w_{AA} \bm S^{(0)}}{L^2}  \simeq \nor{\bbV_d}{L^2}, \\
    \nor{r_{\bbW,d}}{L^2} &\ll \nor{\bm{W}^+_d}{L^2}  \ll \nor{W_{\text{int},d}^+ \bbI_2}{L^2} \simeq  \nor{\bbW_d^+}{L^2},
\end{align*}
where $\nor{\bm S^{(0)}}{L^2} = \nor{\bm S^{(1)}}{L^2} \simeq 0.8$. We also see that the main corrections with respect to the Bistritzer-MacDonald model come from the term $\bf{W}^+_d$ and from the term involving $\nabla\Sigma_d$.
\begin{figure}[H]
    \begin{center}
        \includegraphics[width=0.35\textwidth]{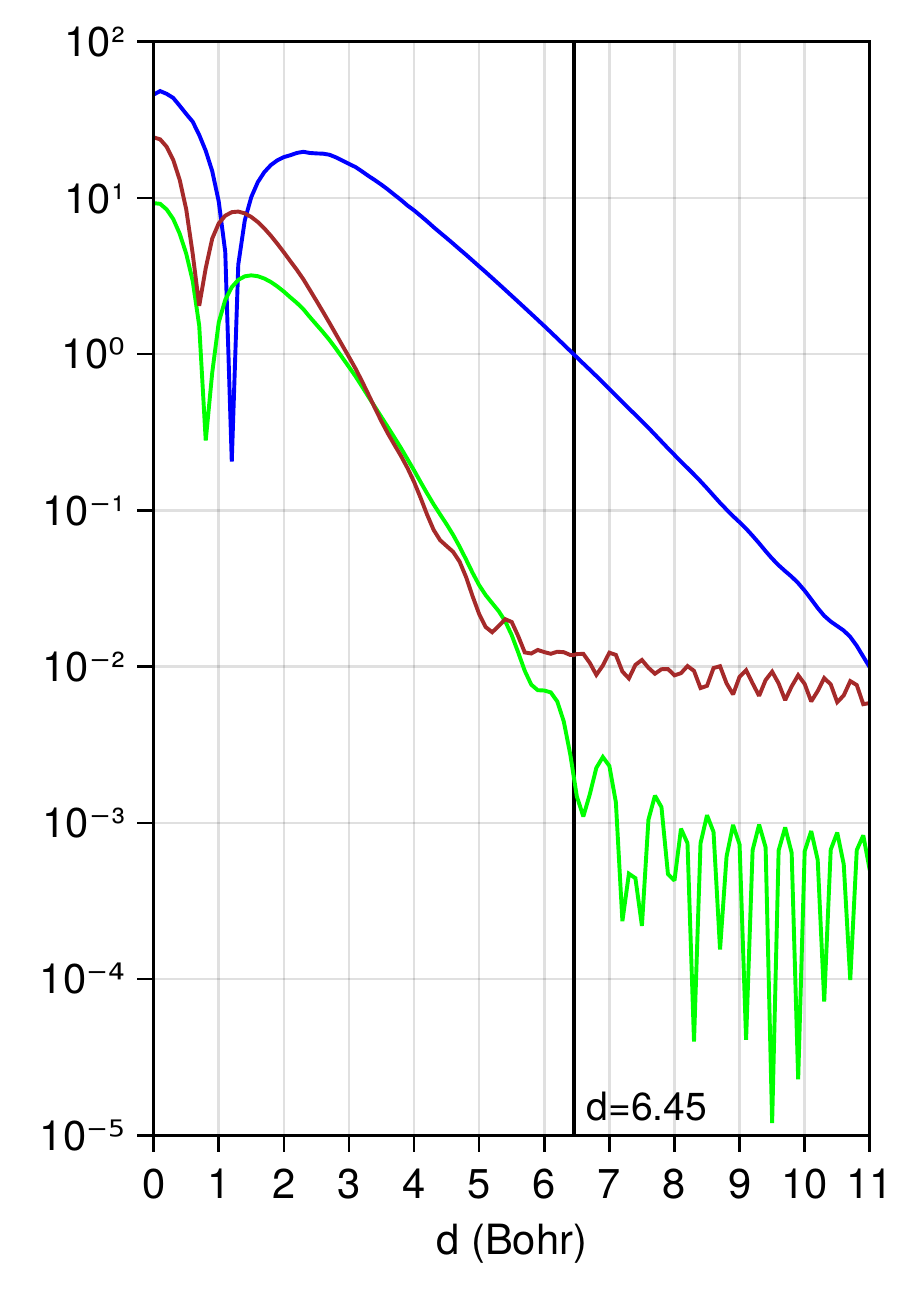}
        \includegraphics[width=0.35\textwidth]{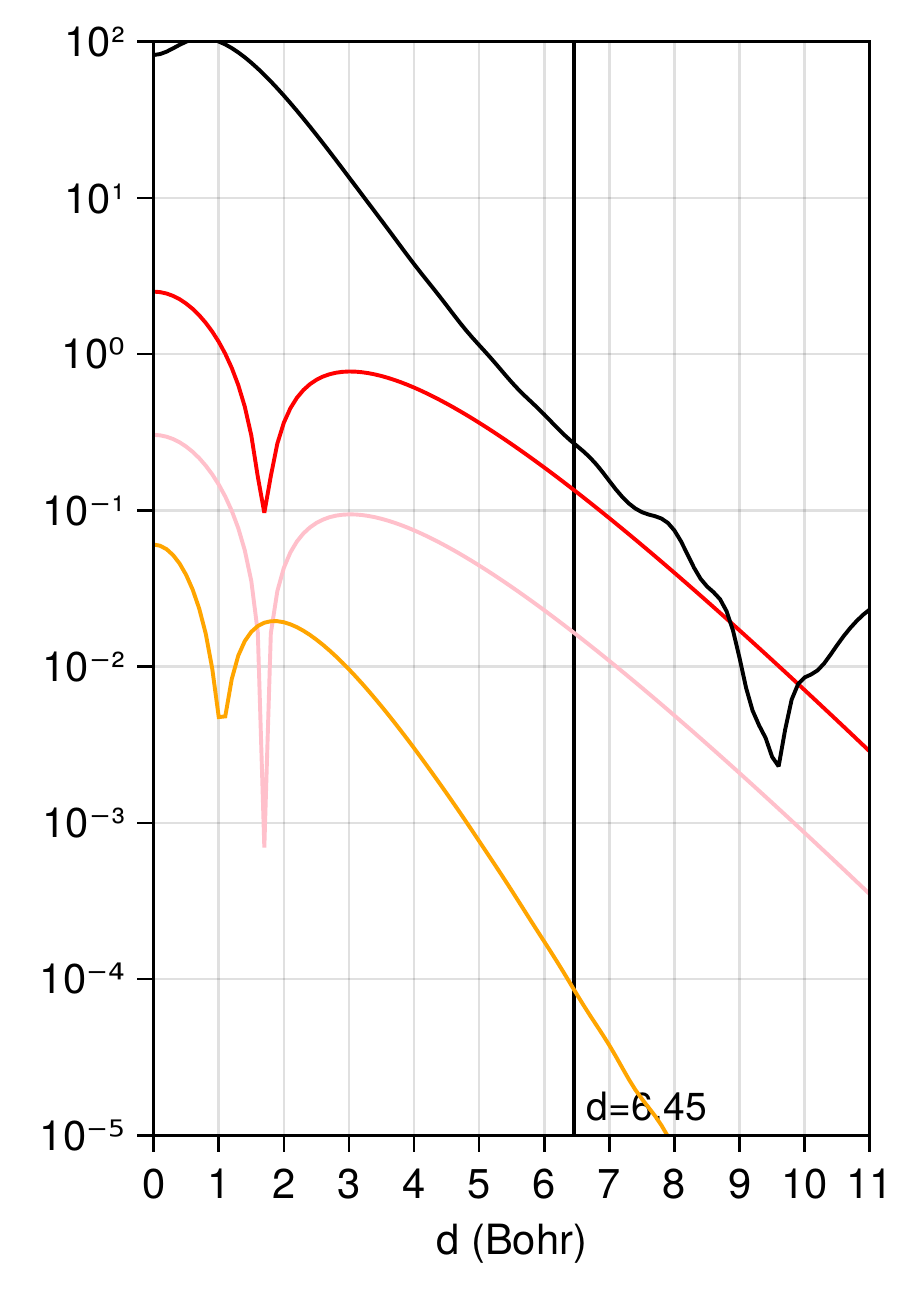}
        \includegraphics[width=0.2\textwidth]{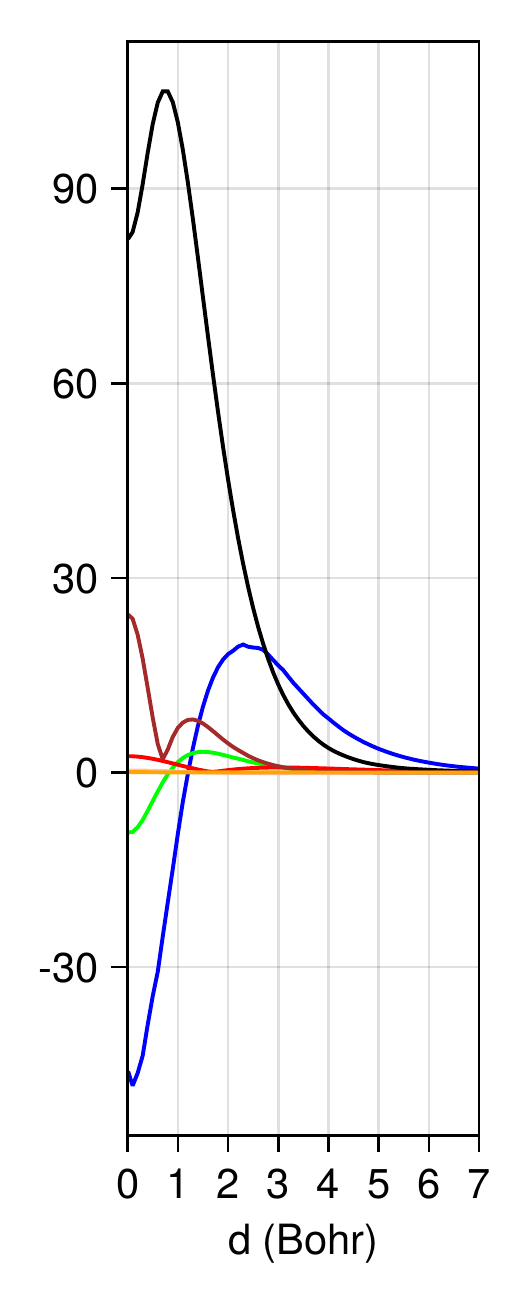}
        \caption{Variations of the reduced model parameters with respect to the interlayer distance $d$. Left: semi-log plots of the functions $d\mapsto w\ind{AA}^d/\wAAd$ (blue), $d\mapsto [V_d u_1 | u_1]_{d,-1,-1}/\wAAd$ (green), $d \mapsto \nor{r_{\bbV,d}}{L^2}/\wAAd$ (brown). The oscillations of the brown and green lines for $d > 5$ are due to numerical noise. Center: semi-log plots in log scale of  $d \mapsto \nor{\bm W_d^+}{L^2}/\wAAd$ (black), $d \mapsto \nor{\na\Sigma_d}{L^2}/v\ind{F}$ (red), $d \mapsto \nor{\Sigma_d}{L^2}$ (pink), and $d \mapsto \nor{r_{\Sigma,d}}{L^2}$ (orange). Right: plots of the seven functions.}\label{fig:dep_d}
    \end{center}
\end{figure}

\end{document}